\newcommand{\startpara}[1]{{%
\vskip5pt\noindent
{\bf #1.}}}
\newtheorem{defi}{\textbf{Definition}}
\newtheorem{thom}{\textbf{Theorem}}
\newtheorem{rek}{\textbf{Remark}}
\newtheorem{lema}{\textbf{Lemma}}
\newtheorem{cor}{\textbf{Corollary}}
\newcommand{\defiref}[1]{Definition \ref{#1}}
\newcommand{\thomref}[1]{Theorem~\ref{#1}}
\newcommand{\rekref}[1]{Remark~\ref{#1}}
\newcommand{\algoref}[1]{Algorithm \ref{#1}}
\newcommand{\lemaref}[1]{Lemma \ref{#1}}
\newcommand{\tabref}[1]{Table \ref{#1}}
\newcommand{\pnum}{N_p}
\newcommand{\enum}{N_e}
\newcommand{\pteam}{\mathscr{P}}
\newcommand{\eteam}{\mathscr{E}}
\newcommand{\goal}{\Omega_{\rm goal}}
\newcommand{\play}{\Omega_{\rm play}}
\newcommand{\gameregion}{\Omega}
\newcommand{\obstacleset}{\mathcal{O}}
\newcommand{\obstaclevertices}{V_{\textup{obs}}}
\newcommand{\obstacleverticesobs}{V_{\textup{obs}}^{\textup{gv}}}
\newcommand{\goalvertices}{V_{\textup{goal}}}
\newcommand{\ESP}{P_{\textup{ESP}}}
\newcommand{\ESPdist}{d_{\textup{ESP}}}
\newcommand{\wavefront}{W_{\textup{ESP}}}
\newcommand{\trianglepoints}{\mathcal{R}_{\textup{tri}}}
\newcommand{\sectorpoints}{\mathcal{R}_{\textup{sec}}}
\newcommand{\onsiteregion}{\mathcal{R}^{\textup{onsite}}}
\newcommand{\freespace}{\Omega_{\rm free}}
\newcommand{\ESPregion}{\mathcal{R}_{\textup{ESP}}}
\newcommand{\goalindex}{I_{\textup{goal}}}
\newcommand{\sector}{\mathcal{R}_{\textup{sec}}}
\begin{document}

\title{Pursuit Winning Strategies for Reach-Avoid Games with Polygonal Obstacles}

\author{Rui Yan, Shuai Mi,  Xiaoming Duan, Jintao Chen, and Xiangyang Ji
\thanks{The work of X. Duan was sponsored by Shanghai Pujiang Program under grant 22PJ1404900.}
\thanks{R. Yan is with the Department of Computer Science, University of Oxford, Oxford, OX1 3QD, UK. {\tt\small \{rui.yan@cs.ox.ac.uk\}}}
\thanks{S. Mi, J. Chen and X. Ji are with the Department of Automation, Tsinghua University, Beijing, 100084, China. {\tt\small \{mis15@tsinghua.org.cn, cjt16@tsinghua.org.cn, xyji@tsinghua.edu.cn\} }}
\thanks{X. Duan is with the Department of Automation, Shanghai Jiao Tong University, Shanghai, 200240, China. {\tt\small\{xduan@sjtu.edu.cn\} }}
}



\maketitle

\begin{abstract}
This paper studies a multiplayer reach-avoid differential game in the presence of general polygonal obstacles that block the players' motions. The pursuers cooperate to protect a convex region from the evaders who try to reach the region. We propose a multiplayer onsite and close-to-goal (MOCG) pursuit strategy that can tell and achieve an increasing lower bound on the number of guaranteed defeated evaders. This pursuit strategy fuses the subgame outcomes for multiple pursuers against one evader with hierarchical optimal task allocation in the receding-horizon manner. To determine the qualitative subgame outcomes that who is the game winner, we construct three pursuit winning regions and strategies under which the pursuers guarantee to win against the evader, regardless of the unknown evader strategy. First, we utilize the expanded Apollonius circles and propose the \emph{onsite pursuit winning} that achieves the capture in finite time. Second, we introduce convex goal-covering polygons (GCPs) and propose the \emph{close-to-goal pursuit winning} for the pursuers whose visibility region contains the whole protected region, and the goal-visible property will be preserved afterwards. Third, we employ Euclidean shortest paths (ESPs) and construct a pursuit winning region and strategy for the non-goal-visible pursuers, where the pursuers are firstly steered to positions with goal visibility along ESPs. In each horizon, the hierarchical optimal task allocation maximizes the number of defeated evaders and consists of four sequential matchings: capture, enhanced, non-dominated and closest matchings. Numerical examples are presented to illustrate the results.
\end{abstract}

\begin{IEEEkeywords}
Reach-avoid games, differential games, polygonal obstacles, pursuit winning, Euclidean shortest paths.
\end{IEEEkeywords}

\section{Introduction}
\emph{Problem motivation and description:} Differential games provide a proper mathematical framework to study the strategic behaviors of the players in the continuous state and action spaces \cite{RI:65}.
We consider a multiplayer reach-avoid differential game in a polygonal region with general polygonal obstacles. In this game, multiple pursuers protect a convex region against a number of malicious evaders. However, the obstacles block the motion of all players. We propose a hierarchical-matching receding-horizon cooperative pursuit strategy that is computationally efficient and can achieve a continuously improving lower bound on the number of evaders that can be defeated. Our study is motivated by the recent popularity of reach-avoid differential games \cite{RY-RD-XD-ZS-YZ:23,DS-VK:20,MC-ZZ-CJT:17,KM-JL:11,EG-DWS-AVM-MP:20,AVM-MP-DS-ZF:22,RY-XD-ZS-YZ-FB:22,LL-FD-ML-JC:21,PR-MK-YDM:23} and the lack of {literature} on efficient winning strategies for environments with obstacles. 


\emph{Literature review:} Reach-avoid differential games \cite{KM-JL:11,ZZ-RT-HH-CJT:12}, also known as two-target differential games \cite{WMG-MP:81,PC:96}, perimeter defense games \cite{DS-VK:20}, or target guarding problems \cite{HF-HHL:23,AVM-MP-DS-ZF:22}, are a class of differential games in which the evaders (or attackers) attempt to reach a target set while avoiding the capture by the pursuers (or defenders). The prevalence of such games in the last decade is a result of pressing needs to provide intelligent strategies for protecting critical infrastructures such as buildings and airports, against malicious air drones and robots. There are four common approaches to address reach-avoid differential games. The classical Hamilton-Jacobi (HJ) analysis, provides a general methodology, applies to nonlinear dynamics and is ideal for low-dimensional systems \cite{ZZ-RT-HH-CJT:12,IMM-AMB-CJT:05,IMM:02}. The characteristic method, relying on integrating backward from non-unique terminal sets, can generate closed-form solutions if the underlying singular surfaces are identified properly \cite{AVM-MP-DS-ZF:22,EG-DWC-MP:20}. Computational geometry methods, utilising geometric concepts such as Voronoi diagram \cite{ZZ-WZ-JD-HH-DMS-CJT:16}, Apollonius circle \cite{RY-ZS-YZ:19,JM-NK-BB:20,ZZ-JH-JX-YT:21}, function-based evasion space \cite{RY-XD-ZS-YZ-FB:22} and dominance region \cite{DWO-PTK-ARG:16}, have produced a diversity of strategies for simple-motion players. General forward reachable sets have also been applied to linear and nonlinear dynamics \cite{PR-MK-YDM:23,SYH-TS:17}. Learning-based approaches have emerged recently and been proved to be empirically superior, but still require more future effort for performance guarantees \cite{JS-KC-CAR-SG:23,ESL-LFZ-AR-VK:23,JS-EB:22}.

Most of the works on reach-avoid differential games assume obstacle-free environments, that is, the players can move freely in the environment \cite{RY-XD-ZS-YZ-FB:22,RY-RD-XD-ZS-YZ:23,RY-ZS-YZ:20-1,RY-ZS-YZ:19-2,RY-ZS-YZ:19,AVM-MP-DS-ZF:22,SV-SS-NS:22,DS-VK:20,JS-EB:19,DS-DM-MD:21,XC-JY:22}. However, the addition of obstacles enables the study of more complex and realistic scenarios \cite{NMTK-KGV:22}. For example, robots need to reach a pre-specified region in the obstacle-rich environments while avoiding other non-cooperative robots or moving objects, such as in urban search and rescue scenarios. Such games can also provide a framework to study the territory defense problems in urban and forest conflicts \cite{SB-SDB-AVM-ET-DWC:23,PC:96}. The air drones for package delivery must avoid the buildings, no-fly zones and other air drones, and reach the destinations safely.

Pursuit-evasion differential games in the presence of obstacles have been studied in the literature, where the capture is the unique competition goal. For example, Oyler \emph{et al.} \cite{DWO-PTK-ARG:16} provided the dominance region in the presence of a line segment or a triangle obstacle. Sometimes the obstacles generate both motion and visibility constraints while the pursuer's goal is to maintain the evader's visibility at all times \cite{EL-IB-UR-LB-RM:22,RZ-SB:18}. Zhou \emph{et al.} \cite{ZZ-JRS-DS-HH-CJT:20} proved that three pursuers are sufficient and sometimes necessary to guarantee the capture of an equal-speed evader in a bounded, two-dimensional arena with obstacles. The pursuer can learn and exploit the obstacle-avoidance mechanism of the evader to drive the latter to a trap position instead of capturing it \cite{YL-JH-CC-XG:22}. As a popular tool to address the safety considerations, control barrier functions are integrated to ensure the safety of the pursuers or evaders against obstacles \cite{NMTK-KGV:22}.


However, reach-avoid differential games with obstacles are more interesting but receive less attention due to the complexity. {HJ methods  \cite{MC-ZZ-CJT:17} intrinsically can deal with general obstacle environments, but they require the state space discretization (thus suffer from loss of accuracy), and are limited to systems up to about five states in practice \cite{MC-SLH-MSV-SB-CJT:18}.} The same scalability issue exists when approximating the game as a finite-state model and then synthesizing the strategies using LTL algorithms \cite{LS-TGT-AARN-AK-ACE:22}. Efficient algorithms have been proposed with restrictions, such as discrete-time dynamics and linear-quadratic approximations \cite{DRA-DPN-DFK-JFF:22}. Motion planning approaches can be combined to generate obstacle-free paths when the pursuers instead guard the target by herding the evaders into a safe area \cite{VSC-DP:21}. 

Current approaches for reach-avoid differential games in the presence of obstacles suffer from at least one of the following drawbacks: the curse of dimensionality due to the state space  discretization, heavy computational burden, model approximations and lack of winning guarantees (i.e., the construction of winning regions is missing). The winning regions \cite{RI:65}, which are subsets of the state space, are very appealing and useful. A player can win if the game starts from a state in its own winning region, despite the opponent's strategy. As far as we know, although the winning regions for reach-avoid differential games without obstacles, have been treated in the literature \cite{RY-ZS-YZ:19,RY-RD-XD-ZS-YZ:23,RY-ZS-YZ:19-2,RY-XD-ZS-YZ-FB:22,RY-ZS-YZ:20-1,MD-DM-DS-AVM:21,EG-DWS-AVM-MP:20,DS-VK:20}, an efficient approach for computing winning regions and strategies for environments with obstacles is missing, and it becomes even harder for multiplayer games.    

\emph{Contributions:} Utilising computational geometry methods and the hierarchical matching, we propose a multiplayer onsite and close-to-goal (MOCG) pursuit strategy for reach-avoid differential games in the presence of general polygonal obstacles. This strategy is computed efficiently, and can provide and achieve a lower bound on the number of defeated evaders that continuously improves over time. The main contributions are as follows:


\begin{enumerate}
    \item For the subgame with multiple pursuers and one evader, we propose three pairs of pursuit winning regions and the corresponding strategies that guarantee the winning of pursuers: \emph{onsite pursuit winning}, goal-visible and non-goal-visible cases for \emph{close-to-goal pursuit winning}. All results apply to the bounded polygonal environments with general polygonal obstacles.
    
    \item For the onsite pursuit winning, we propose a multiplayer pursuit strategy based on expanded Apollonius circles in \cite{MD-DM-DS-AVM:21}, and then construct the related winning region. In this winning, the pursuers can guarantee to capture the evader in an obstacle-free area and in finite time, targeting scenarios where players are close to each other with no obstacles nearby.

    \item For the close-to-goal pursuit winning, if the pursuers are goal-visible, i.e., all points in the goal region are visible to the pursuers, we introduce convex goal-covering polygons (GCPs). We then propose a GCP-based multiplayer pursuit strategy and construct the related winning region. This strategy ensures that pursuers are goal-visible at all times. In this winning, at most two pursuers are needed.
    
    \item For the close-to-goal pursuit winning, if a pursuer is not goal-visible, we combine GCPs with Euclidean shortest path and propose a two-stage pursuit strategy. Then, we construct the related winning region based on the convex optimization. In this winning, the game will evolve to one of last two cases eventually.

    \item Finally, we present a hierarchical task assignment scheme to merge all these subgame outcomes. A receding-horizon MOCG pursuit strategy is proposed such that the number of guaranteed defeated evaders is maximized at each step and increases with iterations. 
\end{enumerate}

\emph{Paper organization:} We introduce reach-avoid differential games with polygonal obstacles in Section \ref{sec:problem-statement}. We present the onsite pursuit winning, the goal-visible and non-goal-visible cases for close-to-goal pursuit winning in Sections \ref{sec:onsite-pursuit-winning}, \ref{sec:goal-visible} and \ref{sec:non-goal-visible}, respectively. In Section \ref{sec:multiplayer-strategy}, we propose the MOCG pursuit strategy. Numerical results are presented in Section \ref{sec:simulation} and we conclude the paper in Section \ref{sec:conclusion}.

\emph{Notation:} 
Let $\mathbb{R}$ be the set of reals. Let $\mathbb{R}^n$ be the set of $n$-dimensional real column vectors and $\enVert[0]{\cdot}_2$ be the Euclidean norm. 
Let $\bm{x}^\top$ denote the transpose of a vector $\bm{x} \in \mathbb{R}^n$. Denote the unit disk in $\mathbb{R}^n$ by $\mathbb{S}^{n}$, i.e., $\mathbb{S}^{n}=\{ \bm{u}\in\mathbb{R}^n\, |\, \|\bm{u}\|_2 \leq 1 \}$. For a finite set $S$, let $|S|$ be the cardinality of $S$. All angles in this paper are in the range $[0, 2\pi)$. For two angles $\theta_1 $ and $ \theta_2$, let $D[\theta_1, \theta_2]$ be the set of angles from $\theta_1$ to $\theta_2$ in a counterclockwise direction, and $| D[\theta_1, \theta_2]|$ be the corresponding angle difference from $\theta_1$ to $\theta_2$, also called angle span of $D[\theta_1, \theta_2]$. For two points $\bm{x}_1, \bm{x}_2 \in \mathbb{R}^2$, let $\overline{\bm{x}_1\bm{x}_2}$ be the line segment with endpoints $\bm{x}_1$ and  $\bm{x}_2$, and $\sigma(\bm{x}_1, \bm{x}_2)$ be the angle of the vector from $\bm{x}_1$ to $\bm{x}_2$. For three distinct points $\bm{x}_1, \bm{x}_2, \bm{x}_3 \in \mathbb{R}^2$, let $\trianglepoints (\bm{x}_1, \bm{x}_2, \bm{x}_3) \subset \mathbb{R}^2$ be the triangle region with vertices $\bm{x}_1, \bm{x}_2$ and $ \bm{x}_3$, and $\sectorpoints (\bm{x}_1, \bm{x}_2, \bm{x}_3) \subset \mathbb{R}^2 $ be the unbounded circular sector by counterclockwise sweeping the line emanating from $\bm{x}_1$ to $\bm{x}_2$ until the line from $\bm{x}_1$ to $\bm{x}_3$. {Further notations are provided in \tabref{tab:notation}, which will be explained in more detail later.}

\begin{table*}
    \centering 
    \captionsetup{labelformat=empty}
    \renewcommand{\arraystretch}{1.4}
    {
    \begin{tabular}{ p{.13\linewidth}  p{.32\linewidth} p{.13\linewidth}  p{.32\linewidth} } 
       \multicolumn{4}{ c }{Table 1. Notation Table} \\
        \hline
        Symbol & Description & Symbol & Description\\
        \hline
    $\pteam$ & $\pnum$ pursuers $\{P_1,\dots,P_{\pnum}\}$ & $\eteam$ & $\enum$ evaders $\{E_1,\dots,E_{\enum}\}$ \\ 
    \hline 
    $\obstacleset$ & family of obstacles $\{ O_1, \dots, O_k \}$ & $\gameregion, \goal$ & game region, convex goal region\\
    \hline 
     $\play$ & play region $\gameregion \setminus (\cup_{O \in \obstacleset} O \cup \goal)$ & $\freespace$ & $\play \cup \goal$ where players can move freely \\
    \hline
    $\alpha_{ij} = v_{P_i} / v_{E_j}$ & speed ratio between $P_i$ and $E_j$ & $r_i$ & capture radius of $P_i$ \\
    \hline
    $P_c$ & pursuit coalition $\{ P_i \in \pteam \mid i \in c \}$ &  $X_c = \{ \bm{x}_{P_i} \}_{i \in c}$ & positions of all pursuers in coalition $P_c$\\
    \hline
    $X_{ij} = (\bm{x}_{P_i}, \bm{x}_{E_j})$ & state of the subgame between $P_i$ and $E_j$ & $X_{cj} = (X_c, \bm{x}_{E_j})$ & state of the subgame between coalition $P_c$ and $E_j$ \\
    \hline
    $\mathbb{A}$ & Apollonius circle and its interior &  $\mathbb{A}_{\delta}$ & expanded Apollonius circle and its interior
    \\
    \hline
    $D_{\mathcal{R}} (\bm{x}) \in [0, 2\pi)$ & direction range of $\bm{x}$ in $\mathcal{R}$ & $\varrho (X_{cj})$ & safe distance of the subgame between $P_c$ and $E_j$
    \\
    \hline
    $\ESP(\bm{x}_1, \bm{x}_2)$ & Euclidean shortest path (ESP) between  $\bm{x}_1$ and $\bm{x}_2$ & $\ESPdist(\bm{x}_1, \bm{x}_2)$ & ESP distance between $\bm{x}_1$ and $\bm{x}_2$ \\
    \hline
    $\ESPregion(\bm{x}, \ell)$ & ESP reachable region from $\bm{x}$ within distance $\ell$  & $\wavefront(\bm{x}, \ell)$ & wavefront from $\bm{x}$ with distance $\ell$ \\
    \hline
    \end{tabular}
    }
    \label{tab:notation}
\end{table*}

\section{Problem Statement} \label{sec:problem-statement}

\subsection{Reach-avoid differential games with polygonal obstacles}

Consider a planar reach-avoid differential game in a polygonal region $\gameregion \subset \mathbb{R}^2$ with $\pnum$ pursuers, $\pteam=\{P_1,\dots,P_{\pnum}\}$, and $\enum$ evaders, $\eteam=\{E_1,\dots,E_{\enum}\}$, as depicted in Fig. \ref{fig:game-illustration}. Each player is assumed to be a mass point and moves with simple motion as Isaacs states \cite{RI:65}, i.e., it is holonomic. The dynamics of the players are described by
\begin{equation}
\begin{aligned}
 \dot{\bm{x}}_{P_i}(t) & = v_{P_i} \bm{u}_{P_i}(t), & \ \bm{x}_{P_i}(0) & = \bm{x}_{P_i}^0, & \quad P_i & \in \pteam \\
  \dot{\bm{x}}_{E_j}(t) & = v_{E_j} \bm{u}_{E_j}(t), & \ \bm{x}_{E_j}(0) & = \bm{x}_{E_j}^0, &\quad E_j & \in \eteam 
\end{aligned}
\end{equation}
where $\bm{x}_{P_i}  \in \mathbb{R}^2$ and $ \bm{x}_{E_j} \in \mathbb{R}^2 $ are the positions of $P_i$ and $E_j$, respectively, and $\bm{u}_{P_i}$ and $\bm{u}_{E_j}$ are their control inputs that belong to the admissible control set $\mathbb{U} = \{ \bm{u} : [0, \infty) \to \mathbb{S}^2 \mid \bm{u} \textup{ is piecewise smooth} \}$. For $P_i$ and $E_j$, their maximum speeds are $v_{P_i} > 0$ and $v_{E_j} > 0$, respectively, and their initial positions are $\bm{x}_{P_i}^0 \in \mathbb{R}^2$ and $\bm{x}_{E_j}^0 \in \mathbb{R}^2$, respectively.

A family of obstacles in $\gameregion$, denoted by $\obstacleset = \{ O_1, \dots, O_k \}$, block the motions of the players, where each obstacle $O_i \subset \gameregion$ $(1 \leq i \leq k)$ is a simple polygon (all boundaries are excluded) and the closures of any two obstacles are disjoint. 
A \emph{convex} polygon $\goal$ in $\gameregion$, which is disjoint from all obstacles and the pursuers and the evaders compete for, is called \emph{goal region} (the green polygon in Fig.~\ref{fig:game-illustration}). Formally, we have $\goal = \{ \bm{x} \in \mathbb{R}^2 \mid A_i \bm{x} + \bm{b}_i \ge 0, i \in \goalindex \}$, where $A_i \in \mathbb{R}^{2 \times 2}, \bm{b}_i \in \mathbb{R}^2$ and $\goalindex$ is an index set. The game region $\gameregion$ minus all obstacle polygons and the goal region is called \emph{play region} and denoted by $\play$, i.e., $\play = \gameregion \setminus (\cup_{O \in \obstacleset} O \cup \goal)$.
We denote by $\freespace \coloneqq \play \cup \goal$ the union of the play region and the goal region, in which all players can move freely.

{We denote by $\alpha_{ij} = v_{P_i} / v_{E_j}$ the speed ratio between $P_i$ and $E_j$ and consider faster pursuers, i.e., $\alpha_{ij} > 1$.} Let $r_i \ge 0$ be the capture radius of $P_i$. An evader is \emph{captured} by $P_i$ if $P_i$ is pursuing the evader, their Euclidean distance is less than or equal to $r_i$ and the line segment connecting them is obstacle-free (i.e., visual radius capture).

The evasion team $\eteam$ aims to send as many evaders initially in the play region $\play$ as possible into the goal region $\goal$ via the paths in $\freespace$, before being captured by the pursuit team $\pteam$ which conversely strives to guard $\goal$ against $\eteam$. 

In this problem, we consider the state feedback information structure \cite{TB-GJO:99}. The pursuit and evasion teams make decisions about their current control inputs with the information of all players' current positions. The maximum speeds of all players, the capture radii of all pursuers and the information about $\gameregion$, $\play$, $\goal$ and $\obstacleset$ are known by two teams.


\begin{figure}
    \centering
    \includegraphics[width=0.90\hsize,height=0.52\hsize]{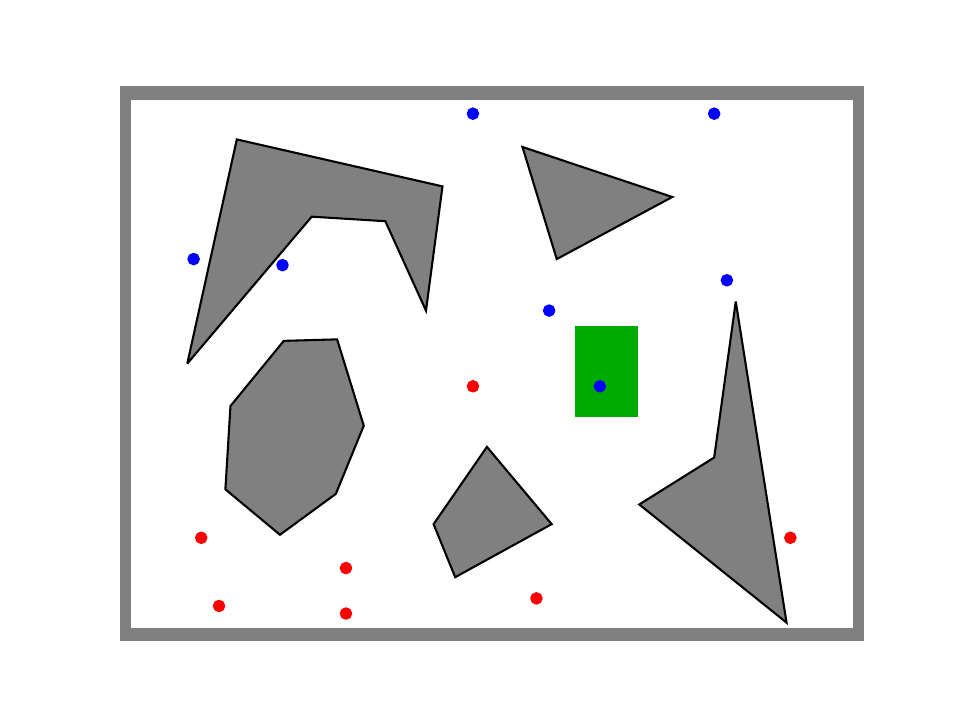}
    \put(-104,70){\footnotesize$P_i$}
    \put(-130,54){\footnotesize$E_j$}
    \put(-30,110){\footnotesize$\play$}
    \put(-67,62){\footnotesize$\goal$}
    \put(-215,115){\footnotesize$\gameregion$}
    \caption{Multiplayer reach-avoid differential games in the presence of black general polygonal obstacles, where the blue pursuers protect a green convex polygon $\goal$ against the red evaders who start from the play region $\play$ and aim to enter $\goal$. The obstacles block the motion of all players.}
    \label{fig:game-illustration}
\end{figure}

\subsection{Multiple-pursuer-one-evader subgames}
There exists complex, non-intuitive cooperation among team members and competition among players from different teams. Thus, it is challenging to synthesize strategies for multiple pursuers and multiple evaders directly. We split the whole game into many subgames, compute strategies for each subgame separately and generate team strategies by piecing together all subgame outcomes. This approach has been widely used to solve multiplayer games \cite{MC-ZZ-CJT:17,DS-VK:20,EG-DWS-AVM-MP:20,RY-XD-ZS-YZ-FB:22,RY-RD-XD-ZS-YZ:23}. {Notably, the very relevant work \cite{MC-ZZ-CJT:17} by Chen et al. only considered one-pursuer-one-evader subgames and solved multiplayer games by fusing pairwise outcomes via the maximum matching. In this paper, we consider the subgames with multiple pursuers and one evader and fuse the subgame outcomes via hierarchical optimal task allocation.} 

We introduce several definitions and notations for a set of pursuers. For any non-empty index set $c \in 2^{ \{1, \dots, \pnum \}}$, let $P_c = \{ P_i \in \pteam \mid i \in c \}$ be a \emph{pursuit coalition} containing pursuer $P_i$ if $i \in c$. We denote by $X_c$ and $U_c$ the positions and the control inputs of all pursuers in $P_c$, respectively, i.e., $X_c = \{ \bm{x}_{P_i} \}_{i \in c}$ and $U_c = \{\bm{u}_{P_i} \}_{i \in c}$. Let the product of $|c|$ sets $\mathbb{U}$, i.e., $\mathbb{U}_c = \mathbb{U} \times \cdots \times \mathbb{U}$, be the admissible control set for $P_c$. Let $X_{cj}=(X_c,\bm{x}_{E_j})$ be the state (i.e., position) of the subgame between $P_c$ and $E_j$, and $X_{ij}=(\bm{x}_{P_i},\bm{x}_{E_j})$ be the subgame state (i.e., position) if $c=\{i\}$.  
Any variable that depends on the players' positions $\bm{x}_{P_i}(t)$ or $\bm{x}_{E_j}(t)$ is time-dependent. For notational convenience, we omit $t$ unless needed for clarity.

For a subgame between a pursuit coalition $P_c$ and an evader $E_j$, the pursuit winning is defined as follows.

\begin{defi}[Pursuit winning]
A pursuit coalition $P_c  \in 2^{\pteam}$ guarantees to win against an evader $E_j \in \eteam$ at a state $X_{cj}$, if there exists a pursuit strategy $U_c\in \mathbb{U}_c$ such that starting from $X_{cj}$, one of the conditions holds regardless of $E_j$'s strategy:
\begin{enumerate}
    \item  $P_c$ can guarantee to capture $E_j$ in $\play$; 
    \item  $P_c$ can guarantee to delay $E_j$ from entering $\goal$ forever unless $E_j$ is captured.
\end{enumerate}
The state $X_{cj}$ and the strategy $U_c$ are called a pursuit winning state and a corresponding winning strategy, respectively. A set of pursuit winning states are called a pursuit winning region.
\end{defi}

{
\subsection{Problems of interest}
 In this paper, our goal is to propose strategies for the pursuit team $\pteam$ that can guarantee to win against as many evaders simultaneously as possible (i.e., capture evaders or delay them from entering $\goal$ forever). The idea is to divide the pursuers into many disjoint coalitions and then assign each coalition with a carefully designed pursuit strategy to a particular evader. 

To this end, we first construct three pursuit winning regions (i.e., positions of a pursuit coalition and an evader) from which the pursuit coalition can win regardless of the evader's strategy. These three pursuit winning regions correspond to different scenarios informally defined as follows and detailed later.

\begin{defi}
    [Three pursuit winning regions] For a pursuit coalition and an evader, an \emph{onsite} pursuit winning region corresponds to the scenario where the pursuers are close to the evader with no obstacles nearby. The goal-visible case for the \emph{close-to-goal} pursuit winning region corresponds to the scenario where the pursuers can visibly see the whole goal region, while the non-goal-visible case corresponds to the scenario where the visibility is not satisfied.
\end{defi}

Then, three pursuit winning regions are used during the play to determine whether a pursuit coalition can win against an evader. By collecting all such subgame outcomes, we generate disjoint pursuit coalitions and assign them to evaders in real time such that the number of defeated evaders is maximized, where each pursuit coalition will either take the strategy that ensures its winning or some heuristic strategy detailed below. The assignment will change if a new assignment guaranteeing to defeat more evaders is generated as the game evolves.
}

\section{Onsite pursuit winning} \label{sec:onsite-pursuit-winning}

For a subgame between a pursuit coalition $P_c$ and an evader $E_j$, this section focuses on the onsite pursuit winning where $E_j$ can be captured before the obstacles are used to assist its evasion. We present the pursuit winning region and the related winning strategy for $P_c$.  

\subsection{Expanded Apollonius-circle based pursuit strategy}

We first recall an expanded Apollonius circle based pursuit strategy in the \emph{obstacle-free} plane for one pursuer against one evader due to  Dorothy et al. \cite{MD-DM-DS-AVM:21}. For a state $X_{ij}$, we define two variables:
\begin{equation}\label{eq:RA-xA}
  R_A = \frac{\alpha_{ij} \| \bm{x}_{P_i} - \bm{x}_{E_j} \|_2 }{\alpha_{ij}^2 - 1}, \, \bm{x}_A = \frac{ \alpha_{ij}^2 \bm{x}_{E_j} - \bm{x}_{P_i} }{ \alpha_{ij}^2 - 1} \,.  
\end{equation}
Then, the Apollonius circle \cite{RI:65} and its interior, denoted by $\mathbb{A}$, and its expanded version with a small constant $\delta>0$, denoted by $\mathbb{A}_{\delta}$ (with the green boundary in Fig.~\ref{fig:onsite-pursuit-winnning}(a)), are respectively computed as follows:
\begin{equation} \label{eq:apollonius-circle}
\begin{aligned}
    \mathbb{A} & = \{ \bm{x} \in \mathbb{R}^2 \mid \| \bm{x}- \bm{x}_A \|_2 \leq R_A \}, \\
    \mathbb{A}_{\delta} & = \{ \bm{x} \in \mathbb{R}^2 \mid\| \bm{x}- \bm{x}_A \|_2 \leq R_A + \delta  \}.
\end{aligned}
\end{equation}
It is well-known \cite{RI:65} that in obstacle-free cases, starting from the state $X_{ij}$, $E_j$ can reach any point in $\mathbb{A}$ no later than $P_i$, while $P_i$ can reach any point in $\mathbb{R}^2 \setminus \mathbb{A}$ before $E_j$. Recently, by expanding $\mathbb{A}$ to the larger region $\mathbb{A}_{\delta}$, the work \cite{MD-DM-DS-AVM:21} proposed the following pursuit strategy such that $P_i$ guarantees to capture $E_j$ in $\mathbb{A}_{\delta}$ and in a finite time, regardless of $E_j$'s strategy.

\begin{lema}[Pursuit strategy, \cite{MD-DM-DS-AVM:21}]\label{lema:onsite-pursuit-strategy-1v1}
In the absence of obstacles, from time $t$, if $P_i$ adopts the pursuit strategy $\bm{u}_{P_i}(\tau) = \frac{\bm{z}(\tau)}{ \| \bm{z}(\tau) \|_2 }$
for all $\tau \ge t$, where $\bm{z}(\tau)$ is given by
\begin{equation*}
    \alpha_{ij} (R_A(t) + \delta - R_A(\tau)) \frac{\bm{x}_{E_j}(\tau) - \bm{x}_{P_i}(\tau)}{ \| \bm{x}_{E_j}(\tau) - \bm{x}_{P_i}(\tau) \|_2 } + \bm{x}_A(\tau) - \bm{x}_A(t)
\end{equation*}
then regardless of $E_j$'s strategy, $P_i$ can ensure
\begin{enumerate}[label=(\roman*)]
    \item $\mathbb{A}(\tau) \subset \mathbb{A}_{\delta}(t)$ for all $ \tau \ge t$;

    \item $E_j$ is captured in $\mathbb{A}_{\delta}(t)$ under $r_i=0$ in a finite time.
\end{enumerate}
\end{lema}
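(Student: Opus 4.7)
The plan is to reduce both claims to a single geometric invariant that the pursuit strategy enforces, and then extract a uniform rate of decrease on the pursuer-evader distance from that invariant.

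First, I would reformulate part (i) as the quantitative inequality
$$V(\tau) \; := \; \|\bm{x}_A(\tau) - \bm{x}_A(t)\|_2 + R_A(\tau) \;\leq\; R_A(t) + \delta,$$
because two disks satisfy $\mathbb{A}(\tau) \subset \mathbb{A}_\delta(t)$ if and only if the sum of the center offset and the inner radius is bounded by the outer radius. At $\tau = t$ we have $V(t) = R_A(t) < R_A(t) + \delta$, so the invariant holds strictly at the start. Differentiating \eqref{eq:RA-xA} gives closed-form expressions for $\dot{\bm{x}}_A$ and $\dot{R}_A$ in terms of $\bm{u}_{P_i}$ and $\bm{u}_{E_j}$, and the proposed strategy $\bm{u}_{P_i}=\bm{z}/\|\bm{z}\|_2$ is engineered precisely so that, after substitution into $\dot{V}(\tau)$, the two $\alpha_{ij}$-weighted contributions (line-of-sight term and center-displacement term) combine to cancel the worst-case evader push. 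The key algebraic verification is therefore: for every admissible $\bm{u}_{E_j}\in \mathbb{S}^2$, the inner product of the proposed $\bm{u}_{P_i}$ with the gradient of $V$ with respect to $\bm{x}_{P_i}$ dominates the corresponding evader-induced growth, yielding $\dot{V}(\tau)\leq 0$ whenever $V(\tau)$ approaches $R_A(t)+\delta$. Integrating along any admissible evader trajectory gives (i).

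For part (ii), I would exploit the decomposition of $\bm{z}$. Under invariant (i), the Apollonius-center offset satisfies $\|\bm{x}_A(\tau)-\bm{x}_A(t)\|_2 \leq R_A(t)+\delta - R_A(\tau)$, while the line-of-sight term of $\bm{z}$ has magnitude $\alpha_{ij}(R_A(t)+\delta-R_A(\tau))$. Projecting $\bm{u}_{P_i}$ onto the pursuer-to-evader unit vector $\hat{\bm{e}}$ therefore yields a lower bound of the form $\hat{\bm{e}}^\top \bm{u}_{P_i} \geq (\alpha_{ij}-1)/(\alpha_{ij}+1)$, and combining this with $\hat{\bm{e}}^\top \bm{u}_{E_j}\leq 1$ produces a strictly negative upper bound on $d\|\bm{x}_{P_i}-\bm{x}_{E_j}\|_2/d\tau$, provided $R_A(\tau)$ stays bounded away from $R_A(t)+\delta$. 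A cleaner finite-time argument routes through $R_A$ directly: one shows that whenever $R_A(\tau) < R_A(t)+\delta$ the strategy forces $\dot R_A(\tau) \leq -\eta$ for some $\eta>0$ depending only on $\alpha_{ij}$ and $\delta$, so $R_A(\tau)$ reaches $0$ in finite time, which is exactly capture (since $r_i=0$ and the capture point lies in $\mathbb{A}_\delta(t)$ by (i)).

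The main obstacle is the bookkeeping inside the Lyapunov computation: because $\bm{z}$ mixes a direction-dependent line-of-sight piece with a displacement piece whose orientation is a priori arbitrary, one has to show that the cross terms against $\bm{u}_{E_j}$ telescope using only $\|\bm{u}_{E_j}\|_2 \leq 1$ and the definitions of $\bm{x}_A$ and $R_A$. Once this cancellation is carried out, both (i) and (ii) follow. Since this estimate is established in the obstacle-free plane by Dorothy et al.\ \cite{MD-DM-DS-AVM:21}, I would present the argument in a condensed form and emphasize the structural features -- invariance of $V$ and uniform decay of $R_A$ -- that later sections will reuse when checking that $\mathbb{A}_\delta(t)$ itself is obstacle-free.
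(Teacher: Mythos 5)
First, a point of comparison: the paper does not prove this lemma at all --- it is imported from Dorothy et al.\ \cite{MD-DM-DS-AVM:21}, and the remark following the statement explicitly defers the details to that reference. So your sketch can only be judged on its own merits. Your part (i) is structurally sound. The reformulation $V(\tau)=\|\bm{x}_A(\tau)-\bm{x}_A(t)\|_2+R_A(\tau)\le R_A(t)+\delta$ is exactly the disk-containment condition, and the cancellation you appeal to does occur where it is needed: writing $\hat{\bm{e}}$ for the unit vector from $\bm{x}_{P_i}$ to $\bm{x}_{E_j}$ and $\hat{\bm{w}}$ for the unit vector along $\bm{x}_A(\tau)-\bm{x}_A(t)$, one computes $\dot V\le \frac{\alpha_{ij}v_{E_j}}{\alpha_{ij}^2-1}\bigl(\|\alpha_{ij}\hat{\bm{w}}+\hat{\bm{e}}\|_2-(\hat{\bm{w}}+\alpha_{ij}\hat{\bm{e}})^{\top}\bm{u}_{P_i}\bigr)$, and on the boundary $\|\bm{x}_A(\tau)-\bm{x}_A(t)\|_2=R_A(t)+\delta-R_A(\tau)$ the prescribed strategy reduces to $\bm{u}_{P_i}=(\alpha_{ij}\hat{\bm{e}}+\hat{\bm{w}})/\|\alpha_{ij}\hat{\bm{e}}+\hat{\bm{w}}\|_2$, so the bracket vanishes because $\|\alpha_{ij}\hat{\bm{w}}+\hat{\bm{e}}\|_2=\|\hat{\bm{w}}+\alpha_{ij}\hat{\bm{e}}\|_2$. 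Modulo standard technicalities (the kink of $V$ at $\bm{x}_A(\tau)=\bm{x}_A(t)$, and the fact that $\dot V$ can be positive in the interior so one needs a barrier/invariance argument rather than plain integration), this is the right mechanism.

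Part (ii) has a genuine gap: both of your routes to finite-time capture rely on a pointwise negative drift that does not hold for all $\alpha_{ij}>1$. Your projection bound $\hat{\bm{e}}^{\top}\bm{u}_{P_i}\ge(\alpha_{ij}-1)/(\alpha_{ij}+1)$ is correct (it follows from $\|\bm{x}_A(\tau)-\bm{x}_A(t)\|_2\le R_A(t)+\delta-R_A(\tau)$), but it only gives $\frac{d}{d\tau}\|\bm{x}_{P_i}-\bm{x}_{E_j}\|_2\le -v_{E_j}\bigl(\tfrac{\alpha_{ij}(\alpha_{ij}-1)}{\alpha_{ij}+1}-1\bigr)$, which is negative only when $\alpha_{ij}^2-2\alpha_{ij}-1>0$, i.e.\ $\alpha_{ij}>1+\sqrt{2}$; even the sharp version of this bound, $\hat{\bm{e}}^{\top}\bm{u}_{P_i}\ge\sqrt{\alpha_{ij}^2-1}/\alpha_{ij}$, fails to give negativity for $1<\alpha_{ij}\le\sqrt{2}$. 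The proviso ``provided $R_A(\tau)$ stays bounded away from $R_A(t)+\delta$'' does not rescue this, since what controls the projection is the ratio of the center offset to $R_A(t)+\delta-R_A(\tau)$, not the size of the latter. For the same reason, the ``cleaner'' claim that $\dot R_A(\tau)\le-\eta<0$ whenever $R_A(\tau)<R_A(t)+\delta$ is unsubstantiated and apparently false for $\alpha_{ij}$ near $1$: the strategy is not pure pursuit (it aims at the interior point $\bm{x}_{P_i}^{\star}(\tau)\in\mathbb{A}(\tau)$, cf.\ the proof of \thomref{thom:onsite-pursuit-winning}), and for an offset of magnitude $R_A(t)+\delta-R_A(\tau)$ orthogonal to the line of sight one gets $\alpha_{ij}\hat{\bm{e}}^{\top}\bm{u}_{P_i}=\alpha_{ij}^2/\sqrt{\alpha_{ij}^2+1}<1$ when $\alpha_{ij}^2<(1+\sqrt{5})/2$, so an evader fleeing along $\hat{\bm{e}}$ momentarily increases $R_A$. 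Finite-time capture therefore needs a genuinely different, non-greedy argument --- this is precisely the substantive content of \cite{MD-DM-DS-AVM:21} --- and your sketch does not supply it.
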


{
We provide motivations for adopting the union set $\mathbb{A}_{\delta}$ of the expanded Apollonius circle and its interior, and some intuition on how it works.
}

{
\begin{rek}
    It is known \cite{RI:65,PW-MP-KP:19} that the set of points that $P_i$ and $E_j$ can reach along their minimum distances with a time difference $r_i/v_{P_i} \ge 0$ is a Cartesian oval, which degenerates into an Apollonius circle for $r_i=0$. We employ the expanded Apollonius circle instead of the Cartesian oval in the pursuit strategy for both $r_i=0$ and $r_i>0$, for two reasons: 1) it is unclear whether a similar result to \lemaref{lema:onsite-pursuit-strategy-1v1} still holds for a pursuit strategy based on the Cartesian oval with $r_i >0$; 2) if $P_i$ can guarantee to capture $E_j$ under $r_i=0$, then $P_i$ can also ensure the capture under $r_i > 0$. The extension of the strategy in \lemaref{lema:onsite-pursuit-strategy-1v1} to the Cartesian oval is not straightforward, and we leave it for future work.
\end{rek}

\begin{rek}
    Since $P_i$ has no access to the current control input of $E_j$, $P_i$ and $E_j$ may move towards different points at the boundary of $\mathbb{A}$ at the state $X_{ij}$. Thus, $P_i$ cannot guarantee to capture $E_j$ in $\mathbb{A}$, provided that $r_i=0$. However, introducing $\mathbb{A}_{\delta}$ which strictly expands $\mathbb{A}$, can generate a pursuit strategy for $P_i$ that ensures the capture in $\mathbb{A}_{\delta}$. The general idea behind this strategy is that instead of focusing on points in $\mathbb{A}$, $P_i$ can think a bit further (i.e., measured by $\delta$) and consider the points that $P_i$ can reach strictly before $E_j$. For the interested readers, please refer to \cite{MD-DM-DS-AVM:21} which proposed this strategy, for details.
\end{rek}
}

\subsection{Onsite pursuit winning regions and strategies}

We extend the results in \lemaref{lema:onsite-pursuit-strategy-1v1} to the case of a pursuit coalition $P_c$ against an evader $E_j$ in the presence of obstacles. 

For a pursuer $P_i$ ($i \in c$) against $E_j$, we construct a region as follows. For a state $X_{ij}$, let $\bm{x}_{T_1}$ and $\bm{x}_{T_2}$ be two distinct points in $\mathbb{A}_{\delta}$ such that the line segments $\overline{\bm{x}_{P_i} \bm{x}_{T_1}}$ and $\overline{\bm{x}_{P_i} \bm{x}_{T_2}}$ are tangent to the circle $\mathbb{A}_{\delta}$, see Fig.~\ref{fig:onsite-pursuit-winnning}(a). By \eqref{eq:apollonius-circle}, we have
\begin{align*}
    \| \bm{x}_{T_k} - \bm{x}_A \|_2 = R_A + \delta,\, ( \bm{x}_{T_k} - \bm{x}_{P_i} )^{\top} (\bm{x}_{T_k} - \bm{x}_A) = 0
\end{align*}
for $k = 1,2$, from which we obtain
\begin{equation*}
\begin{aligned}
    \bm{x}_{T_1} & = \bm{x}_A + \frac{ \ell_1 (\bm{x}_{P_i} - \bm{x}_A) + \ell_2 (\bm{x}_{P_i}- \bm{x}_A)^\circ }{ \| \bm{x}_{P_i} - \bm{x}_A \|_2^2 }, \\
    \bm{x}_{T_2} & = \bm{x}_A + \frac{ \ell_1 (\bm{x}_{P_i}- \bm{x}_A) - \ell_2 (\bm{x}_{P_i}- \bm{x}_A)^\circ }{ \| \bm{x}_{P_i} - \bm{x}_A \|_2^2 } ,
\end{aligned}
\end{equation*}
where for a vector $\bm{x} = [x, y]^\top \in \mathbb{R}^2$, let $\bm{x}^\circ = [y, - x]^\top$, and
\begin{equation*}
    \ell_1 = ( R_A + \delta )^2, \ \ell_2 = ( R_A + \delta ) \sqrt{\| \bm{x}_{P_i} - \bm{x}_A \|_2^2 - \ell_1 } .
\end{equation*}
Then, we construct a region induced by $X_{ij}$ as follows:
\begin{align}\label{eq:onsite-region}
    \onsiteregion_{ij}  = \trianglepoints(\bm{x}_{P_i}, \bm{x}_{T_1}, \bm{x}_{T_2}) \cup \mathbb{A}_{\delta},
\end{align}
which is the union of two regions with black and green boundaries in Fig.~\ref{fig:onsite-pursuit-winnning}(a). We use $\onsiteregion_{ij}$ to construct the following pursuit winning region for $P_i$ against $E_j$.

\begin{figure}
    \centering
    \includegraphics[width=0.44\hsize,height=0.35\hsize]{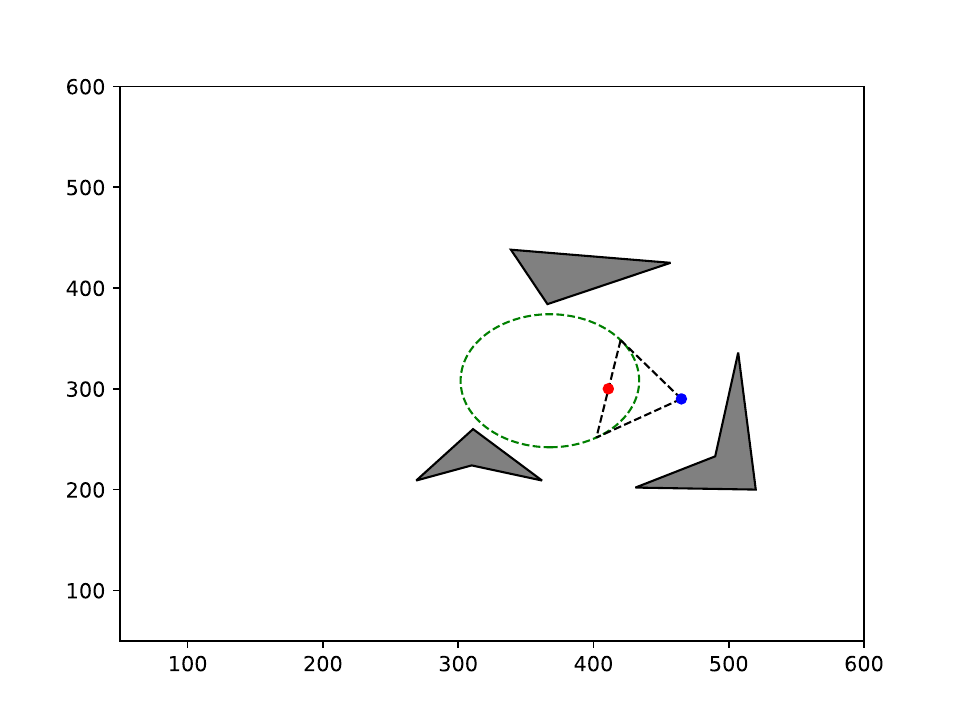} \quad \
    \includegraphics[width=0.44\hsize,height=0.35\hsize]{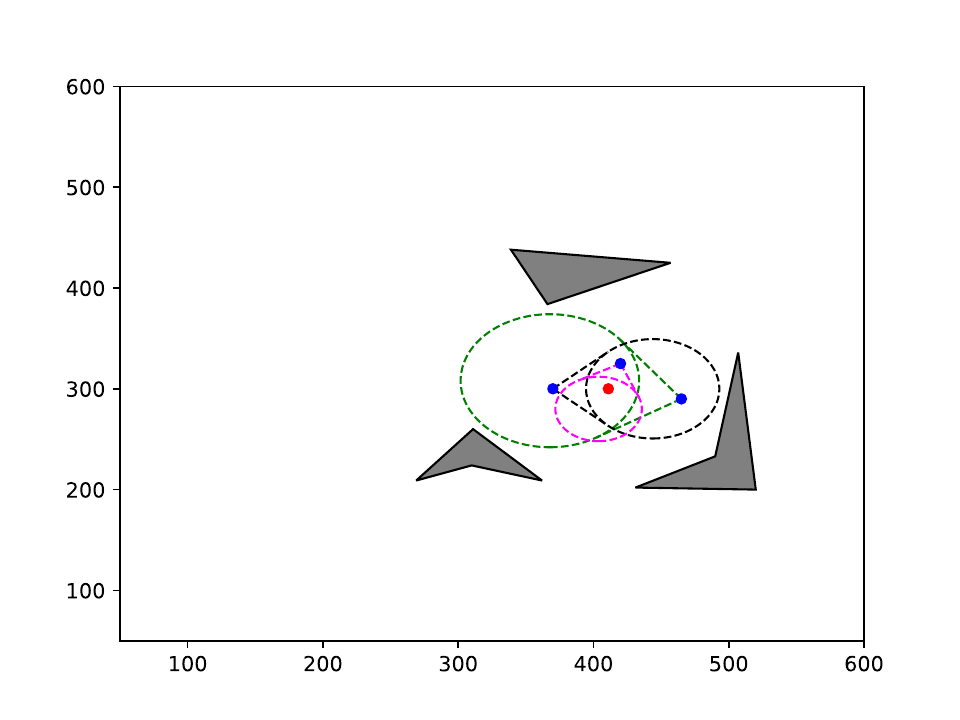}
    \put(-215,40){\scriptsize$\mathbb{A}_{\delta}$}
    \put(-156,27){\scriptsize$\bm{x}_{P_i}$}
    \put(-193,38){\scriptsize$\bm{x}_{E_j}$}
    \put(-175,58){\scriptsize$\bm{x}_{T_1}$}
    \put(-184,15){\scriptsize$\bm{x}_{T_2}$}
    \put(-110, 43){\scriptsize$\mathbb{A}_{\delta,1}$}
    \put(-30, 57){\scriptsize$\mathbb{A}_{\delta,2}$}
    \put(-55, 13){\scriptsize$\mathbb{A}_{\delta,3}$}
    \put(-160,45){\scriptsize$\trianglepoints$}
    \put(-61,-2){\scriptsize$(b)$}
    \put(-188,-2){\scriptsize$(a)$}
    \caption{Onsite pursuit winning. $(a)$ one pursuer: if two regions $\mathbb{A}_{\delta}$ and $\trianglepoints$ surrounded by the green expanded Apollonius circle and black triangle, respectively, are obstacle-free, then $P_i$ guarantees to capture $E_j$ in $\mathbb{A}_{\delta}$ in a finite time. $(b)$ pursuit coalition: if every pursuer in a coalition can capture an evader individually, then the evader will be captured in the intersection of all regions bounded by expanded Apollonius circles, i.e., $\cap_{i=1}^3 \mathbb{A}_{\delta,i}$.}
    \label{fig:onsite-pursuit-winnning}
\end{figure}

\begin{thom}[Onsite pursuit winning]\label{thom:onsite-pursuit-winning}
At time $t$, if the positions $X_{ij}$ of $P_i$ and $E_j$ satisfy 
\begin{equation}\label{eq:onsite-winning-condition-1v1}
    \onsiteregion_{ij}(t) \cap O = \emptyset \textup{ for all }  O \in  \obstacleset \textup{ and }  \mathbb{A}_{\delta}(t) \cap \goal = \emptyset
\end{equation}
then using the pursuit strategy in \lemaref{lema:onsite-pursuit-strategy-1v1} for all $\tau \ge t$, $P_i$ can guarantee to capture $E_j$ in $\mathbb{A}_{\delta}(t)$  despite $E_j$'s strategy. 
\end{thom}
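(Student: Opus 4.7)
My plan is to reduce the obstacle-rich setting of this theorem to the obstacle-free setting already handled by \lemaref{lema:onsite-pursuit-strategy-1v1}. Specifically, I would show that under the two stated hypotheses, both $P_i$'s and $E_j$'s trajectories generated by the feedback strategy of \lemaref{lema:onsite-pursuit-strategy-1v1} remain inside the obstacle-free region $\onsiteregion_{ij}(t)$ up to the capture time. Once this is established, the obstacles in $\obstacleset$ never influence the motions, so the dynamics coincide with those in the obstacle-free analysis and the guarantees of \lemaref{lema:onsite-pursuit-strategy-1v1}(ii) transfer verbatim.

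I would first confine the evader. By \lemaref{lema:onsite-pursuit-strategy-1v1}(i), $\mathbb{A}(\tau) \subset \mathbb{A}_{\delta}(t)$ for all $\tau \ge t$. Since the evader's own position lies inside its current Apollonius region (the ratio $\|\bm{x}_{E_j}(\tau)-\bm{x}_{E_j}(\tau)\|_2/\|\bm{x}_{E_j}(\tau)-\bm{x}_{P_i}(\tau)\|_2 = 0 < 1/\alpha_{ij}$ is consistent with being inside $\mathbb{A}(\tau)$ by the definition in \eqref{eq:apollonius-circle}), one obtains $\bm{x}_{E_j}(\tau)\in\mathbb{A}_{\delta}(t)\subset\onsiteregion_{ij}(t)$. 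Because $\onsiteregion_{ij}(t)$ is obstacle-free and $\mathbb{A}_{\delta}(t)\cap\goal=\emptyset$, the evader can neither use an obstacle to evade nor reach $\goal$ during the pursuit.

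Next, I would confine the pursuer to $\onsiteregion_{ij}(t) = \trianglepoints(\bm{x}_{P_i}(t),\bm{x}_{T_1},\bm{x}_{T_2})\cup\mathbb{A}_{\delta}(t)$. The geometric scaffolding is favourable: by the tangency construction of $\bm{x}_{T_1},\bm{x}_{T_2}$, every line segment from $\bm{x}_{P_i}(t)$ to any point of $\mathbb{A}_{\delta}(t)$ lies entirely in $\onsiteregion_{ij}(t)$. I would then analyze the feedback direction $\bm{z}(\tau)/\|\bm{z}(\tau)\|_2$ of \lemaref{lema:onsite-pursuit-strategy-1v1}. Its first summand points from $\bm{x}_{P_i}(\tau)$ toward $\bm{x}_{E_j}(\tau)\in\mathbb{A}_{\delta}(t)$; its second is the bounded displacement $\bm{x}_A(\tau)-\bm{x}_A(t)$, whose size is controlled by the invariant $\mathbb{A}(\tau)\subset\mathbb{A}_{\delta}(t)$. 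Arguing by contradiction, if $\tau^*>t$ is the first exit time through one of the tangent edges $\overline{\bm{x}_{P_i}(t)\bm{x}_{T_k}}$, I would derive that the outward-normal component of $\bm{z}(\tau^*)$ along this edge is non-negative, contradicting the geometry of the target $\bm{x}_{E_j}(\tau^*) \in \mathbb{A}_{\delta}(t)$ relative to the tangent line.

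Combining the two confinement results, the obstacle-free dynamics of \lemaref{lema:onsite-pursuit-strategy-1v1} apply throughout, so part (ii) of that lemma produces capture in $\mathbb{A}_{\delta}(t)$ in finite time, and the evader, being trapped in $\mathbb{A}_{\delta}(t)$ which is disjoint from $\goal$, cannot reach the goal beforehand. The main obstacle I anticipate is rigorously closing step three: the feedback trajectory may curve nontrivially, so a clean contradiction at the first exit time requires a careful sign analysis of $\bm{z}(\tau^*)\cdot\bm{n}$ along the outward normal $\bm{n}$ of the tangent edge. A Lyapunov-style argument on the signed distance to $\partial\onsiteregion_{ij}(t)$, or a direct adaptation of the trajectory characterisation in \cite{MD-DM-DS-AVM:21}, appears to be the cleanest route.
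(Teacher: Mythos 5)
Your overall architecture is the same as the paper's: reduce to the obstacle-free setting of \lemaref{lema:onsite-pursuit-strategy-1v1} by showing the pursuer never leaves the obstacle-free set $\onsiteregion_{ij}(t)$ before capture, with \lemaref{lema:onsite-pursuit-strategy-1v1}(i) providing the nesting $\mathbb{A}(\tau)\subset\mathbb{A}_{\delta}(t)$. The evader-confinement step is fine ($\bm{x}_{E_j}(\tau)\in\mathbb{A}(\tau)\subset\mathbb{A}_{\delta}(t)$, which together with $\mathbb{A}_{\delta}(t)\cap\goal=\emptyset$ rules out the evader reaching the goal first). The genuine gap is exactly where you flag it: step three. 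The route you sketch --- first summand of $\bm{z}(\tau)$ points at the evader, second summand is a ``bounded displacement,'' hence a sign contradiction at the first exit time --- cannot close as a domination or perturbation argument, because the two summands are of \emph{comparable} magnitude: the first has norm $\alpha_{ij}(R_A(t)+\delta-R_A(\tau))$, while the second, $\bm{x}_A(\tau)-\bm{x}_A(t)$, can have norm as large as $R_A(t)+\delta-R_A(\tau)$ (this is precisely what the nesting of the disks gives) and can point in an arbitrary direction. Determining the sign of $\bm{z}(\tau^*)\cdot\bm{n}$ therefore requires the exact quantitative balance between the two terms, not just boundedness of the second; also, the relevant ``target'' is not $\bm{x}_{E_j}(\tau^*)$, since $\bm{z}$ does not point at the evader.

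The missing idea is to divide $\bm{z}(\tau)$ by the positive scalar $\alpha_{ij}(R_A(t)+\delta-R_A(\tau))/\|\bm{x}_A(\tau)-\bm{x}_{P_i}(\tau)\|_2$ (which leaves $\bm{u}_{P_i}$ unchanged, noting $\bm{x}_A(\tau)-\bm{x}_{P_i}(\tau)$ is a positive multiple of $\bm{x}_{E_j}(\tau)-\bm{x}_{P_i}(\tau)$) so that it reads $\bm{x}_{P_i}^{\star}(\tau)-\bm{x}_{P_i}(\tau)$ with
\begin{equation*}
\bm{x}_{P_i}^{\star}(\tau)=\bm{x}_A(\tau)+\frac{\|\bm{x}_A(\tau)-\bm{x}_{P_i}(\tau)\|_2}{\alpha_{ij}(R_A(t)+\delta-R_A(\tau))}\bigl(\bm{x}_A(\tau)-\bm{x}_A(t)\bigr),
\end{equation*}
and then use $\|\bm{x}_A(\tau)-\bm{x}_A(t)\|_2\leq R_A(t)+\delta-R_A(\tau)$ and $\|\bm{x}_A(\tau)-\bm{x}_{P_i}(\tau)\|_2=\alpha_{ij}R_A(\tau)$ to conclude $\|\bm{x}_{P_i}^{\star}(\tau)-\bm{x}_A(\tau)\|_2\leq R_A(\tau)$, i.e., the pursuer always heads toward a point of $\mathbb{A}(\tau)\subset\mathbb{A}_{\delta}(t)\subset\onsiteregion_{ij}(t)$. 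Since $\onsiteregion_{ij}(t)$ is the convex hull of $\bm{x}_{P_i}(t)$ and the disk $\mathbb{A}_{\delta}(t)$, invariance follows immediately from convexity --- no first-exit-time contradiction or normal-vector analysis is needed, and this also handles a possible exit through the circular arc of $\partial\mathbb{A}_{\delta}(t)$, which your tangent-edge-only argument omits.
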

\begin{proof}
We first consider the case when there is no obstacle. If $P_i$ adopts the pursuit strategy in \lemaref{lema:onsite-pursuit-strategy-1v1}, then  starting from $X_{ij}$ at time $t$, $P_i$ can guarantee to capture $E_j$ in $\mathbb{A}_{\delta}(t)$ under the point capture $r_i = 0$  and in a finite time. This also applies to our case $r_i \ge 0$ due to a larger capture range. When the obstacles are present, the same conclusion follows if we can prove that $P_i$ using this pursuit strategy never leaves the region $\onsiteregion_{ij}(t)$ before $E_j$ is captured despite $E_j$'s strategy. This is because $\onsiteregion_{ij}(t)$ is obstacle-free and $\mathbb{A}_{\delta}(t)$ does not intersect with $\goal$ by \eqref{eq:onsite-winning-condition-1v1}.

Let $t_{\textup{capture}} > t$ be the first time when $E_j$ is captured by $P_i$ if there is no obstacle. Next, we prove that $\bm{x}_{P_i}(\tau) \in \onsiteregion_{ij}(t)$ for all $t \leq \tau \leq t_{\textup{capture}}$. By \lemaref{lema:onsite-pursuit-strategy-1v1}(i), we have $\mathbb{A}(\tau) \subset \mathbb{A}_{\delta}(t)$ and thus by \eqref{eq:onsite-region}, $\mathbb{A}(\tau) \subset \onsiteregion_{ij}(t)$. Under the pursuit strategy in \lemaref{lema:onsite-pursuit-strategy-1v1}, $P_i$ always moves along the direction $\bm{z}(\tau)$. Then the conclusion follows if we can show that for any $t \leq \tau \leq t_{\textup{capture}}$, $\bm{z}(\tau)$ consistently points from $\bm{x}_{P_i}(\tau)$ to $\mathbb{A}(\tau)$. That is, the straight line emanating from $\bm{x}_{P_i}(\tau)$ along the direction $\bm{z}(\tau)$ intersects with $\mathbb{A}(\tau)$. 

Since $\bm{x}_A(\tau) - \bm{x}_{P_i}(\tau) = \alpha_{ij}^2(\bm{x}_{E_j}(\tau) - \bm{x}_{P_i}(\tau)) / (\alpha_{ij}^2 - 1)$ using \eqref{eq:RA-xA}, then $\bm{z}(\tau)$ has the equivalent form
\begin{equation*}
    \alpha_{ij} (R_A(t) + \delta - R_A(\tau)) \frac{\bm{x}_A(\tau) - \bm{x}_{P_i}(\tau)}{ \| \bm{x}_A(\tau) - \bm{x}_{P_i}(\tau) \|_2 } + \bm{x}_A(\tau) - \bm{x}_A(t).
\end{equation*}
Without changing $\bm{u}_{P_i}(\tau)$, by rearranging, the direction $\bm{z}(\tau)$ can be further rewritten as follows:
\begin{equation*}
\begin{aligned}
     & \bm{x}_A(\tau) - \bm{x}_{P_i}(\tau) + \frac{ \| \bm{x}_A(\tau) - \bm{x}_{P_i}(\tau) \|_2 (\bm{x}_A(\tau) - \bm{x}_A(t) ) }{\alpha_{ij} (R_A(t) + \delta - R_A(\tau))} \\
     & \coloneqq \bm{x}_{P_i}^{\star}(\tau) -  \bm{x}_{P_i}(\tau),
\end{aligned}
\end{equation*}
where $\bm{x}_{P_i}^{\star}(\tau)$ is the sum of the first and third terms. Therefore, $\bm{u}_{P_i}(\tau)$ can be represented as 
\begin{equation}
    \bm{u}_{P_i}(\tau) = \frac{\bm{x}_{P_i}^{\star}(\tau) -  \bm{x}_{P_i}(\tau)}{ \| \bm{x}_{P_i}^{\star}(\tau) -  \bm{x}_{P_i}(\tau) \|_2 },
\end{equation}
that is, $P_i$ always moves towards the point $\bm{x}_{P_i}^{\star}(\tau)$. Since two disks $\mathbb{A}(\tau) $ and $\mathbb{A}_{\delta}(t)$ satisfy $\mathbb{A}(\tau) \subset \mathbb{A}_{\delta}(t)$, then
\[
    \| \bm{x}_A(\tau) - \bm{x}_A(t) \|_2 \leq R_A(t) + \delta - R_A(\tau).
\]
Therefore, we have 
\begin{equation*}
\begin{aligned}
    \| \bm{x}_{P_i}^{\star}(\tau) - \bm{x}_A(\tau) \|_2 & = \frac{ \| \bm{x}_A(\tau) - \bm{x}_{P_i}(\tau) \|_2 \| \bm{x}_A(\tau) - \bm{x}_A(t) \|_2 }{\alpha_{ij} (R_A(t) + \delta - R_A(\tau))} \\
    & \leq \frac{\| \bm{x}_A(\tau) - \bm{x}_{P_i}(\tau) \|_2 }{ \alpha_{ij} } \\
    & = \frac{ \alpha_{ij} \| \bm{x}_{P_i}(\tau) - \bm{x}_{E_j}(\tau) \|_2 }{ \alpha_{ij}^2 - 1 } = R_A(\tau),
\end{aligned}
\end{equation*}
where \eqref{eq:RA-xA} is used in the last two equalities. Thus, $\bm{x}_{P_i}^{\star}(\tau) \in \mathbb{A}(\tau)$, implying that $P_i$ always moves towards a point in $\mathbb{A}(\tau)$, i.e., $\bm{x}_{P_i}^{\star}(\tau)$. Since $\bm{x}_{P_i}(t) \in \onsiteregion_{ij}(t)$ and $\mathbb{A}(\tau) \subset \onsiteregion_{ij}(t)$, we have $\bm{x}_{P_i}(\tau) \in \onsiteregion_{ij}(t)$ for all $t \leq \tau \leq t_{\textup{capture}}$. Since \eqref{eq:onsite-winning-condition-1v1} says there is no obstacle in $\onsiteregion_{ij}(t)$, then $P_i$ can capture $E_j$ without hitting any obstacle.
\end{proof}

Using \thomref{thom:onsite-pursuit-winning}, we then have the following onsite pursuit winning region and strategy for $P_c$ against $E_j$.

\begin{lema}[Onsite pursuit winning for coalitions]\label{lema:onsite-winning-nv1}
    At time $t$, if the positions $X_{cj}$ of $P_c$ and $E_j$ are such that \eqref{eq:onsite-winning-condition-1v1} holds for all $i \in c$, then using the pursuit strategy in \lemaref{lema:onsite-pursuit-strategy-1v1} for each pursuer $P_i$ in $P_c$ and all $\tau \ge t$, $P_c$ can guarantee to capture $E_j$ in $\cap_{i \in c} \mathbb{A}_{\delta,i}(t)$, where $\mathbb{A}_{\delta,i}(t)$ is the expanded Apollonius circle and its interior for $P_i$ against $E_j$ at time $t$.
\end{lema}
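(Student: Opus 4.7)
The plan is to show that this lemma follows almost immediately from \thomref{thom:onsite-pursuit-winning} by treating each pursuer in the coalition independently and then identifying where in space the eventual capture must occur. First, I would note that the hypothesis asserts that condition \eqref{eq:onsite-winning-condition-1v1} holds for every $i \in c$, so \thomref{thom:onsite-pursuit-winning} applies to each single-pursuer subgame $(P_i, E_j)$ individually. Consequently, if every pursuer $P_i \in P_c$ simultaneously adopts its own strategy from \lemaref{lema:onsite-pursuit-strategy-1v1}, then each $P_i$ on its own can guarantee capture of $E_j$ within $\mathbb{A}_{\delta,i}(t)$ in finite time, while never leaving the obstacle-free region $\onsiteregion_{ij}(t)$. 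The coalition therefore captures $E_j$ in finite time — at the latest, by whichever pursuer achieves the earliest individual capture time.

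The remaining task is to localize the capture point inside $\bigcap_{i \in c} \mathbb{A}_{\delta,i}(t)$. For this, I would invoke two facts, one trivial and one from \lemaref{lema:onsite-pursuit-strategy-1v1}(i). The trivial fact is that for each $i \in c$ and each $\tau \ge t$ before capture, the evader's own position automatically lies in its current Apollonius set: since $\bm{x}_{E_j}(\tau)$ is at distance $0$ from itself and at positive distance from $\bm{x}_{P_i}(\tau)$, definition \eqref{eq:apollonius-circle} gives $\bm{x}_{E_j}(\tau) \in \mathbb{A}_i(\tau)$. The second fact is that when $P_i$ employs the strategy in \lemaref{lema:onsite-pursuit-strategy-1v1}, the containment $\mathbb{A}_i(\tau) \subset \mathbb{A}_{\delta,i}(t)$ holds for all $\tau \ge t$, regardless of $E_j$'s control. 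Chaining the two yields $\bm{x}_{E_j}(\tau) \in \mathbb{A}_{\delta,i}(t)$ for every $i \in c$, hence $\bm{x}_{E_j}(\tau) \in \bigcap_{i \in c} \mathbb{A}_{\delta,i}(t)$ until capture; in particular, this intersection contains the capture point.

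I would also briefly point out that no coordination among the pursuers is needed: each $P_i$'s strategy is designed to be robust against \emph{any} admissible evader control, so the strategies of the other pursuers, whatever they are, do not interfere with the guarantee provided by $P_i$'s own strategy. Furthermore, the obstacle-free guarantee for the coalition follows pursuer-by-pursuer from the corresponding guarantee already established in the proof of \thomref{thom:onsite-pursuit-winning}, since \eqref{eq:onsite-winning-condition-1v1} is assumed for every $i \in c$.

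The only mild subtlety — and I do not see it as a real obstacle — is making sure that the ``first capture'' argument is well-defined: since each pursuer's capture time is finite and deterministic once the joint trajectory is fixed, taking the minimum is harmless, and the moment that minimum is realized the evader is in $\mathbb{A}_{\delta,i}(t)$ for the capturing index $i$ and, by the containment argument above, simultaneously in every other $\mathbb{A}_{\delta,k}(t)$, $k \in c$. This places the capture in $\bigcap_{i \in c} \mathbb{A}_{\delta,i}(t)$ and completes the proof.
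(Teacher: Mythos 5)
Your proposal is correct and follows essentially the same route as the paper: apply \thomref{thom:onsite-pursuit-winning} to each $P_i$ in the coalition separately and conclude that the capture occurs in the intersection of the $\mathbb{A}_{\delta,i}(t)$. The paper's own proof states the final implication without justification, whereas you supply the missing detail explicitly (namely that $\bm{x}_{E_j}(\tau)\in\mathbb{A}_i(\tau)\subset\mathbb{A}_{\delta,i}(t)$ for every $i\in c$ until capture, by \lemaref{lema:onsite-pursuit-strategy-1v1}(i)), which is exactly the right way to fill that gap.
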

\begin{proof}
    For each pursuer $P_i$ ($i \in c$), since \eqref{eq:onsite-winning-condition-1v1} holds, then by \thomref{thom:onsite-pursuit-winning}, $P_i$ can capture $E_j$ in $\mathbb{A}_{\delta,i}(t)$ on its own. Thus, this further implies that 
    $P_c$ can capture $E_j$ in $\cap_{i \in c} \mathbb{A}_{\delta,i}(t)$.
\end{proof}

\begin{rek} \label{rek:onsite-winning}
    By \lemaref{lema:onsite-winning-nv1}, if more pursuers satisfy \eqref{eq:onsite-winning-condition-1v1} against one evader, then the evader is guaranteed to be captured in a smaller { or equal} region and in a shorter { or equal} time. For instance, in Fig.~\ref{fig:onsite-pursuit-winnning}(b), three pursuers guarantee to capture the evader in the intersection of the regions bounded by three expanded Apollonius circles. Since \eqref{eq:onsite-winning-condition-1v1} and the pursuit strategy in \lemaref{lema:onsite-pursuit-strategy-1v1} are both independent of the other pursuers, the resulting cooperative pursuit strategy inherently allows for the distributed implementation. 
\end{rek}

\section{Close-to-goal pursuit winning I: \\ Goal-visible case} \label{sec:goal-visible}

For the case when the condition \eqref{eq:onsite-winning-condition-1v1} required by the onsite pursuit winning does not hold, we propose another two pursuit winning regions and strategies for the subgames. This section presents the first case where the pursuers can visually see the whole goal region.

\subsection{Goal-visible points and convex goal-covering polygons}

In order to construct the pursuit winning regions and strategies, we introduce several new geometric concepts and propose methods to verify or construct them. Utilising these concepts, we classify the points in $\freespace$ into two categories. Then, we investigate the pursuit winning regions and strategies when the pursuers are in each category separately, in the current and the next sections, respectively. Recall that $\freespace = \play \cup \goal$. The classification is as follows.

\begin{defi}[Goal-visible point]\label{defi:visibly-goal-covering-point}
 A point $\bm{x} \in \freespace$ is called  \textup{goal-visible} if for any point $\bm{y} \in \goal$, $\bm{y}$ is visible to $\bm{x}$, i.e., the line segment $\overline{\bm{x} \bm{y}}$ does not intersect with obstacles in $\obstacleset$.
\end{defi}

\defiref{defi:visibly-goal-covering-point} shows that if a pursuer is at a goal-visible point (i.e., $\bm{x}$ in Fig.~\ref{fig:goal-visible-point}(a)), then it can reach any point in $\goal$ via an obstacle-free line segment. However, a pursuer at a non-goal-visible point (i.e., $\bm{x}$ in Fig.~\ref{fig:goal-visible-point}(b)) needs to change its direction in order to reach some point in $\goal$. If a goal-visible point is in $\play$, we next define a pair of critical points for verifying the goal-visible property and computing the pursuit strategies.

\begin{figure}[!h]
    \centering
    \includegraphics[width=0.44\hsize]{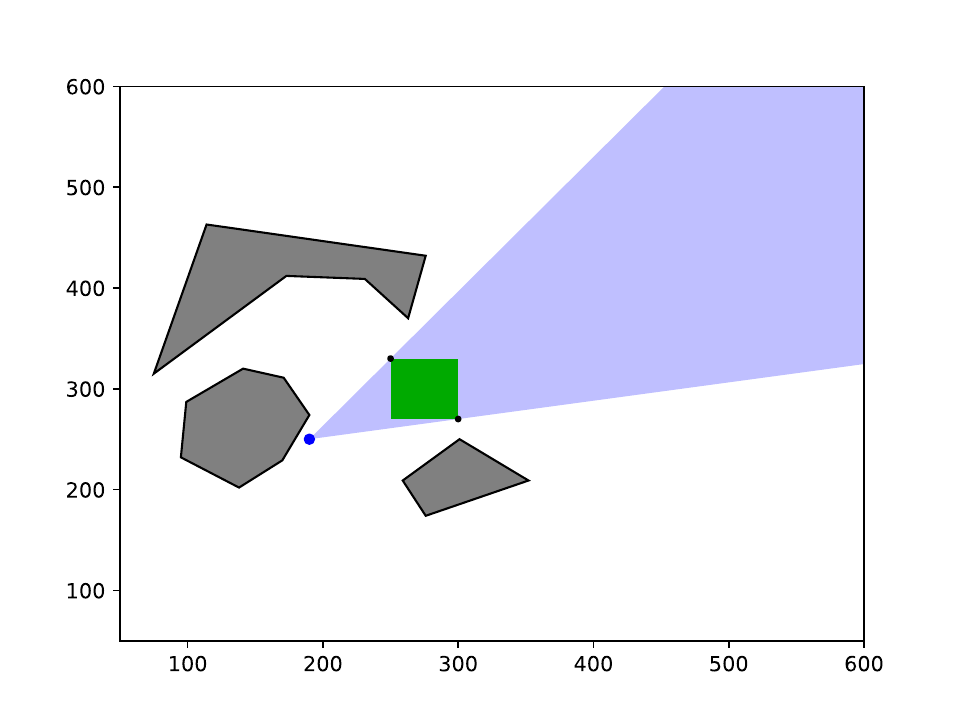} \quad \
    \includegraphics[width=0.44\hsize]{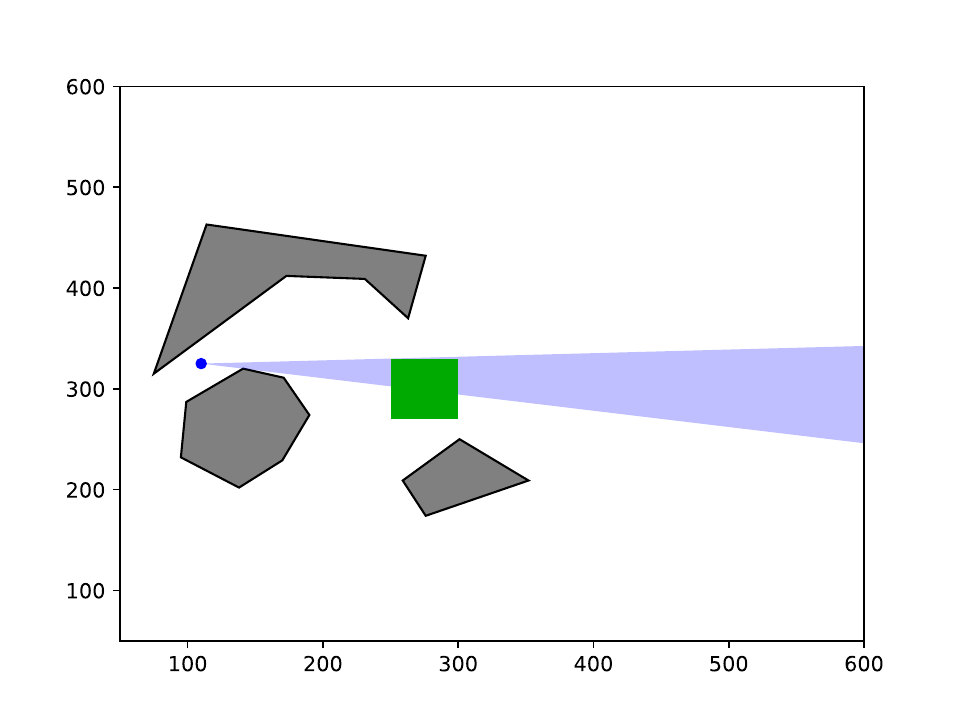}
    \put(-192,17){\scriptsize$\bm{x}$}
    \put(-177,50){\scriptsize$\bm{x}_2$}
    \put(-148,25){\scriptsize$\bm{x}_1$}
    \put(-102, 40){\scriptsize$\bm{x}$}
    \put(-153,78){\scriptsize$\sectorpoints$}
    \put(-18,51){\scriptsize$\sectorpoints$}
    \put(-61,0){\scriptsize$(b)$}
    \put(-188,0){\scriptsize$(a)$}
    \caption{$(a)$ goal-visible point $\bm{x}$ with a pair $(\bm{x}_1, \bm{x}_2)$ of minimum-covering points: all points in $\goal$ are visible to $\bm{x}$ and $\goal \subset \sectorpoints(\bm{x}, \bm{x}_1, \bm{x}_2)$. $(b)$ non-goal-visible point $\bm{x}$: at least one point in $\goal$ is not visible to $\bm{x}$.}
    \label{fig:goal-visible-point}
\end{figure}

\begin{defi}[Pair of minimum-covering  points]\label{defi:minimum-covering-points}
For a goal-visible point $\bm{x} $ in $\play$, $(\bm{x}_1, \bm{x}_2) \in \goal^2$ is called a \emph{pair of minimum-covering  points} for $\bm{x}$ if 
    $\goal \subset \sectorpoints (\bm{x}, \bm{x}_1, \bm{x}_2)$. 
\end{defi}

Let $\goalvertices$ be the set of vertices of $\goal$. As $\goal$ is a convex polygon, a pair of minimum-covering points for a goal-visible point $\bm{x}$ in $\play$ can be computed by verifying the condition $\goal \subset \sectorpoints (\bm{x}, \bm{x}_1, \bm{x}_2)$ for any two vertices $\bm{x}_1, \bm{x}_2 \in \goalvertices$. For instance, the blue region in Fig.~\ref{fig:goal-visible-point}(a) is the circular sector $\sectorpoints (\bm{x}, \bm{x}_1, \bm{x}_2)$ that covers $\goal$.

Now we are ready to show the verification approach.

\begin{lema}[Verifying a goal-visible point]\label{lema:verifying-gvp}
    A point $\bm{x} \in \freespace$ is goal-visible if and only if one of the two conditions holds: $i)$ $\bm{x} \in \goal$; $ii)$ $\bm{x} \in \play$ and $\trianglepoints (\bm{x}, \bm{x}_{1}, \bm{x}_{2}) \cap O = \emptyset $ for all $O \in \obstacleset$, where $(\bm{x}_1, \bm{x}_2) \in \goal^2$ is a pair of minimum-covering points for $\bm{x}$.
\end{lema}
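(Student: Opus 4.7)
The plan is to split on where $\bm{x}$ lies. If $\bm{x}\in\goal$, then for every $\bm{y}\in\goal$, convexity of $\goal$ gives $\overline{\bm{x}\bm{y}}\subset\goal$, which is disjoint from every $O\in\obstacleset$ by the problem setup (obstacles are disjoint from $\goal$). So condition i) alone suffices, and no geometric argument is needed. It remains to handle $\bm{x}\in\play$, where I will prove both directions of the equivalence involving $\trianglepoints(\bm{x},\bm{x}_1,\bm{x}_2)$.

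For the forward direction (goal-visible $\Rightarrow$ triangle obstacle-free), I would argue by contrapositive. Suppose some $\bm{p}\in\trianglepoints(\bm{x},\bm{x}_1,\bm{x}_2)\cap O$ exists for an $O\in\obstacleset$. Since $\trianglepoints(\bm{x},\bm{x}_1,\bm{x}_2)\subset\sectorpoints(\bm{x},\bm{x}_1,\bm{x}_2)$, the ray from $\bm{x}$ through $\bm{p}$ lies in the angular span $D[\sigma(\bm{x},\bm{x}_1),\sigma(\bm{x},\bm{x}_2)]$ and therefore crosses the opposite edge $\overline{\bm{x}_1\bm{x}_2}$ at a unique point $\bm{z}$. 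Convexity of $\goal$ together with $\bm{x}_1,\bm{x}_2\in\goal$ yields $\bm{z}\in\goal$, yet the segment $\overline{\bm{x}\bm{z}}$ passes through $\bm{p}\in O$. Hence $\bm{z}$ is not visible from $\bm{x}$, contradicting goal-visibility.

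For the reverse direction, fix an arbitrary $\bm{y}\in\goal$. Since $(\bm{x}_1,\bm{x}_2)$ is a minimum-covering pair, $\bm{y}\in\sectorpoints(\bm{x},\bm{x}_1,\bm{x}_2)$, so the direction from $\bm{x}$ to $\bm{y}$ lies in $D[\sigma(\bm{x},\bm{x}_1),\sigma(\bm{x},\bm{x}_2)]$ and the ray meets the supporting line of $\overline{\bm{x}_1\bm{x}_2}$ at a unique point $\bm{z}\in\overline{\bm{x}_1\bm{x}_2}$. If $\bm{y}\in\trianglepoints(\bm{x},\bm{x}_1,\bm{x}_2)$, the whole segment $\overline{\bm{x}\bm{y}}$ is contained in the triangle and is obstacle-free by hypothesis. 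Otherwise, I decompose $\overline{\bm{x}\bm{y}}=\overline{\bm{x}\bm{z}}\cup\overline{\bm{z}\bm{y}}$: the first piece lies in $\trianglepoints(\bm{x},\bm{x}_1,\bm{x}_2)$, hence is obstacle-free by hypothesis; the second lies in $\goal$ by convexity (both endpoints are in $\goal$), hence is obstacle-free since $\goal$ is disjoint from all obstacles. So $\overline{\bm{x}\bm{y}}$ avoids every obstacle, establishing goal-visibility.

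The step I expect to need the most care is the ray-intersection claim used in both directions: a ray emanating from $\bm{x}$ with direction in $D[\sigma(\bm{x},\bm{x}_1),\sigma(\bm{x},\bm{x}_2)]$ meets $\overline{\bm{x}_1\bm{x}_2}$ at exactly one point, and the portion of the ray up to that crossing is contained in $\trianglepoints(\bm{x},\bm{x}_1,\bm{x}_2)$. This relies on non-degeneracy, i.e., $\bm{x}\notin\overline{\bm{x}_1\bm{x}_2}$, which holds because $\bm{x}\in\play$ while $\overline{\bm{x}_1\bm{x}_2}\subset\goal$ and $\play\cap\goal=\emptyset$ by the definition of $\play$. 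A secondary subtlety is handling the boundary cases $\bm{y}\in\{\bm{x}_1,\bm{x}_2\}$ or $\bm{y}$ on the edge $\overline{\bm{x}_1\bm{x}_2}$, both of which reduce immediately to the sub-case $\bm{y}\in\trianglepoints$. Beyond these geometric checks, the rest of the proof is routine convex-geometry bookkeeping.
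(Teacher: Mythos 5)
Your proof is correct and follows essentially the same route as the paper's: case $i)$ via convexity of $\goal$ and its disjointness from obstacles, and case $ii)$ via the observation that $\trianglepoints(\bm{x},\bm{x}_1,\bm{x}_2)$ is the union of the segments from $\bm{x}$ to points of $\overline{\bm{x}_1\bm{x}_2}$, with the segment $\overline{\bm{x}\bm{y}}$ for general $\bm{y}\in\goal$ split into a piece inside the triangle and a piece inside $\goal$. The paper compresses this into an appeal to \defiref{defi:minimum-covering-points} and convexity, whereas you spell out the ray-crossing details and the non-degeneracy $\bm{x}\notin\overline{\bm{x}_1\bm{x}_2}$; the content is the same.
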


\begin{proof}
    Regarding $i)$, all points in $\goal$ are goal-visible by \defiref{defi:visibly-goal-covering-point}, as $\goal$ is obstacle-free and convex. 

    {
    Regarding $ii)$, if $\bm{x} \in \play$ and  $\trianglepoints (\bm{x}, \bm{x}_{1}, \bm{x}_{2}) \cap O = \emptyset $ for all $O \in \obstacleset$, then by \defiref{defi:minimum-covering-points} and the convexity of $\goal$, the line segment $\overline{\bm{x} \bm{y}}$ does not intersect with the obstacles for all $\bm{y} \in \goal$, i.e., $\bm{x}$ is goal-visible, where $\bm{x}_1, \bm{x}_2 \in \mathbb{R}^2$ are a pair of minimum-covering points for $\bm{x}$. Conversely, if $\bm{x} \in \play$ is goal-visible, then by  \defiref{defi:visibly-goal-covering-point}, the line segment $\overline{\bm{x} \bm{y}}$ does not intersect with the obstacles for all $\bm{y} \in \goal$ and thus for all $\bm{y}$ at the line segment $\overline{\bm{x}_1 \bm{x}_2}$. Therefore,  $\trianglepoints (\bm{x}, \bm{x}_{1}, \bm{x}_{2}) \cap O = \emptyset $ for all $O \in \obstacleset$, which completes the proof.
    }
\end{proof}

We construct a class of polygons, which can ensure the consistent goal-visible property and will be used in synthesizing pursuit winning strategies below.

\begin{defi}[Convex goal-covering polygon]\label{defi:convex-goal-covering-polygon}
    A polygon $\mathcal{R} \subset \mathbb{R}^2$ is a \emph{convex goal-covering polygon (GCP)} for a goal-visible point $\bm{x} \in \freespace$  if $i)$ $\mathcal{R}$ is convex, $ii)$ $\bm{x}  \in \mathcal{R}$,
    $iii)$ $\mathcal{R} \subset \freespace$ 
    and $iv)$ $\goal$ is covered by $\mathcal{R}$ (i.e., $\goal \subset \mathcal{R}$).
\end{defi}

As an illustration, the orange regions in Fig.~\ref{fig:convex-GCP} are convex GCPs for the underlying points $\bm{x}$, differing in whether $\bm{x}$ is at the boundary of the  convex GCP. The following lemma shows that such convex GCPs always exist and possess the following consistent property.

\begin{figure}[!h]
    \centering
    \includegraphics[width=0.44\hsize]{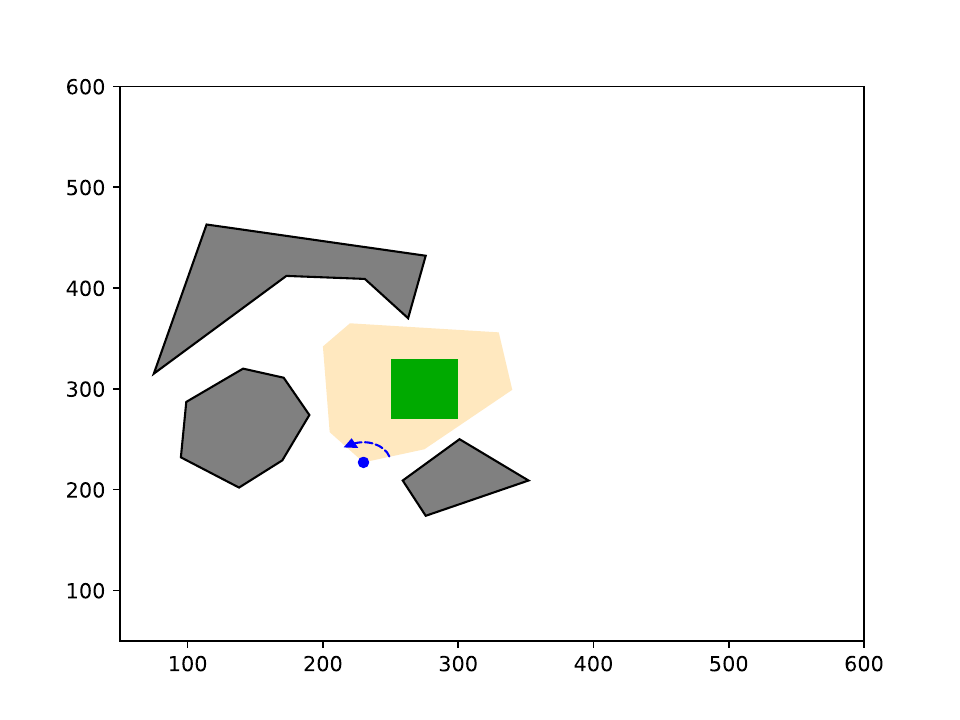} \quad \
    \includegraphics[width=0.44\hsize]{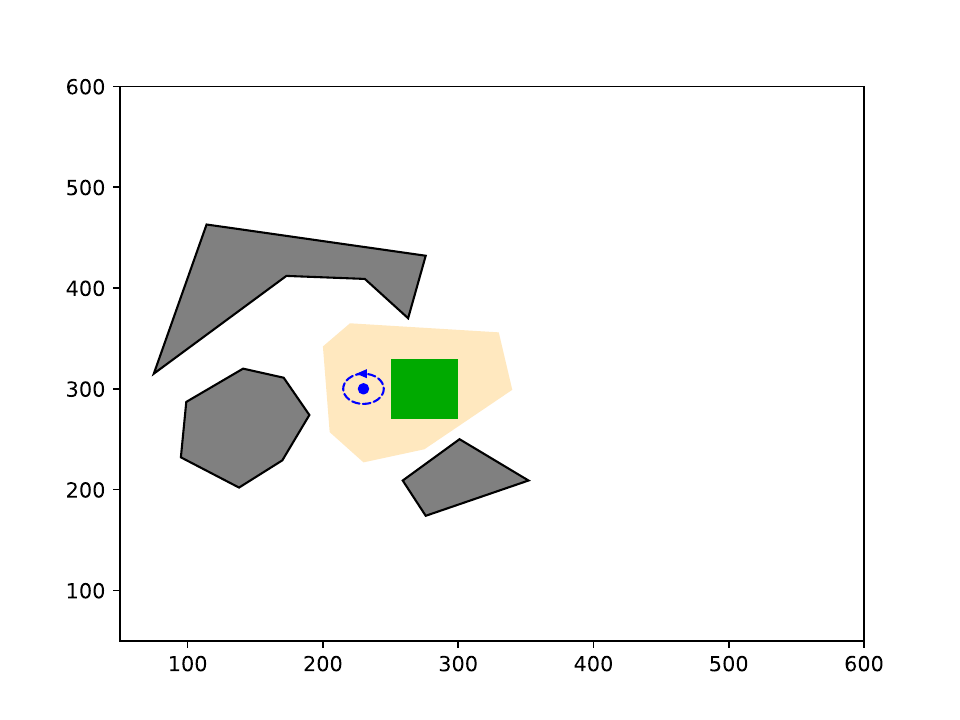}
    \put(-180,10){\scriptsize$\bm{x}$}
    \put(-51, 28){\scriptsize$\bm{x}$}
    \put(-153,57){\scriptsize$\mathcal{R}$}
    \put(-24,57){\scriptsize$\mathcal{R}$}
    \put(-61,0){\scriptsize$(b)$}
    \put(-188,0){\scriptsize$(a)$}
    \caption{Convex goal-covering polygon (GCP). The orange polygon $\mathcal{R}$ is a convex GCP for $\bm{x}$, as $i)$ $\mathcal{R}$ is convex, $ii)$ $\bm{x}  \in \mathcal{R}$, $iii)$ $\mathcal{R} \subset \freespace$ 
    and $\goal \subset \mathcal{R}$. $(a)$ if $\bm{x}$ is at the boundary of $\mathcal{R}$, the direction range $D_{\mathcal{R}}(\bm{x})$ is the blue direction range; otherwise $(b)$ $D_{\mathcal{R}}(\bm{x}) = [0, 2\pi)$.}
    \label{fig:convex-GCP}
\end{figure}

\begin{lema}[Existence and consistency]\label{lema:existence-consistency}
    If a point $\bm{x} \in \freespace$ is goal-visible, then there must exist a convex GCP for $\bm{x}$. If $\mathcal{R} \subset \freespace$ is a convex GCP for some goal-visible point, then all points in $\mathcal{R}$ are goal-visible.
\end{lema}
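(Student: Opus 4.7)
The plan is to attack the two halves independently: for existence I will construct a convex GCP explicitly, and for consistency I will exploit convexity combined with \defiref{defi:visibly-goal-covering-point}.

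For the existence part, suppose $\bm{x} \in \freespace$ is goal-visible. I would split into two cases. If $\bm{x} \in \goal$, then $\mathcal{R} \coloneqq \goal$ trivially satisfies all four conditions of \defiref{defi:convex-goal-covering-polygon}: it is convex (by assumption), contains $\bm{x}$, lies inside $\freespace$, and covers itself. If instead $\bm{x} \in \play$, I would let $(\bm{x}_1, \bm{x}_2) \in \goal^2$ be a pair of minimum-covering points for $\bm{x}$ (which exist by \defiref{defi:minimum-covering-points} and can be taken as vertices of $\goal$) and set $\mathcal{R} \coloneqq \goal \cup \trianglepoints(\bm{x}, \bm{x}_1, \bm{x}_2)$. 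Because $\goal \subset \sectorpoints(\bm{x}, \bm{x}_1, \bm{x}_2)$ and $\goal$ is convex, this set coincides with the convex hull of $\{\bm{x}\} \cup \goal$, so it is convex and contains $\bm{x}$. Obstacle-freeness of the triangle follows from case (ii) of \lemaref{lema:verifying-gvp}, and $\goal$ itself is disjoint from all obstacles, so $\mathcal{R} \subset \freespace$, with $\goal \subset \mathcal{R}$ by construction.

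For the consistency part, let $\mathcal{R}$ be a convex GCP for some goal-visible point, fix an arbitrary $\bm{y} \in \mathcal{R}$, and pick any $\bm{z} \in \goal$. Since $\goal \subset \mathcal{R}$ by \defiref{defi:convex-goal-covering-polygon}, both $\bm{y}$ and $\bm{z}$ lie in $\mathcal{R}$, and by convexity the segment $\overline{\bm{y}\bm{z}}$ is contained in $\mathcal{R}$. Because $\mathcal{R} \subset \freespace$ and $\freespace$ is disjoint from every obstacle in $\obstacleset$, this segment cannot intersect any obstacle. As $\bm{z} \in \goal$ was arbitrary, \defiref{defi:visibly-goal-covering-point} immediately yields that $\bm{y}$ is goal-visible.

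The step I expect to be most delicate is justifying that the constructed $\mathcal{R}$ actually lies in $\freespace$ rather than merely being obstacle-free: \lemaref{lema:verifying-gvp} rules out intersection with obstacles in $\obstacleset$, but one still needs $\mathcal{R} \subset \gameregion$. Under the natural convention that the boundary of $\gameregion$ is treated on par with the obstacles in the visibility definition (implicit in the formulation of goal-visibility), this containment is automatic; otherwise a short additional argument exploiting the polygonal structure of $\gameregion$ would be required.
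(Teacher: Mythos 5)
Your proposal is correct and follows essentially the same route as the paper: for existence the paper also takes $\goal$ itself when $\bm{x}\in\goal$ and otherwise builds the convex polygon on the vertex set $S\cup\{\bm{x},\bm{x}_1,\bm{x}_2\}$ (with $S$ the vertices of $\goal$ outside $\trianglepoints(\bm{x},\bm{x}_1,\bm{x}_2)$), which is exactly the convex hull of $\{\bm{x}\}\cup\goal$ that you describe as $\goal\cup\trianglepoints(\bm{x},\bm{x}_1,\bm{x}_2)$; the consistency argument via convexity and $\goal\subset\mathcal{R}$ is identical. The containment $\mathcal{R}\subset\gameregion$ that you flag is likewise passed over in the paper's proof, so your treatment is at least as careful.
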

\begin{proof}
    Regarding the existence, consider a goal-visible point $\bm{x} \in \freespace$. If $\bm{x} \in \goal$, then $\goal$ is a natural convex GCP for $\bm{x}$. If $\bm{x} \in \play$, then by \defiref{defi:minimum-covering-points} there exist a pair of minimum-covering points $(\bm{x}_1, \bm{x}_2) \in \goal^2$ for $\bm{x}$, where $\bm{x}_1 $ and $ \bm{x}_2$ are two vertices of $\goal$, i.e., $\bm{x}_1, \bm{x}_2 \in \goalvertices$. Let $S$ be the set of vertices in $\goalvertices$ which are not in $\trianglepoints(\bm{x}, \bm{x}_1, \bm{x}_2)$. Since $\goal$ is a convex polygon and $\goal \subset  \sectorpoints (\bm{x}, \bm{x}_1, \bm{x}_2)$, then the vertex set $S \cup \{\bm{x}, \bm{x}_1, \bm{x}_2 \}$ forms a convex polygon $\mathcal{R}$ that covers $\goal$. Thus, $\mathcal{R}$ satisfies the conditions $i)$, $ii)$ and $iv)$ in \defiref{defi:convex-goal-covering-polygon}. As $\bm{x}$ is goal-visible, then $\trianglepoints (\bm{x}, \bm{x}_1, \bm{x}_2)$ is obstacle-free. Recall that $\goal \subset \freespace$. Hence, we obtain that $\mathcal{R} \subset \freespace$ and therefore $\mathcal{R}$ is a convex GCP for $\bm{x}$.

    Regarding the consistency, since $\mathcal{R}$ is convex and obstacle-free and covers $\goal$, then all points in $\mathcal{R}$ have obstacle-free line segments to any point in $\goal$, i.e., all points in $\mathcal{R}$ are goal-visible.
\end{proof}

\begin{rek}\label{rek:consistent-goal-visible}
\lemaref{lema:existence-consistency} implies that, if a pursuer is at a goal-visible point, then moving towards any point in a convex GCP (it indeed exists) for this goal-visible point can ensure that the pursuer is goal-visible consistently. 
\end{rek}

The following concept is proposed to characterize the set of moving directions in a convex GCP in \rekref{rek:consistent-goal-visible} that preserve the goal-visible property.  

\begin{defi}[Direction range in a convex GCP]\label{defi:direction-range}
    Let $\mathcal{R}$ be a convex GCP for a goal-visible point $\bm{x} \in \freespace$. The \emph{direction range} of $\bm{x}$ in $\mathcal{R}$, denoted by $D_{\mathcal{R}} (\bm{x}) \subset [0, 2\pi)$, is the set of angles $\{ \sigma(\bm{x}, \bm{y}) \mid \bm{y} \neq \bm{x}, \bm{y} \in \mathcal{R} \}$.
\end{defi}

The direction range can be easily identified as follows.

\begin{rek}\label{rek:identify-direction-range}
Let $\mathcal{R}$ be a convex GCP for a goal-visible point $\bm{x}$. { As in  Fig.~\ref{fig:convex-GCP}(a), if $\bm{x} $ is at the boundary of $ \mathcal{R}$, then $D_{\mathcal{R}}(\bm{x}) = D[\theta^L, \theta^U]$, where $\theta^L, \theta^U \in [0, 2 \pi)$ are two directions of the edges from $\bm{x}$ along $ \mathcal{R}$'s boundary. As in  Fig.~\ref{fig:convex-GCP}(b), if $\bm{x} $ is in the interior of $ \mathcal{R}$, then $D_{\mathcal{R}}(\bm{x}) = [0, 2 \pi)$.}
\end{rek}

\subsection{Goal-visible pursuit winning}
We consider the first category where all pursuers in a pursuit coalition $P_c$ are goal-visible (i.e., lie at goal-visible points). The following concept for the obstacle-free case introduced in \cite{RI:65} is required for the strategy synthesis below.

\begin{defi}[Evasion region]
Given a state $X_{cj}$, the \emph{evasion region} is the set of points in $\mathbb{R}^2$ that $E_j$ can reach prior to the capture by $P_c$, regardless of $P_c$'s strategy, in the absence of obstacles.
\end{defi}

According to \cite{RY-XD-ZS-YZ-FB:22}, the closure of the evasion region, denoted by $\mathbb{E}(c,j)$, is bounded, strictly convex and given by
\begin{equation}
    \mathbb{E}(c,j) = \{ \bm{x} \in \mathbb{R}^2 \mid f_{ij} (\bm{x}, X_{ij}) \ge 0, i \in c \},
\end{equation}
where $f_{ij} : \mathbb{R}^2 \times \mathbb{R}^2 \times \mathbb{R}^2 \to \mathbb{R}$ for $i \in c$ is defined as
\begin{equation*}
    f_{ij}(\bm{x}, X_{ij}) = \| \bm{x} - \bm{x}_{P_i} \|_2 - \alpha_{ij} \| \bm{x} - \bm{x}_{E_j} \|_2 - r_i .
\end{equation*}
Note that the Apollonius circle is a special case of the evasion region's boundary when $r_i=0$. Then, the \emph{safe distance} of $X_{cj}$, denoted by $\varrho(X_{cj})$, is defined as the signed distance between $\mathbb{E}(c,j)$ and $\goal$, which can be computed using the convex optimization below. 


\begin{defi}[Safe distance]\label{defi:obstacle-free-safe-distance}
For a state $X_{cj}$, let $(\bm{x}_I, \bm{x}_G)$ be the solution of the convex optimization problem $\mathcal{P}(X_{cj})$:
\begin{equation}\label{eq:convex-pbm-IG}
\begin{aligned}
    & \underset{(\bm{x},\bm{y})\in\mathbb{R}^2\times\mathbb{R}^2}{\textup{minimize}}
	&& \| \bm{x} - \bm{y} \|_2 \\
	&\textup{ subject to}&& f_{ij}(\bm{x},X_{ij})\ge 0,\; \forall i\in c\\
    & & & A_m \bm{y} + \bm{b}_m \ge 0, \;  \forall m \in \goalindex  \,.
\end{aligned}
\end{equation}
If 1) $\bm{x}_I = \bm{x}_G$ and $f_{ij}(\bm{x}_I,X_{ij}) > 0$ for all $i \in c$, or 2) $\bm{x}_I = \bm{x}_G$ and $A_m \bm{x}_I + \bm{b}_m > 0$ for all $m \in \goalindex$, then $\varrho (X_{cj}) = - \infty$, and $\varrho (X_{cj}) = \| \bm{x}_I - \bm{x}_G \|_2$ otherwise.
\end{defi}

In order to construct the pursuit winning region and strategy, we next present a sufficient condition on the pursuit strategies to ensure the goal-visible consistency.

\begin{lema}[Consistent goal-visible pursuit strategy]\label{lema:consistent-goal-visible-p-strategy}
    At time $t$, suppose that $\bm{x}_{P_i}$ is goal-visible. If a pursuit strategy $\bm{u}_{P_i}(\tau) = [\cos \theta_i(\tau), \sin \theta_i(\tau)]^\top$ is such that $P_i$ moves towards a point in a convex GCP $\mathcal{R}_i(\tau)$ for $\bm{x}_{P_i}(\tau)$, i.e., $\theta_i(\tau) \in D_{\mathcal{R}_i(\tau)}(\bm{x}_{P_i}(\tau))$ for all $\tau \ge t$, then $\bm{x}_{P_i}(\tau)$ is goal-visible for all $\tau \ge t$.
\end{lema}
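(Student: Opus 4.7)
The plan is to derive the claim by a pointwise application of the consistency half of Lemma \ref{lema:existence-consistency}. For any fixed $\tau \ge t$, the hypothesis supplies a convex GCP $\mathcal{R}_i(\tau)$ for $\bm{x}_{P_i}(\tau)$, so by condition $ii)$ of Definition \ref{defi:convex-goal-covering-polygon}, $\bm{x}_{P_i}(\tau) \in \mathcal{R}_i(\tau)$. Every convex GCP, by conditions $i)$, $iii)$, $iv)$, is a convex subset of $\freespace$ that covers $\goal$, so the consistency part of Lemma \ref{lema:existence-consistency} tells me that every point of $\mathcal{R}_i(\tau)$ is goal-visible. Specializing this to the particular point $\bm{x}_{P_i}(\tau)$ yields goal-visibility at every $\tau \ge t$, which is exactly the conclusion.

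The direction-range condition $\theta_i(\tau) \in D_{\mathcal{R}_i(\tau)}(\bm{x}_{P_i}(\tau))$ is not literally consumed by the one-shot argument above, but it is what makes the assumed family $\{\mathcal{R}_i(\tau)\}_{\tau \ge t}$ self-consistent and thus realizable. By Definition \ref{defi:direction-range}, $\theta_i(\tau)$ points from $\bm{x}_{P_i}(\tau)$ towards some other point of $\mathcal{R}_i(\tau)$, and the convexity of $\mathcal{R}_i(\tau)$ together with $\bm{x}_{P_i}(\tau)\in \mathcal{R}_i(\tau)$ forces $\bm{x}_{P_i}(\tau + \Delta\tau) \in \mathcal{R}_i(\tau)$ for all sufficiently small $\Delta\tau > 0$. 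Hence a GCP valid at time $\tau$ persists as a GCP on a right-neighborhood of $\tau$, so a family meeting the hypothesis can actually be assembled (starting from the GCP guaranteed at $t$ by the existence half of Lemma \ref{lema:existence-consistency}). I would state this as a short remark so the reader sees that the hypothesis is nonvacuous.

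I do not anticipate any serious obstacle: the proof collapses to reading off $\bm{x}_{P_i}(\tau) \in \mathcal{R}_i(\tau)$ from condition $ii)$ of the GCP definition and then invoking the consistency part of Lemma \ref{lema:existence-consistency}. The only subtle point is to keep separate the two roles of $\mathcal{R}_i(\tau)$ — as a \emph{hypothesized} object at each $\tau$ that certifies the assumption, and as the \emph{witness} that makes $\bm{x}_{P_i}(\tau)$ goal-visible via consistency — and to avoid circularity by not attempting to rederive the existence of $\mathcal{R}_i(\tau)$ inside the proof itself, since that existence is part of the hypothesis.
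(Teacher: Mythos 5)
Your proposal is correct in substance, but the emphasis is inverted relative to the paper. Your headline argument --- read off $\bm{x}_{P_i}(\tau) \in \mathcal{R}_i(\tau)$ from condition $ii)$ of \defiref{defi:convex-goal-covering-polygon} and invoke the consistency half of \lemaref{lema:existence-consistency} --- is valid only under the strictly literal reading in which a convex GCP for $\bm{x}_{P_i}(\tau)$ is \emph{hypothesized} to exist at every $\tau \ge t$; since \defiref{defi:convex-goal-covering-polygon} is stated only for goal-visible points, that reading makes the lemma a tautology (the hypothesis already contains the conclusion). The lemma as it is actually used in \thomref{thom:goal-visible-winning} cannot be read that way: there the only goal-visibility genuinely known a priori is at time $t$, and the existence of $\mathcal{R}_i(\tau)$ for $\tau > t$ must be \emph{derived}, not assumed. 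The paper's proof is precisely what you relegate to a ``nonvacuousness remark'': start from the GCP $\mathcal{R}_i(t)$ supplied by the existence half of \lemaref{lema:existence-consistency}, use $\theta_i(t) \in D_{\mathcal{R}_i(t)}(\bm{x}_{P_i}(t))$ together with convexity of $\mathcal{R}_i(t)$ to conclude $\bm{x}_{P_i}(t+\Delta t) \in \mathcal{R}_i(t)$ for small $\Delta t > 0$, apply consistency to obtain goal-visibility (and hence a fresh GCP) at $t+\Delta t$, and iterate. So you have all the pieces, but you should promote your second paragraph to be the proof proper and demote the first: as written, the part you label the proof establishes nothing beyond the definition, while the part you label a remark is the bootstrapping argument the lemma actually needs.
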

\begin{proof}
    Since $\bm{x}_{P_i}(t)$ is goal-visible, then according to \lemaref{lema:existence-consistency} there exists a convex GCP $\mathcal{R}_i(t)$ for  $\bm{x}_{P_i}(t)$. Since $\bm{x}_{P_i}(t) \in \mathcal{R}_i(t)$ by \defiref{defi:convex-goal-covering-polygon} and $P_i$ moves towards a point in $\mathcal{R}_i(t)$ under the input $\bm{u}_{P_i}(t)$ (i.e., $\theta_i(t) \in D_{\mathcal{R}_i(t)}(\bm{x}_{P_i}(t)$), then after moving along $\bm{u}_{P_i}(t)$ with a small time interval $\Delta t > 0$, we have $\bm{x}_{P_i}(t + \Delta t) \in \mathcal{R}_i(t)$ (note that $\mathcal{R}_i(t)$ is convex). Then, the consistency in \lemaref{lema:existence-consistency} shows that $\bm{x}_{P_i}(t + \Delta t)$ is goal-visible for which there exists a convex GCP $\mathcal{R}_i(t + \Delta t)$. Thus, the conclusion follows by the similar argument for $\bm{x}_{P_i}(t + \Delta t)$ and the increasing iteration.
\end{proof}

Utilising the safe distance, goal visibility and convex GCPs, we next present the pursuit winning region and strategy for the goal-visible case of the close-to-goal pursuit winning, as illustrated in Fig.~\ref{fig:goal-visible-pursuit-winning}.

\begin{thom}[Goal-visible pursuit winning]\label{thom:goal-visible-winning}
    At time $t$, if the positions $X_{cj}$ of $P_c$ and $E_j$ are such that 
    \begin{enumerate}
        \item $\bm{x}_{P_i}$ is goal-visible for all $i \in c$; \label{itm:goal-visible-pwin-1}

        \item the safe distance is non-negative, i.e., $\varrho (X_{cj}) \ge 0$; \label{itm:goal-visible-pwin-2}
    \end{enumerate}
    then  $\bm{x}_{P_i}(\tau)$ is goal-visible for all $\tau \ge t$ and all $i \in c$, and $P_c$ can guarantee to win against $E_j$, regardless of $E_j$'s strategy, by using the pursuit strategy for all $i \in c$ as follows: 
    \begin{equation}\label{eq:goal-visible-pursuit-strategy}
       \bm{u}_{P_i}(\tau) =  [\cos \theta_i(\tau), \sin \theta_i(\tau)]^\top
    \end{equation}
    for all $\tau \ge t$, where
    \begin{equation}\label{eq:goal-visible-strategy-theta}
        \theta_i(\tau) = \textup{argmin}_{\theta \in D_{\mathcal{R}_i(\tau)}(\bm{x}_{P_i}(\tau))} |\theta - \sigma(\bm{x}_{P_i}(\tau), \bm{x}_I(\tau))|
    \end{equation}
    In \eqref{eq:goal-visible-strategy-theta},  $D_{\mathcal{R}_i(\tau)}(\bm{x}_{P_i}(\tau))$ is the direction range of $\bm{x}_{P_i} (\tau)$ in a convex GCP $\mathcal{R}_i(\tau)$ and $(\bm{x}_I(\tau), \bm{x}_G(\tau))$ is the optimal solution to $\mathcal{P}(X_{cj}(\tau))$ in \eqref{eq:convex-pbm-IG}. 
\end{thom}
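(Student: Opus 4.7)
The plan is to separate the two claims of the theorem—goal-visibility preservation for each pursuer, and the pursuit-winning property—and handle them in sequence.

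For goal-visibility I would invoke \lemaref{lema:consistent-goal-visible-p-strategy} directly. The goal-visibility hypothesis together with \lemaref{lema:existence-consistency} gives, at time $t$, a convex GCP $\mathcal{R}_i(t)$ for each $\bm{x}_{P_i}(t)$, so the direction range in \eqref{eq:goal-visible-strategy-theta} is non-empty and the strategy is well-defined. By construction the minimizer satisfies $\theta_i(\tau)\in D_{\mathcal{R}_i(\tau)}(\bm{x}_{P_i}(\tau))$, so \lemaref{lema:consistent-goal-visible-p-strategy} propagates goal-visibility to every $\tau\ge t$, which in turn keeps the direction range and hence the strategy well-defined on the whole interval $[t,\infty)$.

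The more substantial pursuit-winning part I would prove by tracking the safe distance $\varrho(X_{cj}(\tau))$ as a Lyapunov-like functional. Since $E_j$ cannot leave the closure of $\mathbb{E}(c,j)$ before being captured, maintaining $\varrho(X_{cj}(\tau))\ge 0$ throughout the play is sufficient: either $E_j$ is captured in $\play$ or it is kept out of $\goal$ forever, which exactly matches the pursuit-winning definition. I would establish this by reduction to the obstacle-free I-point strategy. Concretely, I would first invoke the standard obstacle-free result (available, e.g., in \cite{RY-XD-ZS-YZ-FB:22}) that if every pursuer moves along $\sigma(\bm{x}_{P_i},\bm{x}_I)$ then $\varrho$ is non-decreasing against any evader strategy, and then prove the subclaim that the unconstrained optimal heading $\sigma(\bm{x}_{P_i}(\tau),\bm{x}_I(\tau))$ already lies in $D_{\mathcal{R}_i(\tau)}(\bm{x}_{P_i}(\tau))$, so that the minimizer in \eqref{eq:goal-visible-strategy-theta} coincides with the I-point heading and the monotonicity transfers unchanged.

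The hard part will be this geometric subclaim. I would attack it via the KKT conditions of the convex program $\mathcal{P}(X_{cj})$: the segment $\overline{\bm{x}_I\bm{x}_G}$ is simultaneously normal to a supporting hyperplane of $\goal$ at $\bm{x}_G$ and to the boundary of $\mathbb{E}(c,j)$ at $\bm{x}_I$. Combined with the convexity of $\mathcal{R}_i$ and the inclusion $\{\bm{x}_{P_i}\}\cup\goal\subset\mathcal{R}_i$, this should force a short initial segment of the ray from $\bm{x}_{P_i}$ along $\sigma(\bm{x}_{P_i},\bm{x}_I)$ to remain inside $\mathcal{R}_i$. In the degenerate situation where a particular $\mathcal{R}_i$ fails to contain the ray, I would exploit the freedom in choosing $\mathcal{R}_i(\tau)$ and build a fresh GCP in the spirit of the existence proof of \lemaref{lema:existence-consistency}—the triangle-plus-goal region $\trianglepoints(\bm{x}_{P_i},\bm{x}_1,\bm{x}_2)\cup\goal$ is obstacle-free because $\bm{x}_{P_i}$ is goal-visible, and it can be enlarged so as to enclose the relevant direction. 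Assembling the two parts over $[t,\infty)$ then yields the theorem.
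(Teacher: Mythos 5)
Your first part (goal-visibility preservation via \lemaref{lema:existence-consistency} and \lemaref{lema:consistent-goal-visible-p-strategy}) matches the paper exactly and is fine. The gap is in your treatment of the safe distance. Your entire argument rests on the subclaim that the unconstrained heading $\sigma(\bm{x}_{P_i}(\tau),\bm{x}_I(\tau))$ always lies in $D_{\mathcal{R}_i(\tau)}(\bm{x}_{P_i}(\tau))$, so that the minimizer in \eqref{eq:goal-visible-strategy-theta} is never actually clipped. That subclaim is false, and the paper treats the failure case explicitly (it is the situation drawn in Fig.~\ref{fig:goal-visible-pursuit-winning}(b)). The point $\bm{x}_I$ sits on the boundary of the evasion region $\mathbb{E}(c,j)$, whose location is dictated by the evader and can lie in an arbitrary direction from $\bm{x}_{P_i}$; the convex GCP only guarantees that $\goal$ and $\bm{x}_{P_i}$ are inside it, and its direction range can be a narrow cone (the paper notes this happens when $\goal$ is small). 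Nothing in the KKT conditions for \eqref{eq:convex-pbm-IG} ties the direction $\bm{x}_I-\bm{x}_{P_i}$ to that cone. Your fallback --- enlarging the GCP so that it encloses the relevant direction --- cannot work in general either, because condition $iii)$ of \defiref{defi:convex-goal-covering-polygon} forces $\mathcal{R}_i\subset\freespace$, and an obstacle may block precisely the direction toward $\bm{x}_I$; if one could always enlarge the GCP to swallow that direction, the argmin in \eqref{eq:goal-visible-strategy-theta} would be pointless and the theorem could simply prescribe moving toward $\bm{x}_I$.

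The missing idea is how to handle the clipped regime. The paper's proof splits into two cases: when $\sigma(\bm{x}_{P_i},\bm{x}_I)\in D_{\mathcal{R}_i}(\bm{x}_{P_i})$ for all $i\in c$, the strategy coincides with the I-point strategy and the KKT computation gives $\dot{\varrho}(X_{cj})\ge 0$ (this part you have). When some pursuer is clipped, the paper \emph{concedes} that $\varrho$ may strictly decrease, and instead argues that the decrease cannot drive $\varrho$ negative: because each $\bm{x}_{P_i}(\tau)$ stays goal-visible, every direction from $\bm{x}_{P_i}(\tau)$ into $\goal$ lies in $D_{\mathcal{R}_i(\tau)}(\bm{x}_{P_i}(\tau))$, so as $\mathbb{E}(c,j)$ approaches $\goal$ the optimizer $\bm{x}_I$ approaches $\goal$ and the heading $\sigma(\bm{x}_{P_i},\bm{x}_I)$ re-enters the direction range before $\mathbb{E}(c,j)$ meets the interior of $\goal$, returning the system to the monotone case. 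Without some version of this recovery argument (or a genuinely new invariant that is monotone even in the clipped regime), your proof does not close.
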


\begin{figure}
    \centering
    \quad \includegraphics[width=0.85\hsize,height=0.35\hsize]{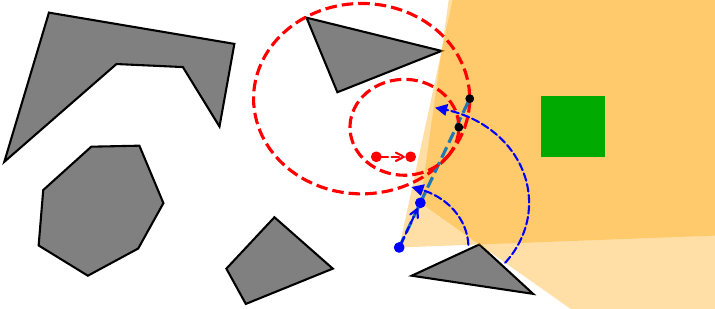}
    \put(-235,75){\scriptsize$(a)$}
    \put(-139,77){\scriptsize$\mathbb{E}$}
    \put(-116,54){\scriptsize$\mathbb{E}'$}
    \put(-118,41){\scriptsize$\bm{x}_{E_j}$}
    \put(-101,51){\scriptsize$\bm{x}_{E_j}'$}
    \put(-72,63){\scriptsize$\bm{x}_I$}
    \put(-76,45){\scriptsize$\bm{x}_I'$}
    \put(-107,13){\scriptsize$\bm{x}_{P_i}$}
    \put(-106,27){\scriptsize$\bm{x}_{P_i}'$}
    \put(-77,29){\scriptsize$\phi$}
    \put(-55,35){\scriptsize$\phi'$}
    \put(-20,65){\scriptsize$\mathcal{R}_i$}
    \put(-34,7){\scriptsize$\mathcal{R}_i'$}
    \put(-50,-7){}
        
    \quad \includegraphics[width=0.85\hsize,height=0.35\hsize] {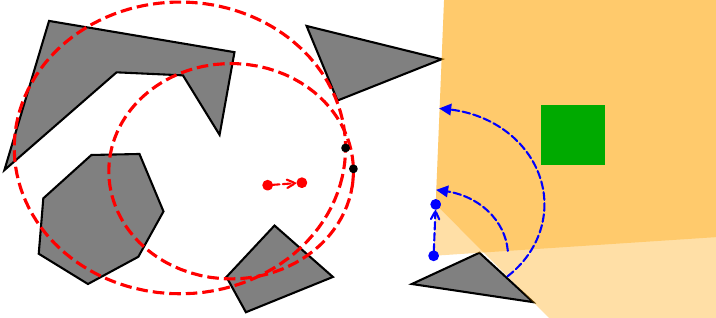}
    \put(-235,75){\scriptsize$(b)$}
    \put(-162,81){\scriptsize$\mathbb{E}$}
    \put(-171,55){\scriptsize$\mathbb{E}'$}
    \put(-149,42){\scriptsize$\bm{x}_{E_j}$}
    \put(-127,42){\scriptsize$\bm{x}_{E_j}'$}
    \put(-109,50){\scriptsize$\bm{x}_I$}
    \put(-106,37){\scriptsize$\bm{x}_I'$}
    \put(-99,15){\scriptsize$\bm{x}_{P_i}$}
    \put(-99,28){\scriptsize$\bm{x}_{P_i}'$}
    \put(-66,30){\scriptsize$\phi$}
    \put(-50,35){\scriptsize$\phi'$}
    \put(-20,65){\scriptsize$\mathcal{R}_i$}
    \put(-34,9){\scriptsize$\mathcal{R}_i'$}
    \caption{Goal-visible pursuit winning. If $P_i$ is goal-visible and the safe distance (i.e., the distance between the region $\mathbb{E}$ (red boundary) and $\goal$) is non-negative, then $P_i$ guarantees to win against $E_j$, which also applies to a pursuit coalition. Let $\mathcal{R}_i$ be a (orange) convex GCP for $\bm{x}_{P_i}$ with the direction angle span $\phi$. The pursuit winning strategy, which ensures the consistent goal-visible property, is as follows. $(a)$ If the closest point $\bm{x}_I$ in $\mathbb{E}$ to $\goal$ is in the angle span $\phi$, then $P_i$ moves towards $\bm{x}_I$. $(b)$ If $\bm{x}_I$ is not in $\phi$, then $P_i$ moves along one edge of $\mathcal{R}_i$. After a small time step, $P_i$ reaches $\bm{x}_{P_i}'$ which is goal-visible and has a convex GCP $\mathcal{R}_i'$ with the angle span $\phi'$. The evader and the closest point are updated to $\bm{x}_{E_j}'$ and $\bm{x}_I'$, respectively.} 
    \label{fig:goal-visible-pursuit-winning}
\end{figure}

\begin{proof}
    The conclusion follows if we can prove that under the pursuit strategy \eqref{eq:goal-visible-pursuit-strategy}, $\bm{x}_{P_i}(\tau)$ is goal-visible for all $i \in c$ and  the safe distance is non-negative, i.e., $\varrho (X_{cj}(\tau)) \ge 0$ (note that $\bm{x}_{E_j}(\tau) \in \mathbb{E}(c,j)$ before $E_j$ is captured), for all $\tau \ge t$. This holds for $\tau = t$ due to the conditions \ref{itm:goal-visible-pwin-1}) and \ref{itm:goal-visible-pwin-2}).
    
    We first prove the consistent goal-visible property under the strategy \eqref{eq:goal-visible-pursuit-strategy}. Note that $\bm{x}_{P_i}(t)$ is goal-visible and \eqref{eq:goal-visible-pursuit-strategy} indicates that $\theta_i(\tau) \in D_{\mathcal{R}_i(\tau)}(\bm{x}_{P_i}(\tau))$ for all $\tau \ge t$. Then, by \lemaref{lema:consistent-goal-visible-p-strategy}  we obtain that $\bm{x}_{P_i}(\tau)$ is goal-visible for all $\tau \ge t$ and $i \in c$, where a convex GCP $\mathcal{R}_i(\tau)$ exists by the proof of \lemaref{lema:consistent-goal-visible-p-strategy}.

    We show the consistent non-negative safe distance under the strategy \eqref{eq:goal-visible-pursuit-strategy}, i.e., $\varrho (X_{cj}(\tau)) \ge 0$ for all $\tau \ge t$. At time $\tau$, let $(\bm{x}_I(\tau), \bm{x}_G(\tau))$ be the optimal solution to $\mathcal{P}(X_{cj}(\tau))$ in \eqref{eq:convex-pbm-IG} and $\theta_i^{\star}(\tau) = \sigma(\bm{x}_{P_i}(\tau), \bm{x}_I(\tau))$. There are two possible cases.

    Case 1: $\theta_i^{\star}(\tau) \in D_{\mathcal{R}_i(\tau)}(\bm{x}_{P_i}(\tau))$ for all $i \in c$, where one pursuer case is shown in Fig.~\ref{fig:goal-visible-pursuit-winning}(a) and $\phi$ is the angle span of $D_{\mathcal{R}_i(\tau)}(\bm{x}_{P_i}(\tau))$. Then, \eqref{eq:goal-visible-strategy-theta} implies that $\theta_i(\tau) = \theta^{\star}_i(\tau)$ and thus the strategy \eqref{eq:goal-visible-pursuit-strategy} becomes
    \begin{equation}\label{eq:interception-point-strategy}
        \bm{u}_{P_i}(\tau) =  [\cos \theta_i^{\star}(\tau), \sin \theta_i^{\star}(\tau)]^\top = \frac{\bm{x}_I(\tau) - \bm{x}_{P_i}(\tau)}{ \| \bm{x}_I(\tau) - \bm{x}_{P_i}(\tau) \|_2 } \,.
    \end{equation}
    If $\varrho(X_{cj}(\tau)) \ge 0$, then $(\bm{x}_I(\tau), \bm{x}_G(\tau))$ is the unique optimal solution to $\mathcal{P}(X_{cj}(\tau))$ in \eqref{eq:convex-pbm-IG}, as $\mathbb{E}(c,j)$ is strictly convex and $\goal$ is convex. For notation simplicity, we let $d(\bm{x}, \bm{y}) = \| \bm{x} - \bm{y} \|_2$ for $\bm{x}, \bm{y} \in \mathbb{R}^2$, and omit the time argument $\tau$ temporarily here. According to the \emph{Karush-Kuhn-Tucker (KKT) conditions}, the solution $(\bm{x}_I, \bm{x}_G)$ satisfies
    \begin{equation} \label{eq:KKT-conditions}
    \begin{aligned}
        &\bm{0}= \nabla_{\bm{x}} d( \bm{x}_I, \bm{x}_G) + \sum\nolimits_{i \in c} \lambda_i \nabla_{\bm{x}} f_{ij} (\bm{x}_I, X_{ij}) \\
    	&\bm{0}= \nabla_{\bm{y}} d (\bm{x}_I, \bm{x}_G) + \sum\nolimits_{m \in \goalindex} \lambda_m A_m \\
    	& f_{ij} (\bm{x}_I, X_{ij}) \ge 0, \ \lambda_i \leq 0, \ \lambda_i f_{ij}(\bm{x}_I, X_{ij}) = 0,  \ i \in c\\
    	& A_m \bm{x}_G + \bm{b}_m \ge 0,\lambda_m \leq 0, \lambda_m (A_m \bm{x}_G + \bm{b}_m) = 0, m \in \goalindex
    \end{aligned}
    \end{equation}
    where $\lambda_i, \lambda_m \in \mathbb{R}$ are the Lagrange multipliers, and $\nabla_{\bm{x}}$ and $\nabla_{\bm{y}}$ represent the gradient operators with respect to $\bm{x}$ and $\bm{y}$, respectively. The slackness condition on $\bm{x}_G$ in \eqref{eq:KKT-conditions} implies that the index set $\goalindex$ can be classified into two disjoint index sets $\goalindex^{=0}$ and $\goalindex^{>0}$ where
    \begin{equation}\label{eq:slack-condition-G}
    \begin{cases}
    A_m \bm{x}_G + \bm{b}_m = 0, \ \lambda_m \leq 0, \quad \textup{if } m \in \goalindex^{=0} \\
    A_m \bm{x}_G + \bm{b}_m > 0, \ \lambda_m = 0, \quad \textup{if } m \in \goalindex^{>0}.
    \end{cases}
    \end{equation} 
    Then, the speed of  the closure $\mathbb{E}(c,j)$  of the evasion region moving away from $\goal$, i.e., $\dot{\varrho}(X_{cj})$, can be computed as  
    \begin{equation*}
    \begin{aligned}
    	\dot{\varrho}& (X_{cj}) = \dod{}{t}d ( \bm{x}_I, \bm{x}_G ) \\
         &=  \nabla_{\bm{x}}^{\top} d( \bm{x}_I, \bm{x}_G ) \dot{\bm{x}}_I + \nabla_{\bm{y}}^{\top} d ( \bm{x}_I, \bm{x}_G ) \dot{\bm{x}}_G \\
    	& = - \sum\nolimits_{i \in c} \!\! \lambda_i \nabla_{\bm{x}}^{\top} f_{ij} (\bm{x}_I, X_{ij} ) \dot{\bm{x}}_I - \sum\nolimits_{m \in \goalindex} \! \lambda_m A_m^{\top} \dot{\bm{x}}_G \\
        & = - \sum\nolimits_{i \in c} \!\! \lambda_i \nabla_{\bm{x}}^{\top} f_{ij} (\bm{x}_I, X_{ij} ) \dot{\bm{x}}_I - \sum\nolimits_{m \in \goalindex^{=0}} \! \lambda_m A_m^{\top} \dot{\bm{x}}_G \\
    	& = - \sum\nolimits_{i \in c} \!\! \lambda_i \nabla_{\bm{x}}^{\top} f_{ij} (\bm{x}_I, X_{ij} ) \dot{\bm{x}}_I
    \end{aligned}
    \end{equation*}
    where the third and forth equalities follow from \eqref{eq:KKT-conditions} and \eqref{eq:slack-condition-G}, respectively. The last equality follows due to the fact that for $m \in \goalindex^{=0}$, $\bm{x}_G$ is always at the boundary $A_m \bm{x}_G + \bm{b}_m = 0$, and thus $A_m^{\top} \dot{\bm{x}}_G = 0$. Then, by following the similar argument in the proof of \cite[Theorem 3.1]{RY-XD-ZS-YZ-FB:22}, we can obtain that if every pursuer in $P_c$ adopts the strategy \eqref{eq:interception-point-strategy}, then $\dot{\varrho}(X_{cj}) \ge 0$. This guarantees that the safe distance will not strictly decrease and therefore the non-negative property holds for all $\tau \ge t$, as $\varrho(X_{cj}(\tau)) \ge 0$ initially, i.e., when $\tau = t$.

    Case 2: there exist a subset $\bar{c}$ of pursuers in $c$ such that $\theta^{\star}_i(\tau) \notin D_{\mathcal{R}_i(\tau)}(\bm{x}_{P_i}(\tau))$ for all $i \in \bar{c}$, where one pursuer case is in Fig.~\ref{fig:goal-visible-pursuit-winning}(b). The strategy \eqref{eq:goal-visible-pursuit-strategy} indicates that each pursuer in $\bar{c}$ moves along the direction $\theta_i(\tau)$ which is different from $\theta_i^{\star}(\tau)$. Thus, the safe distance may decrease, i.e., $\dot{\varrho}(X_{cj}(\tau)) < 0$, as $\mathbb{E}(c,j)$ may approach $\goal$. However, since $\bm{x}_{P_i}(\tau)$ is goal-visible for all $\tau \ge t$ and $i \in c$ under \eqref{eq:goal-visible-pursuit-strategy}, the direction from $\bm{x}_{P_i}(\tau)$ to any point in $\goal$ is in $D_{\mathcal{R}_i(\tau)}(\bm{x}_{P_i}(\tau))$. Thus, there must exist $\tau' \ge \tau$ such that $\theta^{\star}_i(\tau') \in D_{\mathcal{R}_i(\tau')}(\bm{x}_{P_i}(\tau'))$ for all $i \in c$ before $\mathbb{E}(c,j)$ intersects with the interior of $\goal$, i.e., a negative safe distance. This implies that we go back to Case 1 for which the non-negative safe distance is guaranteed under the strategy \eqref{eq:interception-point-strategy}. Thus, we complete the proof.
\end{proof}

\begin{rek}\label{rek:coalition-reduction}
    Since the safe distance in \defiref{defi:obstacle-free-safe-distance} is defined by ignoring obstacles, the proof of Lemma 3.3 in \cite{RY-XD-ZS-YZ-FB:22} where at most three pursuers are needed to ensure the winning in the three-dimensional space, can be easily adapted to prove that if a pursuit coalition is able to defend the goal region against an evader via the goal-visible pursuit winning, then at most two pursuers in the coalition are necessarily needed.
\end{rek}

Note that the pursuit winning strategy in \thomref{thom:goal-visible-winning} requires a convex GCP $\mathcal{R}_i(\tau)$ for the goal-visible point $\bm{x}_{P_i}(\tau)$ at time $\tau$ for all $i \in c$, which exists by \lemaref{lema:existence-consistency}. If $(\bm{x}_1, \bm{x}_2)$ is a pair of minimum-covering points for $\bm{x}_{P_i}(\tau)$, then as Fig.~\ref{fig:goal-visible-point}(a) indicates, there exists a convex GCP $\mathcal{R}_i(\tau)$ such that
\begin{equation} \label{eq:minimum-covering-direction-range}
   D_{\mathcal{R}_i(\tau)}(\bm{x}_{P_i}(\tau)) = D [\sigma(\bm{x}_{P_i}(\tau), \bm{x}_1), \sigma(\bm{x}_{P_i}(\tau), \bm{x}_2)] .
\end{equation}
However, if $\goal$ is a small region, then the direction range \eqref{eq:minimum-covering-direction-range} has a small angle span and thus under the strategy \eqref{eq:goal-visible-pursuit-strategy}, the pursuer will finally move closely around or in $\goal$. In order to achieve a larger range of movement for the pursuers, we next present a method to construct another class of convex GCPs. Let $\obstaclevertices$ be the set of the vertices of obstacles in $\obstacleset$. We first introduce the following specific vertices. Recall that for two angles $\theta_1 $ and $ \theta_2$, $D[\theta_1, \theta_2]$ is the set of angles from $\theta_1$ to $\theta_2$ in a counterclockwise direction, and $| D[\theta_1, \theta_2]|$ is the angle span of $D[\theta_1, \theta_2]$.

\begin{defi}[First-visible obstacle vertex] \label{defi:first-visible-obstacle-vertex}
    For a goal-visible point $\bm{x} \in \play$, 
    a pair of obstacle vertices $(\bm{y}^L, \bm{y}^U) \in \obstaclevertices^2$ is \emph{first-visible} for $\bm{x}$ if
    \begin{equation}\label{eq:first-visible-obstacle-vertex-L}
    \begin{aligned}
        \bm{y}^L = &  \textup{ argmin}_{\bm{y} \in \obstaclevertices}& & |D[\sigma(\bm{x}, \bm{y}), \sigma(\bm{x}, \bm{x}^L)]| \\
        & \textup{ subject to }& & |D[\sigma(\bm{x}, \bm{y}), \sigma(\bm{x}, \bm{x}^L)]| \leq \pi \\
        & & & |D[\sigma(\bm{x}, \bm{y}), \sigma(\bm{x}, \bm{x}^U)]| \leq \pi \\
        & & & \overline{\bm{x} \bm{y}} \textup{ is obstacle-free}
    \end{aligned}
    \end{equation}
    \begin{equation}\label{eq:first-visible-obstacle-vertex-U}
    \begin{aligned}
        \bm{y}^U = &  \textup{ argmin}_{\bm{y} \in \obstaclevertices}& & |D[\sigma(\bm{x}, \bm{x}^U), \sigma(\bm{x}, \bm{y})]| \\
        & \textup{ subject to }& & |D[\sigma(\bm{x}, \bm{x}^U), \sigma(\bm{x}, \bm{y})]| \leq \pi \\
        & & & |D[\sigma(\bm{x}, \bm{x}^L), \sigma(\bm{x}, \bm{y})]| \leq \pi \\
        & & & \overline{\bm{x} \bm{y}} \textup{ is obstacle-free}
    \end{aligned}
    \end{equation}
    where $(\bm{x}^L, \bm{x}^U)$ is a pair of minimum-covering points for $\bm{x}$. If the problem \eqref{eq:first-visible-obstacle-vertex-L} is infeasible, we define $\bm{y}^L = 2 \bm{x} - \bm{x}^U$, and if the problem \eqref{eq:first-visible-obstacle-vertex-U} is infeasible, we define $\bm{y}^U = 2 \bm{x} - \bm{x}^L$.
\end{defi}

\begin{rek}
    The first-visible obstacle vertex in \defiref{defi:first-visible-obstacle-vertex} is described geometrically in Fig.~\ref{fig:first-observable-obstacle-visible}. 
    Recall that $(\bm{x}^L, \bm{x}^U)$ is a pair of minimum-covering points for $\bm{x}$. Then, $\bm{y}^L$ is the first visible obstacle vertex
    when rotating $\sigma(\bm{x}, \bm{x}^L)$ centered at $\bm{x}$ in a clockwise direction (i.e., span $\phi_1$) before hitting $\sigma(\bm{x}^U, \bm{x})$ (i.e., until span $\phi_2$). If there is no such vertex, $\bm{y}^L = 2 \bm{x} - \bm{x}^U$ is the symmetry point to $\bm{x}^U$ with respect to $\bm{x}$. The point $\bm{y}^U$ is defined similarly but is counterclockwise from $\sigma(\bm{x}, \bm{x}^U)$.    
\end{rek}

\begin{figure}
    \centering
    \includegraphics[width=0.92\hsize,height=0.45\hsize]{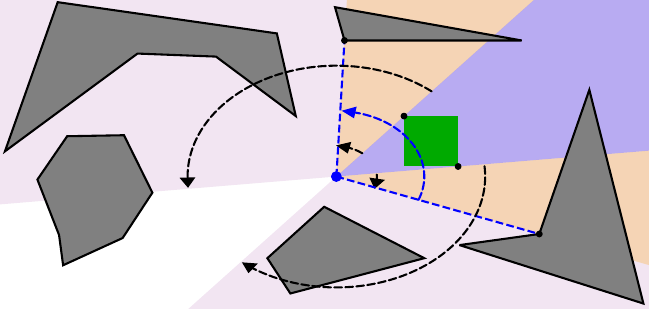}
    \put(-121,96){\scriptsize$\bm{y}^U$}
    \put(-46, 33){\scriptsize$\bm{y}^L$}
    \put(-93,73){\scriptsize$\bm{x}^U$}
    \put(-72,46){\scriptsize$\bm{x}^L$}
    \put(-122,44){\scriptsize$\bm{x}$}
    \put(-93,45){\scriptsize$\phi_1$}
    \put(-80,10){\scriptsize$\phi_2$}
    \put(-107,62){\scriptsize$\phi_3$}
    \put(-167,71){\scriptsize$\phi_4$}
    \put(-83,61){\scriptsize$\phi_5$}
    \caption{First-visible obstacle vertices $\bm{y}^L$ and $\bm{y}^U$. For a goal-visible point $\bm{x}$ with a pair of minimum-covering points $(\bm{x}^L, \bm{x}^U)$, $\bm{y}^L$ is the first visible obstacle vertex
    when rotating $\sigma(\bm{x}, \bm{x}^L)$ in a clockwise direction (i.e., span $\phi_1$) before hitting $\sigma(\bm{x}^U, \bm{x})$ (i.e., until span $\phi_2$). If there is no such vertex, $\bm{y}^L = 2 \bm{x} - \bm{x}^U$ is the symmetry point to $\bm{x}^U$ with respect to $\bm{x}$. The point $\bm{y}^U$ is defined similarly but in a counterclockwise direction from $\sigma(\bm{x}, \bm{x}^U)$.}
    \label{fig:first-observable-obstacle-visible}
\end{figure}

Using the first-visible obstacle vertices, we next construct a class of convex GCPs that achieve a larger range of movement than \eqref{eq:minimum-covering-direction-range} for the pursuers, illustrated in Fig.~\ref{fig:convex_GCP_construction_proof} if $\bm{x}_{P_i} \in \play$.

\begin{thom}[Constructing convex GCPs] \label{thom:construct-convex-GCP}
    For a goal-visible $\bm{x}_{P_i} \in \freespace$, 
    if $\bm{x}_{P_i} \in \goal$, then there exists a convex GCP $\mathcal{R}_i$ such that $D_{\mathcal{R}_i} (\bm{x}_{P_i}) = [0, 2\pi)$; if $\bm{x}_{P_i} \in \play$, then there exists a convex GCP $\mathcal{R}_i$ such that $ D_{\mathcal{R}_i} (\bm{x}_{P_i}) $ is given by
    \begin{equation}
        \begin{cases}
            D[\theta^L, \theta^U], & \textup{if } |D[\theta^L, \theta^U]| \leq \pi \\
            D[\theta^L, \theta^L + \pi] \textup{ or } D[\theta^U - \pi, \theta^U], & \textup{otherwise}
        \end{cases}
    \end{equation}
    where $\theta^L = \sigma( \bm{x}_{P_i}, \bm{y}^L)$, $\theta^U = \sigma( \bm{x}_{P_i}, \bm{y}^U)$ and  $(\bm{y}^L, \bm{y}^U)$ is a pair of first-visible obstacle vertices for  $\bm{x}_{P_i}$.
\end{thom}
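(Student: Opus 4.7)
The plan is to handle the two cases of the theorem separately; the case $\bm{x}_{P_i}\in\goal$ is immediate, and the substantive work lies in the case $\bm{x}_{P_i}\in\play$, where the construction must be anchored at the first-visible obstacle vertices.

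For $\bm{x}_{P_i}\in\goal$, I would take $\mathcal{R}_i=\goal$ when $\bm{x}_{P_i}$ lies in the interior of $\goal$: all four conditions in \defiref{defi:convex-goal-covering-polygon} are immediate, and \rekref{rek:identify-direction-range} gives $D_{\mathcal{R}_i}(\bm{x}_{P_i})=[0,2\pi)$. When $\bm{x}_{P_i}$ lies on $\partial\goal$, a small convex enlargement of $\goal$ into the adjacent portion of $\freespace$ still yields a valid convex GCP (the enlargement fits into $\freespace$ because the closures of $\goal$ and the obstacles are disjoint) and places $\bm{x}_{P_i}$ in the interior.

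For the substantive case $\bm{x}_{P_i}\in\play$ with $|D[\theta^L,\theta^U]|\le\pi$, let $(\bm{x}^L,\bm{x}^U)$ be a minimum-covering pair giving rise to $(\bm{y}^L,\bm{y}^U)$. I would define $\mathcal{R}_i$ as the convex hull of $\{\bm{x}_{P_i},\bm{y}^L,\bm{y}^U\}\cup\goalvertices$. Convexity and the memberships $\bm{x}_{P_i}\in\mathcal{R}_i$ and $\goal\subset\mathcal{R}_i$ are immediate, the latter because $\goal=\mathrm{conv}(\goalvertices)$. Because the angular constraints in \defiref{defi:first-visible-obstacle-vertex} place $\bm{y}^L$ and $\bm{y}^U$ outside the sector $\sectorpoints(\bm{x}_{P_i},\bm{x}^L,\bm{x}^U)$ that contains $\goalvertices$, the two edges of $\mathcal{R}_i$ incident to $\bm{x}_{P_i}$ are exactly $\overline{\bm{x}_{P_i}\bm{y}^L}$ and $\overline{\bm{x}_{P_i}\bm{y}^U}$; hence \rekref{rek:identify-direction-range} yields $D_{\mathcal{R}_i}(\bm{x}_{P_i})=D[\theta^L,\theta^U]$.

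The main obstacle is to verify $\mathcal{R}_i\subset\freespace$. Since $\goal$ and $\trianglepoints(\bm{x}_{P_i},\bm{x}^L,\bm{x}^U)$ are obstacle-free (the latter by \lemaref{lema:verifying-gvp} since $\bm{x}_{P_i}$ is goal-visible), it suffices to rule out obstacles in the two residual side wedges bounded by $\overline{\bm{x}_{P_i}\bm{y}^L}$, $\overline{\bm{x}_{P_i}\bm{x}^L}$, and $\partial\goal$ on one side, and symmetrically on the other. I plan to argue by contradiction: if an obstacle $O$ intersected the $\bm{y}^L$-side wedge, then since $\overline{\bm{x}_{P_i}\bm{y}^L}$ and $\overline{\bm{x}_{P_i}\bm{x}^L}$ are both obstacle-free (the former by the feasibility constraints of \defiref{defi:first-visible-obstacle-vertex}, the latter by goal visibility), $O$ must contribute a vertex visible from $\bm{x}_{P_i}$ whose angle from $\bm{x}_{P_i}$ lies strictly between $\sigma(\bm{x}_{P_i},\bm{y}^L)$ and $\sigma(\bm{x}_{P_i},\bm{x}^L)$ in the clockwise sense, contradicting the minimality of $\bm{y}^L$ in \eqref{eq:first-visible-obstacle-vertex-L}; the symmetric argument handles the $\bm{y}^U$ side. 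Finally, for $|D[\theta^L,\theta^U]|>\pi$, the above hull would place $\bm{x}_{P_i}$ in its interior, contradicting the claimed one-sided direction range; I would then intersect that hull with a closed half-plane whose bounding line passes through $\bm{x}_{P_i}$ and retains $\goal$ together with one of $\bm{y}^L,\bm{y}^U$, obtaining a convex GCP with direction range $D[\theta^L,\theta^L+\pi]$ or $D[\theta^U-\pi,\theta^U]$ as claimed.
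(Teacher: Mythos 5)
Your proposal is correct and follows essentially the same route as the paper: the goal case is handled by a small convex enlargement of $\goal$, the play case by attaching to $\mathrm{conv}(\{\bm{x}_{P_i}\}\cup\goalvertices)$ two side regions reaching out to $\bm{y}^L$ and $\bm{y}^U$ whose obstacle-freeness is established by contradiction with the minimality in \defiref{defi:first-visible-obstacle-vertex}, and the $|D[\theta^L,\theta^U]|>\pi$ case by capping at a half-plane through $\bm{x}_{P_i}$. The only differences are presentational (you take a convex hull of explicit points where the paper builds the polygon incrementally via the auxiliary intersection point $\bm{x}''$, and you intersect with a half-plane where the paper expands a triangle up to one), and your wedge-contradiction argument is at the same level of geometric detail as the paper's.
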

\begin{proof}
    Consider a goal-visible point $\bm{x}_{P_i} \in \freespace$. If $\bm{x}_{P_i} \in \goal$, then $\goal$ is a natural convex GCP for $\bm{x}_{P_i}$. Note that $\goal$ does not intersect with the obstacles. We can expand $\goal$ by moving each edge of $\goal$ along its outward normal vector with a small distance and obtain a larger convex GCP $\mathcal{R}_i$ for $\bm{x}_{P_i}$ such that $\bm{x}_{P_i}$ is in the interior of $\mathcal{R}_i$. Therefore, by \rekref{rek:identify-direction-range} we have $D_{\mathcal{R}_i} (\bm{x}_{P_i}) = [0, 2 \pi)$.

    We next construct a convex GCP for $\bm{x}_{P_i} \in \play$, depicted in Fig.~\ref{fig:convex_GCP_construction_proof}. Let $(\bm{x}^L, \bm{x}^U)$ be a pair of minimum-covering points and $(\bm{y}^L, \bm{y}^U)$ a pair of first-visible obstacle vertices for  $\bm{x}_{P_i}$. Also let $\theta^L = \sigma( \bm{x}_{P_i}, \bm{y}^L)$ and $\theta^U = \sigma( \bm{x}_{P_i}, \bm{y}^U)$. 
    There are two cases depending on whether $|D[\theta^L, \theta^U]| \leq \pi$ holds, which will be discussed separately.

    Case 1: $|D[\theta^L, \theta^U]| \leq \pi$, shown in Fig.~\ref{fig:convex_GCP_construction_proof}(a). We first focus on the part on $\theta^L$. We extend the farther boundary of $\goal$ that goes through $\bm{x}^L$, and then obtain an intersection point $\bm{x}''$ with the line segment $\overline{\bm{x}_{P_i} \bm{y}^L}$ (if there is no intersection as in the case for $\theta^U$, we take $\bm{y}^U$). For notation convenience, we define the following polygons via vertices:
    \begin{equation*}
    \begin{aligned}
        & \mathcal{R}_0 : \textup{counterclockwise vertices } \bm{x}_{P_i}, \bm{x}^L, \bm{x}', \bm{x}^U \\
        & \mathcal{R}_1 : \textup{counterclockwise vertices } \bm{x}_{P_i}, \bm{x}'', \bm{x}', \bm{x}^U \\
        & \mathcal{R}_2 : \textup{counterclockwise vertices } \bm{x}_{P_i}, \bm{x}'', \bm{x}^L \,.
    \end{aligned}
    \end{equation*}
    We can use the same argument if there is more than one vertex (i.e., more than $\bm{x}'$) of $\goal$ from $\bm{x}^L$ to $\bm{x}^U$ counterclockwise. 

    \begin{figure}[t]
    \centering
    \includegraphics[width=0.44\hsize]{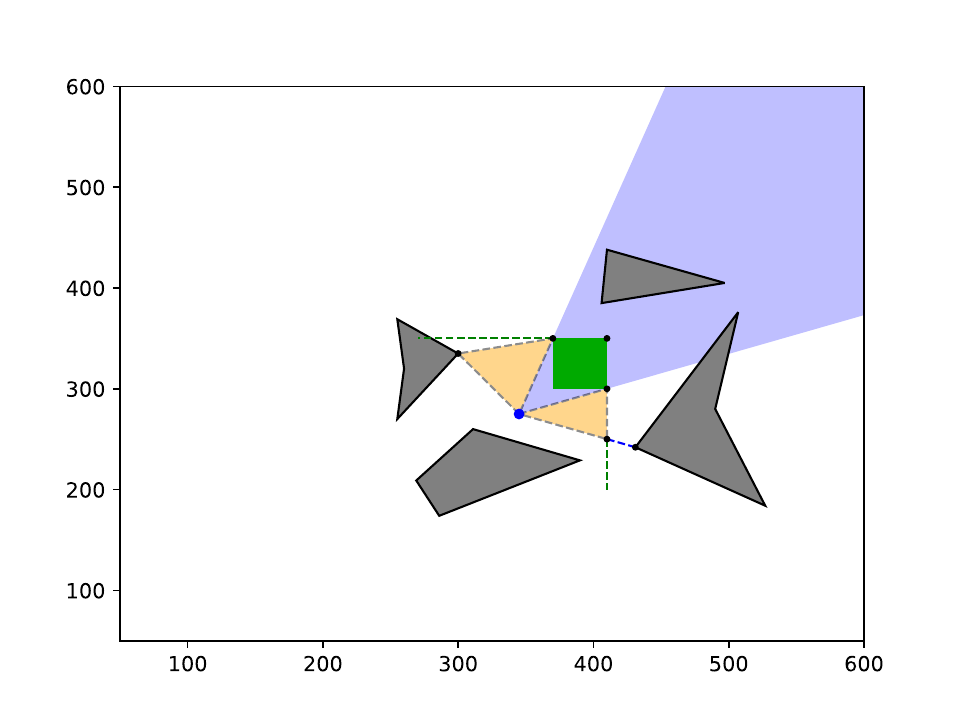} \quad \
    \includegraphics[width=0.44\hsize]{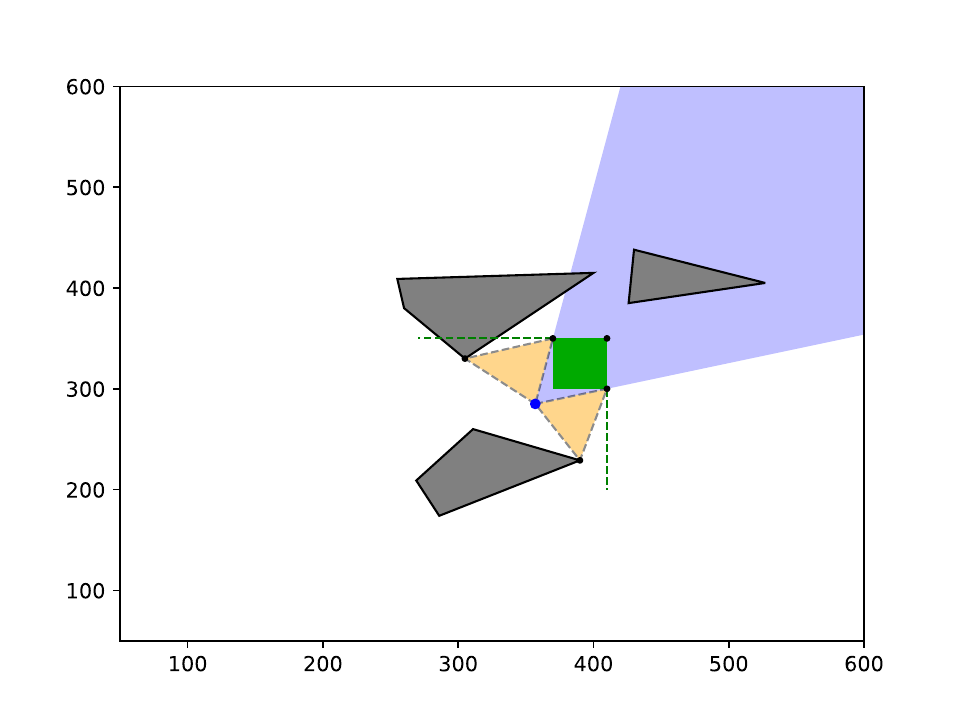}
    \put(-210,26){\scriptsize$\bm{x}_{P_i}$}
    \put(-174,53){\scriptsize$\bm{x}'$}
    \put(-174,34){\scriptsize$\bm{x}^L$}
    \put(-174,25){\scriptsize$\bm{x}''$}
     \put(-199,55){\scriptsize$\bm{x}^U$}
    \put(-172,12){\scriptsize$\bm{y}^L$}
    \put(-225,37){\scriptsize$\bm{y}^U$}
    \put(-78, 28){\scriptsize$\bm{x}_{P_i}$}
    \put(-46,53){\scriptsize$\bm{x}'$}
    \put(-46,33){\scriptsize$\bm{x}^L$}
    \put(-72,55){\scriptsize$\bm{x}^U$}
    \put(-59,12){\scriptsize$\bm{y}^L$}
    \put(-102,42){\scriptsize$\bm{y}^U$}
    \put(-55,0){\scriptsize$(b)$}
    \put(-185,0){\scriptsize$(a)$}
    \caption{Convex GCPs via first-visible obstacle vertices. Let $\theta^L = \sigma( \bm{x}_{P_i}, \bm{y}^L)$ and $\theta^U = \sigma( \bm{x}_{P_i}, \bm{y}^U)$. All orange regions are proved to be obstacle-free. $(a)$ if $|D[\theta^L, \theta^U]| \leq \pi$, then the region (counterclockwise vertices $\bm{x}_{P_i}$, $\bm{x}''$, $\bm{x}'$, $\bm{x}^U$, $\bm{y}^U$) is a convex GCP for $\bm{x}_{P_i}$.
    $(b)$ if $|D[\theta^L, \theta^U]| > \pi$, then two regions (vertices $\bm{x}_{P_i}$, $\bm{y}^L$, $\bm{x}^L$, $\bm{x}'$, $\bm{x}^U$) and (vertices $\bm{x}_{P_i}$, $\bm{x}^L$, $\bm{x}'$, $\bm{x}^U$, $\bm{y}^U$) can be expanded into convex GCPs for $\bm{x}_{P_i}$ such that the direction ranges for $\bm{x}_{P_i}$ are $D[\theta^L, \theta^L + \pi]$ and $D[\theta^U - \pi, \theta^U]$, respectively.}
    \label{fig:convex_GCP_construction_proof}
\end{figure}

    By \defiref{defi:minimum-covering-points}, $\mathcal{R}_0$ is a natural convex GCP for $\bm{x}_{P_i}$. Then, $\mathcal{R}_1$ is convex as 1) it is constructed by replacing the constraint by the line through $\bm{x}_{P_i}$ and $\bm{x}^L$ with the constraint by the line through $\bm{x}_{P_i}$ and $\bm{x}''$ and 2) the first two constraints in \eqref{eq:first-visible-obstacle-vertex-L} ensure that 
    \[
    \theta^L \in D[\sigma(\bm{x}^U, \bm{x}_{P_i}), \sigma(\bm{x}_{P_i}, \bm{x}^L)] \,.
    \]
    Also note that $\bm{x}_{P_i} \in \mathcal{R}_1$ and $ \goal \subset \mathcal{R}_1$. In order to show that $\mathcal{R}_1$ is a convex GCP for $\bm{x}_{P_i}$, it remains to prove that $\mathcal{R}_1$ is obstacle-free. Since $\mathcal{R}_1 = \mathcal{R}_0 \cup \mathcal{R}_2$ and $\mathcal{R}_0$ is obstacle-free, the problem is reduced to proving that the orange region $\mathcal{R}_2$ is obstacle-free. Suppose that $\mathcal{R}_2$ is not obstacle-free. Then, we have that 1) either $\mathcal{R}_2$ contains at least one obstacle; 2) or there is an obstacle edge penetrating at least one of edges $\overline{\bm{x}'' \bm{x}^L}$ and $\overline{\bm{x}'' \bm{x}_{P_i}}$; 3) or mixture of 1) and 2). For the case 1), there must exist an obstacle vertex in $\mathcal{R}_2$ that has a strictly smaller value for \eqref{eq:first-visible-obstacle-vertex-L}, which contradicts with the fact that $\bm{y}^L$ is the first-visible obstacle vertex. For the case 2), if the obstacle edge only penetrates $\overline{\bm{x}'' \bm{x}^L}$, then it goes back to case 1) as there will be at least one visible obstacle vertex in $\mathcal{R}_2$. If the obstacle edge penetrates $\overline{\bm{x}'' \bm{x}_{P_i}}$, then $\bm{y}^L$ is not visible. For the case 3), it goes back to either 1) or 2). Moreover, 
    $\mathcal{R}_2$ is obstacle-free even if three points $\bm{x}_{P_i}$, $\bm{x}^L$ and $\bm{x}'$ are collinear as we can extend $\overline{\bm{x}^U \bm{x}'}$ via $\bm{x}'$ and prove similarly. 
    Hence, $\mathcal{R}_2$ is obstacle-free and thus $\mathcal{R}_1$ is a convex GCP for $\bm{x}_{P_i}$. 

    Then, for the part on $\theta^U$, we define the following polygons:
    \begin{equation*}
    \begin{aligned}
        & \mathcal{R}_3 : \textup{counterclockwise vertices } \bm{x}_{P_i}, \bm{x}^L, \bm{x}', \bm{x}^U, \bm{y}^U \\
        & \mathcal{R}_4 : \textup{counterclockwise vertices } \bm{x}_{P_i}, \bm{x}'', \bm{x}', \bm{x}^U, \bm{y}^U \,.
    \end{aligned}
    \end{equation*}
    By the similar analysis, we obtain that $\mathcal{R}_3$ is a convex GCP for $\bm{x}_{P_i}$. Since $|D[\theta^L, \theta^U]| \leq \pi$, we have that $\mathcal{R}_4$ is a convex GCP for $\bm{x}_{P_i}$ where $D_{\mathcal{R}_4}(\bm{x}_{P_i}) = D[\theta^L, \theta^U]$.

    Case 2: $|D[\theta^L, \theta^U]| > \pi$, shown in Fig.~\ref{fig:convex_GCP_construction_proof}(b). We define the polygon $\mathcal{R}^5$ whose counterclockwise vertices are $\bm{x}_{P_i}$, $\bm{y}^L$ and $\bm{x}^L$. Following the same argument in the case 1, we obtain that $\mathcal{R}^3$ and $\mathcal{R}^5$ are convex GCPs for $\bm{x}_{P_i}$. Since $\mathcal{R}^5$ is obstacle free and $|D[\theta^L, \theta^U]| > \pi$, then by extending $\overline{\bm{y}^U \bm{x}_{P_i}}$ via $\bm{x}_{P_i}$ we can expand $\mathcal{R}_3$ and construct a convex GCP $\mathcal{R}_6$ for $\bm{x}_{P_i}$ such that $D_{\mathcal{R}_6}(\bm{x}_{P_i}) = D[\theta^U - \pi, \theta^U]$. Similarly, by expanding $\mathcal{R}_5$, we can construct a convex GCP $\mathcal{R}_7$ for $\bm{x}_{P_i}$ such that $D_{\mathcal{R}_7}(\bm{x}_{P_i}) = D[\theta^L, \theta^L + \pi]$, which completes the proof.
\end{proof}

\section{Close-to-goal pursuit winning II: \\ Non-goal-visible case} \label{sec:non-goal-visible}

In this section, we present the second case for the close-to-goal pursuit winning where the pursuers cannot see the whole goal region.  We will construct the pursuit winning region and strategy based on the Euclidean shortest path, reachable region and wavefront in the presence of polygonal obstacles. 


\subsection{Euclidean shortest path, reachable region and wavefront}

We first review three geometric concepts and then present two lemmas on the representations. 

\begin{defi}[Euclidean shortest path, \cite{JH-SS:99}]
    Given two points $\bm{x}_1,\bm{x}_2$ in $\freespace$, a \emph{Euclidean shortest path (ESP)} between them, denoted by $\ESP(\bm{x}_1, \bm{x}_2)$, is an obstacle-free path of minimum total length connecting $\bm{x}_1$ and $\bm{x}_2$.
\end{defi}

\begin{defi}[ESP reachable region]\label{defi:ESP-reachable-region}
     For a point $\bm{x} \in \freespace$, the \emph{ESP reachable region} from $\bm{x}$ in a distance $\ell \ge 0$, denoted by $\ESPregion(\bm{x}, \ell)$, consists of points in $\freespace$ whose ESP distance to $\bm{x}$ is less than or equal to $\ell$.
\end{defi}

\begin{defi}[Wavefront, \cite{JH-SS:99}]\label{defi:wavefront}
     For a point $\bm{x} \in \freespace$, the \emph{wavefront} from $\bm{x}$ in a distance $\ell \ge 0$, denoted by $\wavefront(\bm{x}, \ell)$  (may be empty), consists of points in $\freespace$ whose ESP distance to $\bm{x}$ is $\ell$.
\end{defi}

For visualization, the ESP, ESP reachable region and wavefront are drawn in blue path in Fig.~\ref{fig:ESP-wavefront}(a), grey region and its boundary in Fig.~\ref{fig:ESP-wavefront}(b), respectively. Let $\ESPdist(\bm{x}_1, \bm{x}_2)$ be the length of $\ESP(\bm{x}_1, \bm{x}_2)$. If ESPs are not unique, $\ESP(\bm{x}_1, \bm{x}_2)$ may refer to an arbitrary one. The next two lemmas show that the ESP and wavefront have finite representations.

\begin{figure}[h!]
    \centering
    \includegraphics[width=0.44\hsize]{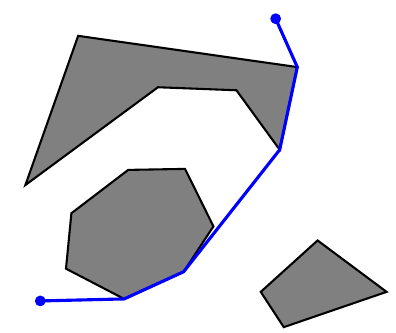} \quad \
    \includegraphics[width=0.44\hsize]{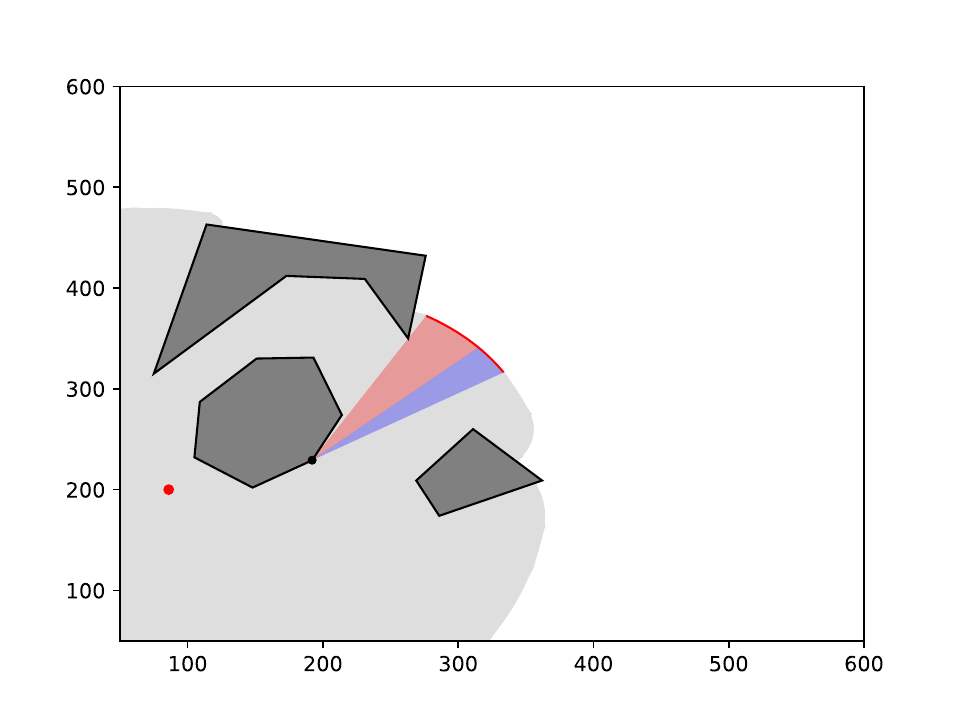}
    \put(-230,2){\scriptsize$\bm{x}_1$}
    \put(-181,46){\scriptsize$\ESP$}
    \put(-103, 2){\scriptsize$\bm{x}$}
    \put(-70,4){\scriptsize$\ESPregion$}
    \put(-158,89){\scriptsize$\bm{x}_2$}
    \put(-10,38){\scriptsize$\wavefront$}
    \put(-57,43){\scriptsize$\sector^k$}
    \put(-61,-6){{\scriptsize$(b)$}}
    \put(-188,-6){\scriptsize$(a)$}
    \caption{Geometric concepts. $(a)$ A (blue) Euclidean shortest path (ESP) $\ESP$ between $\bm{x}_1$ and $\bm{x}_2$. $(b)$ The grey region $\ESPregion$ is the ESP reachable region from $\bm{x}$ in a given distance, and its boundary $\wavefront$ is called wavefront. The wavefront $\wavefront$ is a set of circular wavelets and each (red) wavelet induces a circular sector $\sector^k$ centered on an obstacle vertex or $\bm{x}$. {The red and blue regions are two circular sectors centered on the same vertex induced by two adjacent wavelets, respectively. Notably, the division of $\wavefront$ into wavelets is based on the ESP algorithm in \cite{JH-SS:99}.}}
    \label{fig:ESP-wavefront}
\end{figure}


\begin{lema}[ESP Representation, \cite{JH-SS:99}]\label{lema:ESP-representation}
    An ESP $\ESP(\bm{x}_1, \bm{x}_2)$ can be represented as an ordered set of obstacle vertices plus starting point $\bm{x}_1$ and ending point $\bm{x}_2$.
\end{lema}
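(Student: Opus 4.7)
The plan is to establish the lemma by proving two structural properties of an ESP: (i) it is a polygonal (piecewise linear) curve consisting of finitely many straight segments, and (ii) every bend point of this curve must coincide with a vertex of some obstacle in $\obstacleset$. These two facts together immediately yield the desired representation, since the path between consecutive bend points is uniquely determined as a straight line segment, and so $\ESP(\bm{x}_1, \bm{x}_2)$ is completely specified by $\bm{x}_1$, $\bm{x}_2$, and the ordered sequence of bend vertices.

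For (i), I would use a local shortening principle. Pick any interior point of the path that lies in the interior of $\freespace$; a sufficiently small open disk around it is obstacle-free and therefore convex. The sub-path of $\ESP$ within this disk must coincide with the straight chord between its entry and exit points on the disk's boundary, for otherwise replacing the sub-path by the chord would produce a strictly shorter obstacle-free path, contradicting the minimality of $\ESP$. A standard compactness argument over the compact image of the path then yields a piecewise linear structure with only finitely many segments.

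For (ii), I would argue by contradiction. Suppose the path bends at a point $\bm{z}$ that is not an obstacle vertex. Two cases arise. If $\bm{z}$ lies in the interior of $\freespace$, the disk argument above already forces the path to be straight through $\bm{z}$, contradicting the bend. If instead $\bm{z}$ lies on the relative interior of an obstacle edge, then because obstacles are simple polygons with pairwise disjoint closures, a sufficiently small neighborhood of $\bm{z}$ meets only this one obstacle and possesses a well-defined free-space side; a small outward detour into that side shortens the concatenation of the incoming and outgoing segments by the triangle inequality, again contradicting minimality. Hence every bend point is an obstacle vertex.

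The main obstacle will be the second subcase of (ii): carefully justifying that the outward perturbation is both obstacle-free and strictly shortening. This relies on the disjointness of obstacle closures and on a quantitative estimate that the length gained by detouring around a straight edge is a higher-order infinitesimal than the length lost. Finiteness of the ordered vertex sequence in the representation then follows because $\obstaclevertices$ is finite and, by minimality, no obstacle vertex can be visited twice (otherwise the closed sub-loop between the two visits could be removed to produce a shorter path).
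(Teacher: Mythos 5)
The paper does not prove this lemma at all: it is imported verbatim from the cited reference \cite{JH-SS:99} (Hershberger and Suri's work on Euclidean shortest paths), so your proposal is being compared against a citation rather than an argument. Your from-scratch proof is the standard and essentially correct one for this classical fact: local shortening in an obstacle-free disk forces straightness away from obstacle boundaries, a perturbation argument rules out bends in the relative interior of obstacle edges, and minimality forbids revisiting a vertex, which gives finiteness of the representing sequence. What your self-contained route buys is independence from the external reference and an explicit justification of why the finite vertex list determines the path (consecutive listed points are joined by unique straight segments). Two small points deserve attention if you were to write this out in full. First, your case analysis for a bend point $\bm{z}$ covers only the interior of $\freespace$ and the relative interior of an obstacle edge, but the game region $\gameregion$ is a bounded polygon that need not be convex, so a shortest path confined to $\freespace=\gameregion\setminus\cup_{O\in\obstacleset}O$ could also bend at a reflex vertex of $\partial\gameregion$; the lemma's phrase ``obstacle vertices'' implicitly treats the complement of $\gameregion$ as an obstacle, and your proof should say so or handle that boundary case explicitly. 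Second, your argument presupposes that a length-minimizing obstacle-free path exists; this requires a short compactness or Hopf--Rinow-type remark (the free space is a compact length space here), which is routine but should not be omitted since the whole contradiction scheme rests on comparing against an actual minimizer.
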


\begin{lema}[Wavefront representation, \cite{JH-SS:99}]\label{lema:wavefront-representation}
    A nonempty wavefront $\wavefront(\bm{x}, \ell)$ can be represented as a set of disjoint paths (called wavelets) with an index set $I$, and each wavelet $k \in I$ is a circular arc centered on an obstacle vertex or $\bm{x}$.
\end{lema}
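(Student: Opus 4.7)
The plan is to leverage \lemaref{lema:ESP-representation} and classify each wavefront point by the last pivot on its ESP from $\bm{x}$. First, I would fix a nonempty $\wavefront(\bm{x}, \ell)$ and, for each $\bm{y} \in \wavefront(\bm{x}, \ell)$, write its ESP as $\bm{x} \to v_1 \to \cdots \to v_m \to \bm{y}$ with $v_i \in \vertexset$. Declaring the ``pivot'' of $\bm{y}$ to be $v_m$ when $m \ge 1$ and $\bm{x}$ when $m=0$, the final straight segment has length $\ell - \ESPdist(\bm{x}, v_m)$ (or $\ell$ in the $m=0$ case), so $\bm{y}$ lies on the circle centered at that pivot with that radius. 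This already supplies a partition of $\wavefront(\bm{x}, \ell)$ indexed by the finite set $I$ of pivots that actually appear as a last vertex for some wavefront point.

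Second, I would argue that the contribution of each pivot $v$ is a single circular arc rather than a union of disconnected arcs. The outgoing direction at $v$ must satisfy three constraints: (i) the taut-string condition, which forces the outgoing ray to lie in the reflex cone at $v$ between the incoming ESP segment and the appropriate obstacle edge incident to $v$; (ii) a visibility constraint, that a short segment in the chosen direction remains in $\freespace$; and (iii) minimality, i.e.\ the resulting path is not dominated by a route through a different pivot. Each of these carves out an interval of admissible angles at $v$, and their intersection is a single (possibly empty) angular interval. Sweeping this interval on the circle of radius $\ell - \ESPdist(\bm{x}, v)$ yields precisely one circular arc. The case $v = \bm{x}$ is similar, with the ``arc'' being the portion of the circle of radius $\ell$ centered at $\bm{x}$ lying in its visibility polygon.

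Third, for disjointness of the wavelets I would rely on the observation that if a point $\bm{y}$ belonged to two wavelets centered at distinct pivots $v, v'$, then it would admit two ESPs of equal length with different last pivots; resolving such ties by a fixed rule (e.g.\ lexicographic ordering on the pivot sequence) assigns $\bm{y}$ to exactly one wavelet, yielding a disjoint collection whose union is $\wavefront(\bm{x}, \ell)$ by construction. Finiteness of $I$ is immediate from $|\vertexset| < \infty$ together with the single source $\bm{x}$.

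The main obstacle, in my view, is step two: proving that the pre-image of each pivot under the ``last pivot'' map is a connected angular interval. This requires a careful local analysis around the obstacle vertex combining the taut-string characterization of ESPs with the visibility polygon of the pivot, and it is precisely the content that an explicit ESP propagation procedure such as the one in \cite{JH-SS:99} supplies; once this local statement is in hand, the remaining global assembly into a partition is essentially bookkeeping on top of \lemaref{lema:ESP-representation}.
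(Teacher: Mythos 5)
First, note that the paper itself offers no proof of this lemma: it is imported verbatim from the cited reference \cite{JH-SS:99}, so there is no in-paper argument to compare yours against. Your sketch is therefore a reconstruction of the continuous-Dijkstra wavefront-propagation argument. Its first and third steps are sound: classifying each $\bm{y} \in \wavefront(\bm{x},\ell)$ by the last pivot of an ESP (which is polygonal through obstacle vertices by \lemaref{lema:ESP-representation}) places $\bm{y}$ on a circle of radius $\ell - \ESPdist(\bm{x}, v)$ centered at that pivot, and ties on bisectors can be broken by a fixed rule to get disjointness.

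The genuine gap is in your step two. It is not true that the set of wavefront points whose last pivot is a given vertex $v$ is a single circular arc, and the three constraints you list do not each carve out a single angular interval. The visibility constraint at $v$ --- that the segment of length $r = \ell - \ESPdist(\bm{x}, v)$ in direction $\theta$ stays in $\freespace$ --- is the set $\{\theta : \rho(\theta) \ge r\}$, where $\rho(\theta)$ is the clearance of $v$ in direction $\theta$; a nearby obstacle sitting in the middle of $v$'s taut cone disconnects this set into several intervals. Likewise the minimality constraint (that no route through a different pivot dominates) is governed by bisector curves with other generators and can remove a sub-interval from the interior of the admissible range. So a single generator can, and in general does, spawn several disjoint wavelets. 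Fortunately the lemma does not claim one wavelet per generator: the index set $I$ indexes wavelets, not pivots. Your argument is repaired by replacing ``the intersection is a single angular interval'' with ``the intersection is a finite union of intervals'' (finite because it is cut out by finitely many obstacle edges and finitely many bisectors with other generators), and then letting each maximal sub-interval be its own wavelet; with that correction the decomposition, disjointness, and finiteness of $I$ all go through.
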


With the ordered set in \lemaref{lema:ESP-representation}, an ESP can be recovered by sequentially connecting the vertices in this set. By \lemaref{lema:wavefront-representation}, each wavelet can be described by a tuple $(\bm{y}, \epsilon, \theta_1, \theta_2)$ which represents a circular arc centered at an obstacle vertex (or the start point) $\bm{y}$ with radius $\epsilon$ starting from angle $\theta_1$ and ending at angle $\theta_2$ counterclockwise. For instance, {two adjacent wavelets (i.e., two red curves, which correspond to circular sectors with different colors) are highlighted in Fig.~\ref{fig:ESP-wavefront}(b)}. A wavefront is then constructed by connecting all adjacent wavelets (please refer to \cite{JH-SS:99} for details). {Notably, the division of $\wavefront$ into wavelets here is based on the ESP algorithm in \cite{JH-SS:99}.}

\begin{rek}\label{rek:wavelet-less-than-pi}
    In this paper, for a wavelet $(\bm{y}, \epsilon, \theta_1, \theta_2)$, if the angle difference from $\theta_1$ to $\theta_2$ in a counterclockwise direction is greater than $\pi$, we split this wavelet evenly into two shorter wavelets whose angle differences will be not more than $\pi$. Such a split can ensure each wavelet induces a convex circular sector (e.g., the red or blue region in Fig.~\ref{fig:ESP-wavefront}(b)), which will be used to construct the pursuit winning region below.
\end{rek}

\subsection{Non-goal-visible pursuit winning}

Combing the ESP, reachable region, wavefront and the goal-visible pursuit winning in Section~\ref{sec:goal-visible}, we next construct the pursuit winning region and strategy for the pursuer who is not at a goal-visible point. The idea is to first steer the pursuer to a goal-visible point and then check if the pursuit winning can be guaranteed from this point via \thomref{thom:goal-visible-winning}. We generalize the safe distance in order to check the condition $(ii)$ in \thomref{thom:goal-visible-winning}. Let $\obstacleverticesobs \subset \obstaclevertices$ be the set of goal-visible obstacle vertices, for instance, green obstacle vertices in Fig.~\ref{fig:non-goal-visible-strategy} are all goal-visible. First, we generalize the safe distance as follows.

\begin{defi}[Anchored ESP-based safe distance]\label{defi:AESP-safe-distance}
    For a state $X_{ij}$ and a goal-visible obstacle vertex $\bm{s} \in \obstacleverticesobs$, the \emph{anchored ESP-based safe distance} $\varrho_{A}(X_{ij}, \bm{s})$ is defined as the optimal value of the problem
    \begin{equation}\label{eq:AESP-safe-distance}
    \begin{aligned}
        & \underset{\bar{\bm{x}}, \bm{x},\bm{y}\in\mathbb{R}^2}{\textup{minimize}}
    	&& \| \bm{x} - \bm{y} \|_2 \\
    	&\textup{subject to}&& \bar{\bm{x}} \in \ESPregion(\bm{x}_{E_j}, \ESPdist(\bm{x}_{P_i}, \bm{s}) / \alpha_{ij} ) \\
        & & & f_{ij}(\bm{x}, \bm{s}, \bar{\bm{x}}) \ge 0,  A_m \bm{y} + \bm{b}_m \ge 0,  \forall m \in \goalindex.
    \end{aligned}
    \end{equation}
\end{defi}

\begin{rek}\label{rek:anchored-ESP-based-safe-distance}
    The anchored ESP-based safe distance $\varrho_{A}(X_{ij}, \bm{s})$ involves two stages as in Fig.~\ref{fig:non-goal-visible-strategy}: in the first stage, $P_i$ moves to a goal-visible obstacle vertex $\bm{s}$ (called anchor point) along an ESP which takes the time $\ESPdist(\bm{x}_{P_i}, \bm{s}) / v_{P_i}$. The grey region is the ESP reachable region $\ESPregion(\bm{x}_{E_j}, \ESPdist(\bm{x}_{P_i}, \bm{s}) / \alpha_{ij} )$ from $\bm{x}_{E_j}$ in this duration. In the second stage, $\varrho_{A}(X_{ij}, \bm{s})$ is defined as the minimal safe distance when $P_i$ is at $\bm{s}$ while $E_j$ can be any point in the grey region. In the figure, $\bar{\bm{x}}^{\star}$ achieves the minimal safe distance.
\end{rek}

Since $\ESPregion(\bm{x}_{E_j}, \ESPdist(\bm{x}_{P_i}, \bm{s}) / \alpha_{ij} )$ is generally non-convex due to the presence of obstacles, computing the anchored ESP-based safe distance involves solving the nonlinear optimization problem \eqref{eq:AESP-safe-distance}. We next present an approximate, efficient solution to the problem \eqref{eq:AESP-safe-distance}, thus providing a sufficient condition for the non-negative anchored ESP-based safe distance. Recall that the angle difference for any wavelet is not more than $\pi$ by \rekref{rek:wavelet-less-than-pi}. The following concept is required.

\begin{figure}
    \centering
    \includegraphics[width=0.92\hsize,height=0.45\hsize]{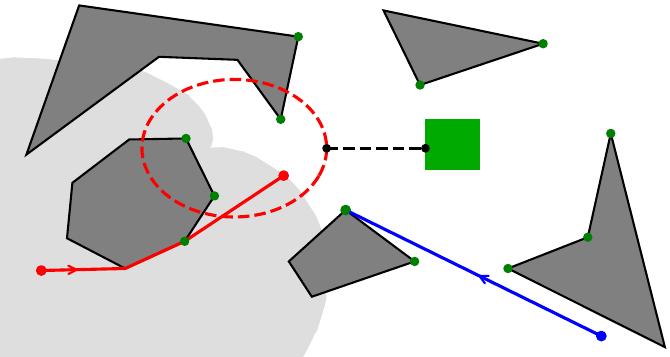}
    \put(-222,33){\scriptsize$\bm{x}_{E_j}$}
    \put(-155, 81){\scriptsize$\mathbb{E}'$}
    \put(-135,61){\scriptsize$\bar{\bm{x}}^{\star}$}
    \put(-39,3){\scriptsize$\bm{x}_{P_i}$}
    \put(-114,49){\scriptsize$\bm{s}$} 
    \put(-107,61){\scriptsize$\varrho_{A}$} 
    \caption{Non-goal-visible pursuit winning. All goal-visible obstacle vertices  $\obstacleverticesobs$ are in green. If $P_i$ is not goal-visible and there exists $\bm{s} \in \obstacleverticesobs$ such that the anchored ESP-based safe distance $\varrho_{A}$ is non-negative, then $P_i$ guarantees to win against $E_j$. The distance $\varrho_{A}$ anchored at $\bm{s}$ is computed as follows. In the first stage, $P_i$ moves to $\bm{s}$ along an (blue) ESP, and the grey region is the ESP reachable region from $\bm{x}_{E_j}$ in this duration. In the second stage, $\varrho_{A}$ is defined as the minimal safe distance when $P_i$ is at $\bm{s}$ and $E_j$ is at any point (say $\bar{\bm{x}}^{\star}$ achieves the minimum) in the grey region. The pursuit winning strategy is to first move along the ESP to $\bm{s}$ and then adopting the goal-visible pursuit winning strategy.}
    \label{fig:non-goal-visible-strategy}
\end{figure}

\begin{defi}[Circular sectors for a wavefront]\label{defi:circular-sector-wavefront}
    For a wavefront $\wavefront(\bm{x}, \ell)$ with an index set $I$, let $\sector^k$ be the circular sector (e.g., red region in Fig.~\ref{fig:ESP-wavefront}(b)) formed by a wavelet $k \in I$, represented by $(\bm{y}, \epsilon, \theta_1, \theta_2)$,  and its center $\bm{y}$, i.e.,
    \begin{equation*}
    \begin{aligned}
        \sector^k = \{ \bm{z} \in \mathbb{R}^2 \mid \; &  \| \bm{z} - \bm{y} \|_2 \leq \epsilon, [-\sin{\theta_1}, \cos{\theta_1}]^{\top} \bm{z} \ge 0 \\
        & [-\sin{\theta_2}, \cos{\theta_2}]^{\top} \bm{z} \leq 0 \} .
    \end{aligned}
    \end{equation*}
    Then, $\sector = \{ \sector^k \mid k \in I \}$ is called the \emph{circular sectors} for  $\wavefront(\bm{x}, \ell)$.
\end{defi}


\begin{lema}[Non-negative anchored ESP-based safe distance]
    For a state $X_{ij}$ and a goal-visible obstacle vertex $\bm{s} \in \obstacleverticesobs$, let $I$ and $\sector$ be the index set and circular sectors for the wavefront $\wavefront(\bm{x}_{E_j}, \ESPdist(\bm{x}_{P_i}, \bm{s}) / \alpha_{ij} )$, respectively. If 
    \begin{enumerate}
        \item $\alpha_{ij} \ESPdist(\bm{x}_{E_j}, \bm{x}) \ge \ESPdist(\bm{x}_{P_i}, \bm{s}) $ for all $\bm{x} \in \goalvertices$; \label{itm:non-negative-anchored-safe-distance-1}
        
        \item $\min_{k \in I} J^{\star}_k \ge 0$; \label{itm:non-negative-anchored-safe-distance-2}
    \end{enumerate}
     where let $d_k = \max_{\bm{z} \in \sector^k} \| \bm{z} - \bm{s} \|_2 $ for a wavelet $k \in I$ and  $J^{\star}_k$ is the optimal value to the convex optimization problem
    \begin{equation}\label{eq:convex-pbm-AESP}
    \begin{aligned}
    & \underset{\bar{\bm{x}}, \bm{x},\bm{y}\in\mathbb{R}^2}{\textup{minimize}}
	&& \| \bm{x} - \bm{y} \|_2 \\
	&\textup{subject to}&& \bar{\bm{x}} \in \sector^k, \; A_m \bm{y} + \bm{b}_m \ge 0,  \forall m \in \goalindex \\
    & & & \| \bm{x} - \bar{\bm{x}} \|_2 \leq (d_k - r_i) / ( \alpha_{ij} - 1) \\
    & & &  \| \bm{x} - \bm{s} \|_2 \leq ( \alpha_{ij} d_k - r_i) / (\alpha_{ij} - 1)
    \end{aligned}
    \end{equation}
    then $\varrho_{A}(X_{ij}, \bm{s}) \ge 0$.
\end{lema}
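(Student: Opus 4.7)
The plan is to show that the nonconvex optimization problem \eqref{eq:AESP-safe-distance} defining $\varrho_A(X_{ij},\bm{s})$ is lower-bounded by the minimum of the convex subproblems \eqref{eq:convex-pbm-AESP} over all wavelets, by (a) decomposing the ESP reachable region into the circular sectors $\sector^k$ and (b) relaxing the bilinear evasion constraint $f_{ij}(\bm{x},\bm{s},\bar{\bm{x}}) \ge 0$ via the triangle inequality anchored at $\bm{s}$. Condition 2 of the lemma then immediately yields the desired non-negativity, while condition 1 handles the $-\infty$ sign convention inherited from \defiref{defi:obstacle-free-safe-distance}.

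For step (a), I would establish the geometric containment $\ESPregion(\bm{x}_{E_j},\ell) \subset \bigcup_{k \in I}\sector^k$, where $\ell = \ESPdist(\bm{x}_{P_i},\bm{s})/\alpha_{ij}$. Using \lemaref{lema:ESP-representation} and \lemaref{lema:wavefront-representation}, for any $\bm{p} \in \ESPregion(\bm{x}_{E_j},\ell)$ the shortest path from $\bm{x}_{E_j}$ to $\bm{p}$ is a polyline through obstacle vertices that terminates with a straight segment emanating from some center $\bm{y}$; this segment has length at most the wavelet radius $\epsilon = \ell - \ESPdist(\bm{x}_{E_j},\bm{y})$ and its direction lies in the wavelet's angular range $[\theta_1,\theta_2]$, so $\bm{p} \in \sector^k$. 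The restriction in \rekref{rek:wavelet-less-than-pi} that each wavelet span at most $\pi$ is crucial here because it guarantees $\sector^k$ is convex, which is precisely what makes problem \eqref{eq:convex-pbm-AESP} convex in $\bar{\bm{x}}$.

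For step (b), I would show that any feasible triple $(\bar{\bm{x}},\bm{x},\bm{y})$ of \eqref{eq:AESP-safe-distance} with $\bar{\bm{x}} \in \sector^k$ is also feasible for \eqref{eq:convex-pbm-AESP}. The sector and goal constraints transfer verbatim; for the two distance bounds, I would combine the rearrangement $\|\bm{x}-\bm{s}\|_2 \ge \alpha_{ij}\|\bm{x}-\bar{\bm{x}}\|_2 + r_i$ of $f_{ij}(\bm{x},\bm{s},\bar{\bm{x}}) \ge 0$ with the triangle inequality $\|\bm{x}-\bm{s}\|_2 \le \|\bm{x}-\bar{\bm{x}}\|_2 + \|\bar{\bm{x}}-\bm{s}\|_2 \le \|\bm{x}-\bar{\bm{x}}\|_2 + d_k$, where the last step uses $\bar{\bm{x}} \in \sector^k$ and the definition of $d_k$. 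Elementary algebra then yields $\|\bm{x}-\bar{\bm{x}}\|_2 \le (d_k-r_i)/(\alpha_{ij}-1)$ and $\|\bm{x}-\bm{s}\|_2 \le (\alpha_{ij} d_k - r_i)/(\alpha_{ij}-1)$. Because the objective $\|\bm{x}-\bm{y}\|_2$ is shared, the optimal value of \eqref{eq:AESP-safe-distance} is at least $\min_{k\in I} J^{\star}_k$, and condition 2 closes the argument in the ``finite'' case. Condition 1 then rules out the $-\infty$ case of \defiref{defi:obstacle-free-safe-distance}: since the extreme points of the convex polygon $\goal$ are exactly the elements of $\goalvertices$, requiring each such vertex to lie outside the distance-$\ell$ ESP reachable region prevents the evasion region from strictly engulfing any extreme point of $\goal$ at the start of stage 2, so the associated signed distance is a genuine non-negative number rather than $-\infty$.

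The main obstacle I anticipate is a clean justification of the containment $\ESPregion \subset \bigcup_k \sector^k$, which depends on the specific wavefront construction of \cite{JH-SS:99} and on the sub-$\pi$ splitting from \rekref{rek:wavelet-less-than-pi}; care is needed at the ``hand-off'' points where the ESP transitions between vertices and at degenerate configurations where the ESP is collinear with an obstacle edge. A secondary subtlety is reconciling the vertex-only form of condition 1 with the possibility that $\goal$'s interior is overlapped by $\ESPregion$ even when no vertex is; the argument must exploit the convexity of $\goal$ together with the per-sector certificate from condition 2 to show that such overlaps are still accounted for, so that vertex-level ESP exclusion is genuinely sufficient.
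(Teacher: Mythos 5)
Your proposal follows the paper's own proof essentially step for step: the paper likewise reduces \eqref{eq:AESP-safe-distance} to the per-wavelet convex problems \eqref{eq:convex-pbm-AESP} by relating $\ESPregion(\bm{x}_{E_j}, \ESPdist(\bm{x}_{P_i}, \bm{s})/\alpha_{ij})$ to the union of circular sectors and then relaxing $f_{ij}(\bm{x},\bm{s},\bar{\bm{x}})\ge 0$ into the two distance bounds via exactly the two triangle inequalities you write, with condition 1) invoked to guarantee the evader has not yet reached a vertex of $\goal$. The only divergence is that you state the containment as $\ESPregion \subset \bigcup_{k\in I}\sector^k$ --- the direction the feasibility-transfer argument actually requires --- whereas the paper asserts $\bigcup_{k\in I}\sector^k \subset \ESPregion$ and does not engage with the coverage question you rightly flag as the main difficulty.
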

\begin{proof}
    The condition \ref{itm:non-negative-anchored-safe-distance-1}) indicates that $E_j$ has not yet reached any vertex of $\goal$ when $P_i$ reaches $\bm{s}$. This implies that for each circular sector $\sector^k \in \sector$, the associated center's ESP distance to $\bm{x}_{E_j}$ is not more than $\ESPdist(\bm{x}_{P_i}, \bm{s}) / \alpha_{ij}$.

    Since $\sector^k$ is a convex set and the other constraints in \eqref{eq:convex-pbm-AESP} are convex, then \eqref{eq:convex-pbm-AESP} is a convex optimization problem. 
    
    Using Definitions \ref{defi:ESP-reachable-region}, \ref{defi:wavefront} and \ref{defi:circular-sector-wavefront} and \lemaref{lema:wavefront-representation}, we have $\cup_{k \in I} \sector^k \subset \ESPregion(\bm{x}_{E_j}, \ESPdist(\bm{x}_{P_i}, \bm{s}) / \alpha_{ij} )$ . Therefore, \defiref{defi:AESP-safe-distance} implies that the lemma holds if the last two constraints in \eqref{eq:convex-pbm-AESP} are looser than the constraint $f_{ij}(\bm{x}, \bm{s}, \bar{\bm{x}}) \ge 0$ in \eqref{eq:AESP-safe-distance}.

    We relax $f_{ij}(\bm{x}, \bm{s}, \bar{\bm{x}}) \ge 0$ into two looser constraints:
    \begin{align*}
        & \| \bm{x} - \bar{\bm{x}} \|_2 + \| \bar{\bm{x}} - \bm{s} \|_2 - \alpha_{ij} \| \bm{x} - \bar{\bm{x}} \|_2 - r_i \ge 0 \Rightarrow \\
        & \| \bm{x} - \bar{\bm{x}} \|_2 \leq (\| \bar{\bm{x}} - \bm{s} \|_2 - r_i)/(\alpha_{ij} - 1) {\leq} (d_k - r_i)/(\alpha_{ij} - 1)
    \end{align*}
    and 
    \begin{equation*}
    \begin{aligned}
        & \| \bm{x} - \bm{s} \|_2 - \alpha_{ij} (\| \bm{x} - \bm{s} \|_2 - \| \bm{s} - \bar{\bm{x}} \|_2 ) - r_i \ge 0 \Rightarrow \\
        & \| \bm{x} - \bm{s} \|_2 \leq \frac{\alpha_{ij} \| \bm{s} - \bar{\bm{x}} \|_2 - r_i}{\alpha_{ij} - 1} \leq \frac{\alpha_{ij}d_k - r_i}{\alpha_{ij} - 1}
    \end{aligned}
    \end{equation*}
    which thus completes the proof.
\end{proof}



Utilising all these ESP-based concepts, we next present the pursuit winning region and strategy for the non-goal-visible case of the close-to-goal pursuit winning, illustrated in Fig.~\ref{fig:non-goal-visible-strategy}.


\begin{thom}[Non-goal-visible pursuit winning] \label{thom:non-goal-pursuit-wining}
    At time $t$, if the positions $X_{ij}$ of $P_i$ and $E_j$ are such that 
    \begin{enumerate}
        \item $\bm{x}_{P_i}$ is not goal-visible; \label{itm:non-goal-visible-pwin-1}

        \item there exists at least one goal-visible obstacle vertex $\bm{s} \in \obstacleverticesobs$ such that the anchored ESP-based safe distance is non-negative, i.e., $\varrho_{A}(X_{ij}, \bm{s}) \ge 0$; \label{itm:non-goal-visible-pwin-2}
    \end{enumerate}
    then $P_i$ can guarantee to win against $E_j$, regardless of $E_j$'s strategy, by using the pursuit strategy that computes $\bm{u}_{P_i}(\tau)$ for $\tau \ge t$ as follows:
\begin{equation}\label{eq:non-goal-visible-pursuit-strategy}
    \begin{cases}
    \textup{Along } \ESP(\bm{x}_{P_i}(t), \bm{s}) & \textup{if } t \leq \tau \leq t + \ESPdist(\bm{x}_{P_i}(t), \bm{s}) / v_{P_i} \\
    \eqref{eq:goal-visible-pursuit-strategy} & \textup{if } \tau > t + \ESPdist(\bm{x}_{P_i}(t), \bm{s}) / v_{P_i} \,.
    \end{cases}
\end{equation} 
\end{thom}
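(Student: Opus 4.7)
The plan is to leverage the two-stage structure of \eqref{eq:non-goal-visible-pursuit-strategy}: Phase~1 transports $P_i$ along an ESP to the anchor $\bm{s}$, after which Phase~2 reduces to a direct application of \thomref{thom:goal-visible-winning}. Let $t' = t + \ESPdist(\bm{x}_{P_i}(t), \bm{s}) / v_{P_i}$ denote the switching time. The key observation driving the first-stage analysis is that ESPs are the shortest obstacle-avoiding paths in $\freespace$, so any evader trajectory of duration $\tau - t$ can reach only those points whose ESP distance from $\bm{x}_{E_j}(t)$ is at most $v_{E_j}(\tau - t)$. Consequently, $\bm{x}_{E_j}(\tau) \in \ESPregion(\bm{x}_{E_j}(t), v_{E_j}(\tau - t))$ for all $\tau \in [t, t']$, and in particular $\bm{x}_{E_j}(t')$ belongs to the set $\ESPregion(\bm{x}_{E_j}(t), \ESPdist(\bm{x}_{P_i}(t), \bm{s}) / \alpha_{ij})$ that appears as the feasible set for $\bar{\bm{x}}$ in \eqref{eq:AESP-safe-distance}.

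I would next show that $E_j$ cannot enter $\interior(\goal)$ at any $\tau \in [t, t']$. Suppose, for contradiction, $\bm{x}_{E_j}(\tau) \in \interior(\goal)$ for some such $\tau$, and set $\bar{\bm{x}} = \bm{x}_{E_j}(\tau)$; by the containment above this is feasible in \eqref{eq:AESP-safe-distance}. For the evasion region associated with $(\bm{s}, \bar{\bm{x}})$, the point $\bar{\bm{x}}$ itself lies in its interior whenever $\|\bar{\bm{x}} - \bm{s}\|_2 > r_i$, so this region intersects $\interior(\goal)$ in an open set, triggering the overlap flag of \defiref{defi:obstacle-free-safe-distance} and sending the inner safe distance to $-\infty$; this contradicts $\varrho_{A}(X_{ij}(t), \bm{s}) \ge 0$. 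The corner case $\|\bar{\bm{x}} - \bm{s}\|_2 \le r_i$ is ruled out by a short continuity argument on the evader's trajectory immediately prior to crossing $\partial \goal$: since $\goal$ is disjoint from every obstacle and $\bm{s}$ is an obstacle vertex, such a crossing happens at a point strictly farther than $r_i$ from $\bm{s}$ for a perturbation of the entry time, returning to the generic case.

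Once Phase~1 is certified safe, I would verify that the hypotheses of \thomref{thom:goal-visible-winning} hold at $X_{ij}(t') = (\bm{s}, \bm{x}_{E_j}(t'))$. Condition~1) is immediate because $\bm{s} \in \obstacleverticesobs$ is goal-visible by choice. For condition~2), the safe distance $\varrho(X_{ij}(t'))$ coincides with the value of the convex program \eqref{eq:convex-pbm-IG} associated with the evader position $\bm{x}_{E_j}(t')$; since this position is a feasible choice for the outer variable $\bar{\bm{x}}$ in \eqref{eq:AESP-safe-distance}, I have
\[
\varrho(X_{ij}(t')) \;\ge\; \varrho_{A}(X_{ij}(t), \bm{s}) \;\ge\; 0.
\]
Applying \thomref{thom:goal-visible-winning} from time $t'$ onwards with the strategy \eqref{eq:goal-visible-pursuit-strategy} guarantees that $P_i$ wins against $E_j$ regardless of the evader's strategy, which combined with the Phase~1 safety yields the conclusion.

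The main obstacle is the contradiction in the Phase~1 safety argument: it requires reconciling the optimization-based expression for $\varrho_{A}$ in \eqref{eq:AESP-safe-distance} with the $-\infty$ convention inherited from \defiref{defi:obstacle-free-safe-distance} (so that ``$\varrho_{A} \ge 0$'' genuinely precludes interior overlap for every feasible $\bar{\bm{x}}$), and handling the edge case where the evader approaches $\bm{s}$ within the capture radius while simultaneously entering $\goal$. Both are topological technicalities that do not affect the conceptual simplicity of the two-stage reduction; every remaining step is a direct invocation of earlier results.
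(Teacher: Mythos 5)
Your proposal is correct and follows the same two-stage reduction as the paper: steer $P_i$ along the ESP to the anchor $\bm{s}$, then invoke \thomref{thom:goal-visible-winning} with strategy \eqref{eq:goal-visible-pursuit-strategy} from the switching time $t'$ onward. You in fact supply more detail than the paper's own one-paragraph proof, which simply asserts via \rekref{rek:anchored-ESP-based-safe-distance} that the safe distance is non-negative once $P_i$ reaches $\bm{s}$; your explicit observations that $\bm{x}_{E_j}(t')$ is a feasible choice of $\bar{\bm{x}}$ in \eqref{eq:AESP-safe-distance} (yielding $\varrho(X_{ij}(t')) \ge \varrho_{A}(X_{ij}(t),\bm{s}) \ge 0$) and that the $-\infty$ convention of \defiref{defi:obstacle-free-safe-distance} is what prevents $E_j$ from slipping into $\goal$ during Phase~1 are precisely the steps the paper leaves implicit.
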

\begin{proof}
    A goal-visible obstacle vertex $\bm{s} \in \obstacleverticesobs$ exists such that $\varrho_{A}(X_{ij}(t), \bm{s}) \ge 0$. Then, as \rekref{rek:anchored-ESP-based-safe-distance} states, $P_i$ can reach the vertex $\bm{s}$ along the ESP $\ESP(\bm{x}_{P_i}(t), \bm{s})$ which takes the time $\ESPdist(\bm{x}_{P_i}(t), \bm{s}) / v_{P_i}$. The safe distance is non-negative when $P_i$ reaches $\bm{s}$, regardless of $E_j$' strategy. Thus, since $\bm{s}$ is goal-visible, then by \thomref{thom:goal-visible-winning} the strategy \eqref{eq:goal-visible-pursuit-strategy} can ensure $P_i$'s winning against $E_j$.
\end{proof}

\section{Multiplayer Pursuit Strategy} \label{sec:multiplayer-strategy}

In this section, we propose a multiplayer pursuit strategy by fusing the subgame outcomes in Sections \ref{sec:onsite-pursuit-winning}, \ref{sec:goal-visible} and \ref{sec:non-goal-visible} with hierarchical optimal task allocation to ensure a lower bound on the number of defeated evaders and improve the bound continually. Before that, we first present an evasion winning region and strategy which can conversely help the pursuit team to distribute its team members to the evaders more efficiently.



\subsection{Evasion winning}

Based on the ESP, we construct an evasion winning region and strategy which will be used in the following matchings.

\begin{lema}[ESP-based evasion winning]\label{lema:evasion-winning}
If the positions $X_{ij}$ of $P_i$ and $E_j$ are such that there exists $\bm{x} \in \goal$ satisfying 
\begin{equation} \label{eq:evader-win-condition}
    \ESPdist(\bm{x}_{P_i}, \bm{x}) - r_i > \alpha_{ij} \ESPdist(\bm{x}_{E_j}, \bm{x})
\end{equation}
then $E_j$ can guarantee to reach $\goal$ before being captured by $P_i$ by moving along $\ESP(\bm{x}_{E_j}, \bm{x})$, regardless of $P_i$' strategy.  
\end{lema}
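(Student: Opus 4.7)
The plan is to argue by contradiction: assume $E_j$ moves along $\ESP(\bm{x}_{E_j}, \bm{x})$ at its maximum speed $v_{E_j}$, so that at time $\tau \in [0, \ESPdist(\bm{x}_{E_j},\bm{x})/v_{E_j}]$ it is at the unique point $\bm{y}(\tau)$ on this ESP with $\ESPdist(\bm{x}_{E_j}, \bm{y}(\tau)) = v_{E_j}\tau$. Since $E_j$ reaches $\bm{x}\in\goal$ at $\tau = \ESPdist(\bm{x}_{E_j},\bm{x})/v_{E_j}$, the lemma follows if I can rule out any capture event occurring at some $t^\star$ in this interval, regardless of $P_i$'s strategy.

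Suppose such a $t^\star$ exists. By the definition of capture, $P_i$ and $E_j$ are within Euclidean distance $r_i$ with an obstacle-free connecting segment, so in particular $\ESPdist(\bm{x}_{P_i}(t^\star), \bm{y}(t^\star)) \leq r_i$. Since any feasible trajectory of $P_i$ between times $0$ and $t^\star$ is an obstacle-free curve of length at most $v_{P_i} t^\star$, the minimality of the ESP gives $\ESPdist(\bm{x}_{P_i}, \bm{x}_{P_i}(t^\star)) \leq v_{P_i} t^\star$. Concatenating the two obstacle-free curves and invoking the triangle inequality for ESP distance (which holds because ESPs minimise length over all obstacle-free curves) yields
\begin{equation*}
    \ESPdist(\bm{x}_{P_i}, \bm{y}(t^\star)) \leq v_{P_i} t^\star + r_i.
\end{equation*}

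Next, because $\bm{y}(t^\star)$ lies on the ESP from $\bm{x}_{E_j}$ to $\bm{x}$, a standard property of ESPs (ESP subpaths are themselves ESPs) gives
\begin{equation*}
    \ESPdist(\bm{x}_{E_j}, \bm{y}(t^\star)) + \ESPdist(\bm{y}(t^\star), \bm{x}) = \ESPdist(\bm{x}_{E_j}, \bm{x}),
\end{equation*}
so $\ESPdist(\bm{y}(t^\star), \bm{x}) = \ESPdist(\bm{x}_{E_j}, \bm{x}) - v_{E_j} t^\star$. Applying the triangle inequality once more,
\begin{equation*}
    \ESPdist(\bm{x}_{P_i}, \bm{x}) \leq \ESPdist(\bm{x}_{P_i}, \bm{y}(t^\star)) + \ESPdist(\bm{y}(t^\star), \bm{x}) \leq r_i + \ESPdist(\bm{x}_{E_j}, \bm{x}) + (v_{P_i} - v_{E_j}) t^\star.
\end{equation*}
Using $t^\star \leq \ESPdist(\bm{x}_{E_j},\bm{x})/v_{E_j}$ and $\alpha_{ij}=v_{P_i}/v_{E_j}$, the right-hand side is bounded by $r_i + \alpha_{ij}\ESPdist(\bm{x}_{E_j},\bm{x})$, contradicting the hypothesis \eqref{eq:evader-win-condition}. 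Hence no capture time exists in $[0, \ESPdist(\bm{x}_{E_j},\bm{x})/v_{E_j}]$ and $E_j$ reaches $\bm{x}\in\goal$ safely.

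The main obstacle I anticipate is not any one inequality, but cleanly justifying the two uses of the ESP triangle inequality and the subpath-optimality property in an environment with polygonal obstacles. Both are standard consequences of the definition of ESP as the infimum of lengths over obstacle-free curves, but I would write a short sentence citing \cite{JH-SS:99} (or the ESP definition directly) so that the chain of inequalities is fully rigorous rather than appealing to Euclidean intuition.
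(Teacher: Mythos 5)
Your proposal is correct and follows essentially the same route as the paper: both argue by contradiction that a capture event would let $P_i$ concatenate its own obstacle-free trajectory, the obstacle-free capture segment of length at most $r_i$, and the remaining tail of the evader's ESP into an obstacle-free path to $\bm{x}$ of length strictly less than $\ESPdist(\bm{x}_{P_i},\bm{x})$, contradicting ESP minimality. The only difference is presentational: you package the concatenation as two applications of the ESP triangle inequality plus subpath optimality, whereas the paper constructs the path $\mathcal{P}'$ explicitly and bounds its length via the arrival time, so no gap exists.
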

\begin{proof}
Suppose that at time $t$, the positions of $P_i$ and $E_j$ are $X_{ij}$. The conclusion follows if we can prove that the distance between two players is always greater than $r_i$ when $E_j$ moves along $\ESP(\bm{x}_{E_j}(t), \bm{x})$. It implies that $\| \bm{x}_{P_i}(\tau) - \bm{x}_{E_j}(\tau) \|_2 > r_i$ for all $t \leq \tau \leq t+ \ESPdist(\bm{x}_{E_j}(t), \bm{x}) / v_{E_j} $. For simplicity, we let $\ESPdist^P = \ESPdist(\bm{x}_{P_i}(t), \bm{x})$ and $\ESPdist^E = \ESPdist(\bm{x}_{E_j}(t), \bm{x})$.

Suppose that there exists a pursuit strategy $\bm{u}_{P_i}'$ such that $E_j$ is captured by $P_i$, i.e., $\| \bm{x}_{P_i}(\tau') - \bm{x}_{E_j}(\tau') \|_2 \leq r_i$, at some time $t \leq \tau' \leq \ESPdist^E / v_{E_j} + t$. We can construct an obstacle-free path $\mathcal{P}'$ along which $P_i$ can reach $\bm{x}$: first, move to $\bm{x}_{P_i}(\tau')$ under $\bm{u}_{P_i}'$; then, move to $\bm{x}_{E_j}(\tau')$ directly (recall that this path is obstacle-free due to the visual capture condition); finally, move along part of $\ESP(\bm{x}_{E_j}(t), \bm{x})$ to reach $\bm{x}$ as $\bm{x}_{E_j}(\tau')$ is on the path $\ESP(\bm{x}_{E_j}(t), \bm{x})$. Let $t'$ be the time of $P_i$ reaching $\bm{x}$ along the path $\mathcal{P}'$. Then we have $t' \leq \tau' + (r_i + \ESPdist^E - v_{E_j} (\tau' - t) ) / v_{P_i}$. Thus, the length of $\mathcal{P}'$, denoted by $d(\mathcal{P}')$, satisfies
\begin{equation*}
\begin{aligned}
   &  d(\mathcal{P}') -  \ESPdist^P = v_{P_i} (t' - t) - \ESPdist^P\\
   & \leq v_{P_i}(\tau' - t) + r_i + \ESPdist^E - v_{E_j} (\tau' - t) - \ESPdist^P \\
   & < (1 - 1/\alpha_{ij}) v_{P_i}(\tau' - t) + r_i + (\ESPdist^P - r_i) / \alpha_{ij} - \ESPdist^P \\
   & = (1 - 1/\alpha_{ij}) ( v_{P_i}(\tau' - t) + r_i -  \ESPdist^P )  \\
   & \leq  (1 - 1/\alpha_{ij}) ( v_{P_i} \ESPdist^E / v_{E_j} + r_i -  \ESPdist^P ) < 0,
\end{aligned}
\end{equation*}
where the second and fourth inequalities are due to \eqref{eq:evader-win-condition}, which is a contradiction as  $\ESPdist^P$ is the ESP from $\bm{x}_{P_i}$ and $\bm{x}$.
\end{proof}

As a sufficient and efficient method, we practically check the evasion winning condition \eqref{eq:evader-win-condition} over finitely many critical points in $\goal$, for instance, the vertices of $\goal$.

\subsection{Multiplayer onsite and close-to-goal pursuit strategy}
We next present the \emph{multiplayer onsite and close-to-goal (MOCG) pursuit strategy} in \algoref{alg:multiplayer-strategy}. In each iteration, the strategy first generates a hierarchical task allocation between pursuit coalitions and evaders to maximize the number of defeated evaders (a lower bound). Then, the strategy computes the pursuit strategies that can ensure the lower bound.  The task allocation  is dynamic and will change over time (therefore, the pursuit strategies may also change), if a new task allocation  guaranteeing to defeat more evaders is generated as the game evolves. The hierarchical task allocation involves four matchings which are generated sequentially at each iteration. 

\begin{algorithm}
\caption{MOCG pursuit strategy}
\textbf{Initialize:} $\{ \bm{x}_{P_i}^0\}_{P_i \in \pteam}$, $\{ \bm{x}_{E_j}^0 \}_{E_j \in \eteam}$

\begin{algorithmic}[1]
\State $\bm{x}_{P_i} \leftarrow \bm{x}_{P_i}^0$, $\bm{x}_{E_j} \leftarrow \bm{x}_{E_j}^0$ for all $P_i \in \pteam$ and $E_j \in \eteam$
\State $\mathcal{V}_P \leftarrow [\pteam]^2$, $\mathcal{V}_E \leftarrow \eteam$
\State $\eteam_{\textup{capture}} \leftarrow \emptyset$, $M_{\textup{defeat}} \leftarrow \emptyset$
\Repeat
\For{$P_i \in \pteam$, $E_j \in \mathcal{V}_E$}
\State $T_{ij} \leftarrow \text{Check\_pursuit\_winning}(X_{ij})$
\State Add $e_{ij}$ to $\mathcal{E}$ if $T_{ij} \ge 1$
\EndFor
\For{$P_c \in \mathcal{V}_P$, $E_j \in \mathcal{V}_E$}
\If{$|c| = 2$ and $e_{ij} \notin \mathcal{E}$ for all $i \in c$}
\State $T_{cj} \leftarrow \text{Check\_pursuit\_winning}(X_{cj})$
\State Add $e_{cj}$ to $\mathcal{E}$ if $T_{cj} = 2$
\EndIf
\EndFor
\State $M^{\star}_1 \leftarrow $ solve the BIP \eqref{eq:BIP-captured-evaders} for graph $(\mathcal{V}_P \cup \mathcal{V}_E, \mathcal{E})$
\State $M_{\textup{defeat}} \leftarrow M^{\star}_1$ if $|M^{\star}_1| > M_{\textup{defeat}}$
\For{$e_{cj} \in M_{\textup{defeat}}$}
\State $\bm{u}_{P_i} \leftarrow$ \lemaref{lema:onsite-pursuit-strategy-1v1} agst. $E_j$ if $T_{ij} = 1$ ($c=\{i\}$)
\State $\bm{u}_{P_i} \leftarrow$ \eqref{eq:goal-visible-pursuit-strategy} agst. $E_j$ if $T_{cj} = 2$ for all $i \in c$
\State $\bm{u}_{P_i} \leftarrow$ \eqref{eq:non-goal-visible-pursuit-strategy} agst. $E_j$ if $T_{ij} = 3$ ($c=\{i\}$)
\EndFor
\For{unassigned $P_i \in \pteam$}
\State $\bm{u}_{P_i} \leftarrow$ \lemaref{lema:onsite-pursuit-strategy-1v1} agst. an $E_j \in \mathcal{V}_E$ if $T_{ij} = 1$ 
\EndFor
\State $\bar{\mathcal{V}}_P, \bar{\mathcal{V}}_E \leftarrow $ all unmatched $P_i \in \pteam$ and $E_j \in \mathcal{V}_E$
\For{$P_i \in \bar{\mathcal{V}}_P $, $E_j \in \bar{\mathcal{V}}_E$}
\State Add $e_{ij}$ to $\bar{\mathcal{E}}$ if \eqref{eq:evader-win-condition} does not hold
\EndFor
\State $M^{\star}_2 \leftarrow $ a maximum matching for graph $ (\bar{\mathcal{V}}_P \cup \bar{\mathcal{V}}_E, \bar{\mathcal{E}})$
\State $\bm{u}_{P_i} \leftarrow$ \defiref{defi:ESP-based-pure-pursuit} agst. $E_j$ for all $e_{ij} \in M^{\star}_2$
\For{unassigned $P_i \in \pteam$}
\State $\bm{u}_{P_i} \leftarrow$ \defiref{defi:ESP-based-pure-pursuit} agst. closest $E_j$
\EndFor
\State Adopt any strategy for $E_j \in \mathcal{V}_E$
\State Update $\bm{x}_{P_i}, \bm{x}_{E_j}$ with a time step $\Delta$
\State $\eteam_{\textup{capture}} \leftarrow \eteam_{\textup{capture}} \cup \{ \textup{captured evader(s) in } \mathcal{V}_E \}$
\State Remove captured/arrival evaders from $\mathcal{V}_E$ and $M_{\textup{defeat}}$
\Until{$\mathcal{V}_E = \emptyset $}
\end{algorithmic}
\label{alg:multiplayer-strategy}
\end{algorithm}

\begin{algorithm}
\caption{$\text{Check\_pursuit\_winning}(X_{cj})$ }
\begin{algorithmic}[1]
\State $T_{cj} \leftarrow 0$
\If{$X_{cj}$ satisfies \eqref{eq:onsite-winning-condition-1v1} for all $i \in c$}
\State $T_{cj} \leftarrow 1$ \Comment{Onsite pursuit winning}
\ElsIf{$X_{cj}$ meets  conditions in \thomref{thom:goal-visible-winning}}
\State $T_{cj} \leftarrow 2$ \Comment{Visible-goal pursuit winning}
\ElsIf{$|c|=1$ \& $X_{cj}$ meets conditions in \thomref{thom:non-goal-pursuit-wining}}
\State $T_{cj} \leftarrow 3$ \Comment{Non-visible-goal pursuit winning}
\EndIf
\State \Return$T_{cj}$
\end{algorithmic}
\label{alg:check_pursuit_winning}
\end{algorithm}

\startpara{Capture matching} 
Firstly, we generate a capture matching in which each matched evader can be defeated by the assigned pursuer(s). By Remarks \ref{rek:onsite-winning} and \ref{rek:coalition-reduction} and \thomref{thom:non-goal-pursuit-wining}, the capture matching is simplified as we only need to consider all pursuit coalitions of size no more than two. Let $\mathcal{G} = (\mathcal{V}_P \cup \mathcal{V}_E, \mathcal{E})$ be an undirected bipartite graph with two vertex sets $\mathcal{V}_P$ and $\mathcal{V}_E$, and a set of edges $\mathcal{E}$. In our problem, $\mathcal{V}_P$ is the set of all nonempty pursuit coalitions of size less than or equal to two, and $\mathcal{V}_E$ the set of evaders. The edge connecting vertex $P_c \in \mathcal{V}_P$ and vertex $E_j \in \mathcal{V}_E$ is denoted by $e_{cj}$. For a state $X_{cj}$, we check the pursuit winning via $\text{Check\_pursuit\_winning}(X_{cj})$ (\algoref{alg:check_pursuit_winning}). We add $e_{cj}$ into $\mathcal{E}$ if and only if the returned value satisfies $T_{cj} \ge 1$, that is, $P_c$ guarantees to win against $E_j$ through the onsite ($T_{cj} = 1$), goal-visible ($T_{cj} = 2$) or non-goal-visible ($T_{cj} = 3$) pursuit winnings. Let $\mathcal{C} = ( \mathcal{E}, \bar{\mathcal{E}})$ be a conflict graph of $\mathcal{G}$, where each vertex in $\mathcal{C}$ corresponds uniquely to an edge in $\mathcal{G}$. An edge $\bar{e} \in \bar{\mathcal{E}}$ if and only if two vertexes (two edges in $\mathcal{G}$) connected by $\bar{e}$ involve at least one common pursuer.

The problem of maximizing the number of defeated evaders can be formulated as a binary integer program (BIP):
\begin{equation}\label{eq:BIP-captured-evaders}
\begin{aligned}
    & \underset{a_{cj}, a_{pq} \in \{ 0, 1\} }{\textup{maximize}}
	&& \sum\nolimits_{e_{cj} \in \mathcal{E} } a_{cj} \\
	&\textup{ subject to}&& \sum\nolimits_{P_c \in \mathcal{V}_P } a_{cj} \leq 1, \quad \forall E_j \in \mathcal{V}_E \\
	& & & \sum\nolimits_{E_j \in \mathcal{V}_E} a_{cj} \leq 1, \quad \forall P_c \in \mathcal{V}_P \\
	& & & a_{cj} + a_{pq} \leq 1, \quad \forall (e_{cj}, e_{pq}) \in \bar{\mathcal{E}}
\end{aligned}
\end{equation}
where $a_{cj} = 1$ means the allocation of $P_c$ to capture  $E_j$, and $a_{cj} = 0$ means no assignment. Let $\bm{a}^{\star}$ be an optimal solution to \eqref{eq:BIP-captured-evaders}. Then, we obtain a maximum matching $M^{\star}_1 \subset \mathcal{E}$ of $\mathcal{G}$ subject to the constraint $\mathcal{C}$, where $e_{cj} \in M^{\star}_1$ if and only if $a^{\star}_{cj} = 1$. For each assignment $(P_c, E_j) \in M^{\star}_1$, the pursuer(s) in $P_c$ adopts the strategy in \lemaref{lema:onsite-winning-nv1}, Theorems \ref{thom:goal-visible-winning} or \ref{thom:non-goal-pursuit-wining} depending on the value of $T_{cj}$ to capture $E_j$ (lines 5-17).

\startpara{Enhanced matching} We then generate an enhanced matching to improve the performance of the onsite matchings in $M^{\star}_1$. As \rekref{rek:onsite-winning} indicates that, the more pursuers that can ensure the onsite pursuit winning against an evader are tasked to the evader, the smaller {or equal} region in which the evader will be captured. Therefore, we assign each unassigned pursuer to an evader it can win against via the onsite winning (lines 18-19).

\startpara{Non-dominated matching} We next generate a non-dominated matching in which no pursuer is assigned to an evader that it cannot win against. For the remaining unmatched pursuers $\bar{\mathcal{V}}_P$ and evaders $\bar{\mathcal{V}}_E$, let $\bar{\mathcal{G}} = (\bar{\mathcal{V}}_P \cup \bar{\mathcal{V}}_E, \bar{\mathcal{E}})$ be an undirected bipartite graph, where $e_{ij} \in \bar{\mathcal{E}}$ if and only if $E_j$ cannot ensure the evasion winning in \lemaref{lema:evasion-winning}. A non-dominated matching $M^{\star}_2$ is then a maximum bipartite matching of $\bar{\mathcal{G}}$ which can be computed in polynomial time (e.g., via maximum network flow \cite{LRFJ-DRF:15}). Each pursuer adopts the following ESP-pure strategy to pursue the matched evader (lines 20-24).

\begin{defi}[ESP-pure pursuit strategy]\label{defi:ESP-based-pure-pursuit}
    For the positions $X_{ij}$ of $P_i$ and $E_j$, if $\bm{u}_{P_i}$ is along the ESP $\ESP(\bm{x}_{P_i}, \bm{x}_{E_j})$ at $\bm{x}_{P_i}$, then $\bm{u}_{P_i}$ is called the \emph{ESP-pure pursuit strategy}.
\end{defi}

\startpara{Closest matching} We finally generate a closest matching for the remaining unassigned pursuers. If these pursuers exist, then it means that either 1) all evaders have been matched or 2) the remaining evaders can win against them. Therefore, a closest matching is as follows: each remaining unassigned pursuer pursues the closest unmatched evader if present and the closest evader otherwise, via the ESP-pure pursuit strategy (lines 25-26).


We have the following guarantee on the number of defeated evaders with the MOCG pursuit strategy.

\begin{thom}[Increasing lower bound for guaranteed defeated evaders]
Using the MOCG pursuit strategy, the pursuit team $\pteam$ guarantees to win against at least $| \eteam_{\textup{capture}} | + | M_{\textup{defeat}} |$ evaders simultaneously at each iteration, regardless of the evasion team $\eteam$'s strategy. Moreover, $| \eteam_{\textup{capture}} | + | M_{\textup{defeat}} |$ is increasing with iterations.
\end{thom}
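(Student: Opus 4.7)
The plan is to split the claim into a per-iteration lower bound and a monotonicity statement on that bound. For the per-iteration guarantee, I would note that $|\eteam_{\textup{capture}}|$ trivially counts evaders captured in earlier iterations (line 31 of \algoref{alg:multiplayer-strategy}), so the work is to show that each edge $e_{cj}\in M_{\textup{defeat}}$ yields a guaranteed defeat of $E_j$ by $P_c$. By construction, $e_{cj}$ is added to $\mathcal{E}$ only when $\text{Check\_pursuit\_winning}(X_{cj})$ returns $T_{cj}\in\{1,2,3\}$ (lines 5--14), and each case delegates to one of \lemaref{lema:onsite-winning-nv1}, \thomref{thom:goal-visible-winning}, or \thomref{thom:non-goal-pursuit-wining}, all of which supply a strategy whose winning guarantee is invariant under the evader's strategy. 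The conflict set $\bar{\mathcal{E}}$ in the BIP \eqref{eq:BIP-captured-evaders} forces the coalitions in any feasible solution to be pairwise pursuer-disjoint, and each winning strategy depends only on the positions of its own coalition and of the assigned evader, so the $|M_{\textup{defeat}}|$ strategies can be executed simultaneously without mutual interference.

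For the monotonicity, $|\eteam_{\textup{capture}}|$ is non-decreasing by definition. Between consecutive overwrites of $M_{\textup{defeat}}$, the only way an evader leaves $M_{\textup{defeat}}$ (line 32) is via capture or goal arrival, but the pursuit winning definition used throughout rules out goal arrival for matched pairs: the onsite case captures $E_j$ inside $\bigcap_{i\in c}\mathbb{A}_{\delta,i}$, which is disjoint from $\goal$ by \eqref{eq:onsite-winning-condition-1v1}, and both close-to-goal cases keep the safe distance non-negative forever via Theorems~\ref{thom:goal-visible-winning} and \ref{thom:non-goal-pursuit-wining}. Hence every removal from $M_{\textup{defeat}}$ is mirrored by an addition to $\eteam_{\textup{capture}}$, leaving the sum invariant. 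Finally, the update on line 15 replaces $M_{\textup{defeat}}$ only when $|M^{\star}_1|>|M_{\textup{defeat}}|$, which produces a strict increase of the sum at every overwrite.

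The step I expect to be the main obstacle is the consistency argument at an overwrite on line 15: when $M_{\textup{defeat}}$ is replaced by $M^{\star}_1$, a pursuer may switch targets and an evader previously in $M_{\textup{defeat}}$ may be dropped without having been captured. The cleanest way to handle this is to emphasize that $M^{\star}_1$ is recomputed from the \emph{current} game state via the winning checks, so every assignment in the new matching inherits a valid strategy starting at the present instant independent of earlier commitments; the strict inequality $|M^{\star}_1|>|M_{\textup{defeat}}|$ then ensures the replacement strictly increases the sum even after accounting for any abandoned pairs. Combining the static bound with the invariance under captures and the strict increase under overwrites then yields the non-decreasing behaviour of $|\eteam_{\textup{capture}}|+|M_{\textup{defeat}}|$ claimed in the theorem.
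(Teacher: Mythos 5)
Your proposal is correct and follows essentially the same route as the paper, whose own proof is a one-line appeal to Theorems~\ref{thom:onsite-pursuit-winning}, \ref{thom:goal-visible-winning} and \ref{thom:non-goal-pursuit-wining} plus the observation that the matching is only replaced when more evaders can be defeated. Your version simply fills in the bookkeeping the paper leaves implicit (pursuer-disjointness via the conflict graph, invariance of the sum under capture, and the exclusion of goal arrival for matched evaders), all of which is sound.
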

\begin{proof}
    The conclusion directly follows from Theorems \ref{thom:onsite-pursuit-winning}, \ref{thom:goal-visible-winning} and \ref{thom:non-goal-pursuit-wining}, and the fact that the matching will change only if more evaders can be defeated in the new matching.
\end{proof}

\section{Simulations} \label{sec:simulation}
\def\height{0.23}

\begin{figure*}
    \centering
    \includegraphics[width=0.32\hsize,height=\height\hsize]{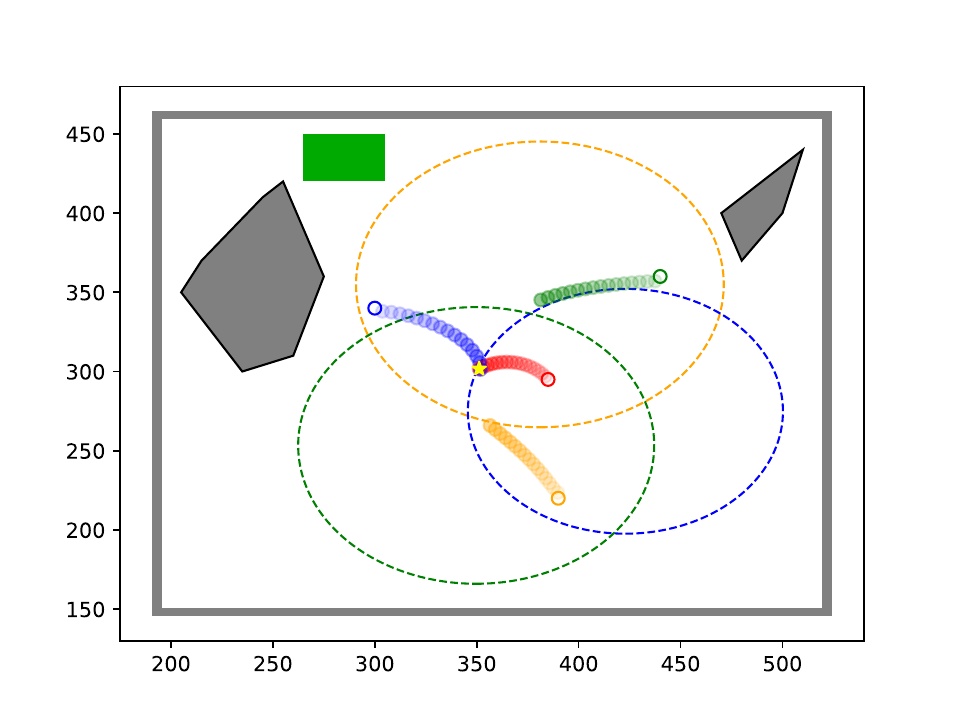}
    \put(-90,-6){\scriptsize$(1a)$}
    \put(-112,77){\scriptsize$P_1$}
    \put(-45,84){\scriptsize$P_2$}
    \put(-64,26){\scriptsize$P_3$}
    \put(-67,58){\scriptsize$E_1$}
    \includegraphics[width=0.32\hsize,height=\height\hsize]{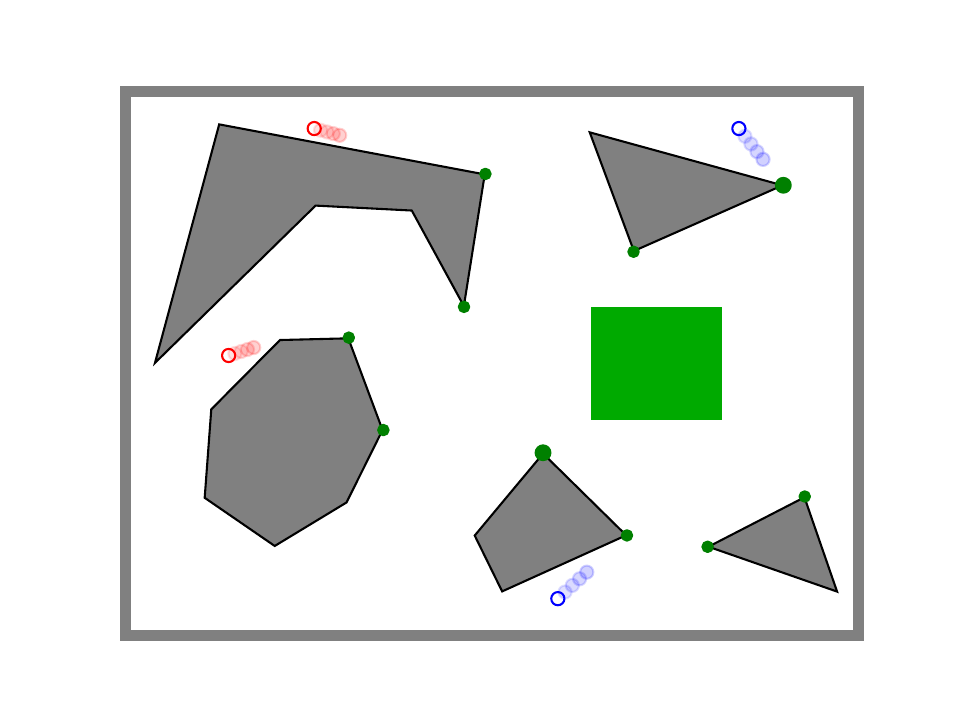}
    \put(-90,-6){\scriptsize$(3a)$} 
    \put(-78,6){\scriptsize$P_1$}
    \put(-40,109){\scriptsize$P_2$}
    \put(-133,110){\scriptsize$E_2$}
    \put(-151,57){\scriptsize$E_1$}
    \includegraphics[width=0.32\hsize,height=\height\hsize]
    {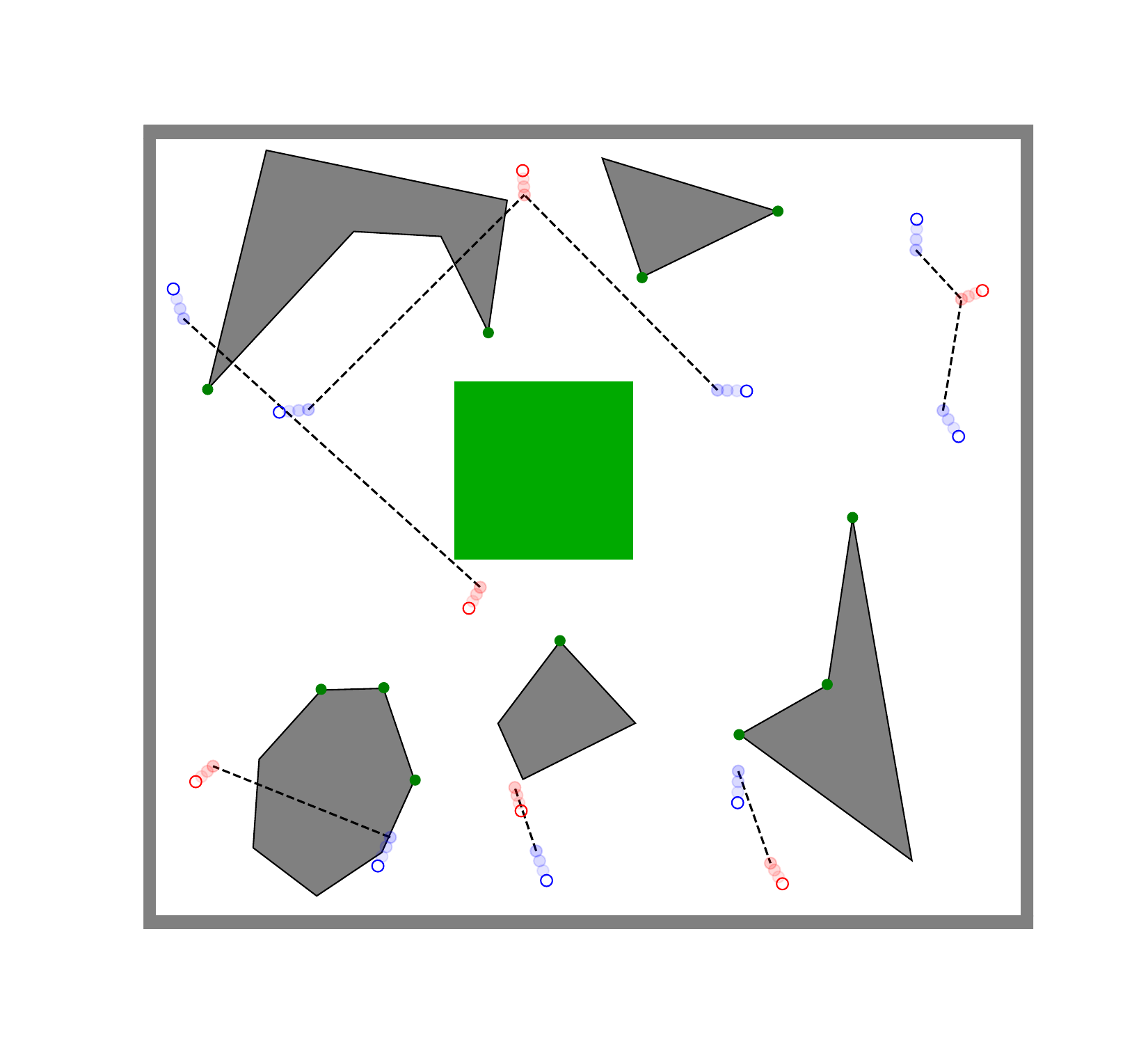}
    \put(-90,-6){\scriptsize$(4a)$} 
    \put(-160,28){\scriptsize$E_1$}
    \put(-107,15){\scriptsize$E_2$}
    \put(-116,44){\scriptsize$E_3$}
    \put(-44,6){\scriptsize$E_4$}
    \put(-14,98){\scriptsize$E_5$}
    \put(-92,109){\scriptsize$E_6$}
    \put(-120,6){\scriptsize$P_1$}
    \put(-88,5){\scriptsize$P_2$}
    \put(-66,16){\scriptsize$P_3$}
    \put(-23,67){\scriptsize$P_4$}
    \put(-33,104){\scriptsize$P_5$}
    \put(-51,76){\scriptsize$P_6$}
    \put(-144,69){\scriptsize$P_7$}
    \put(-160,97){\scriptsize$P_8$}
    
    \includegraphics[width=0.32\hsize,height=\height\hsize]{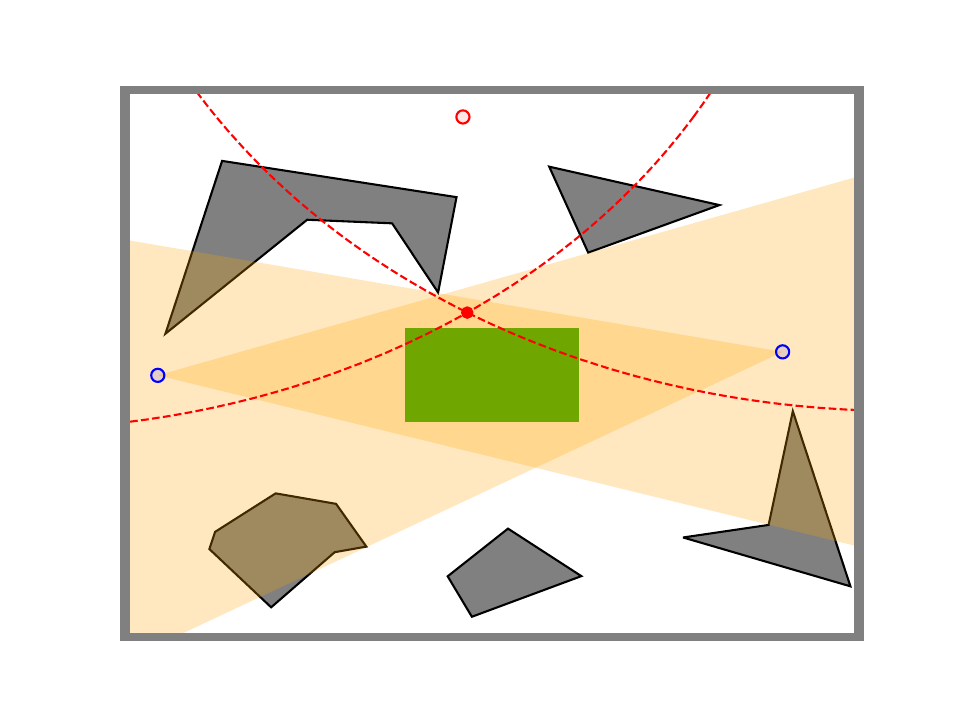}
    \put(-90,-6){\scriptsize$(2a)$} 
    \put(-161,61){\scriptsize$P_1$}
    \put(-20,66){\scriptsize$P_2$}
    \put(-86,109){\scriptsize$E_1$}
    \includegraphics[width=0.32\hsize,height=\height\hsize]{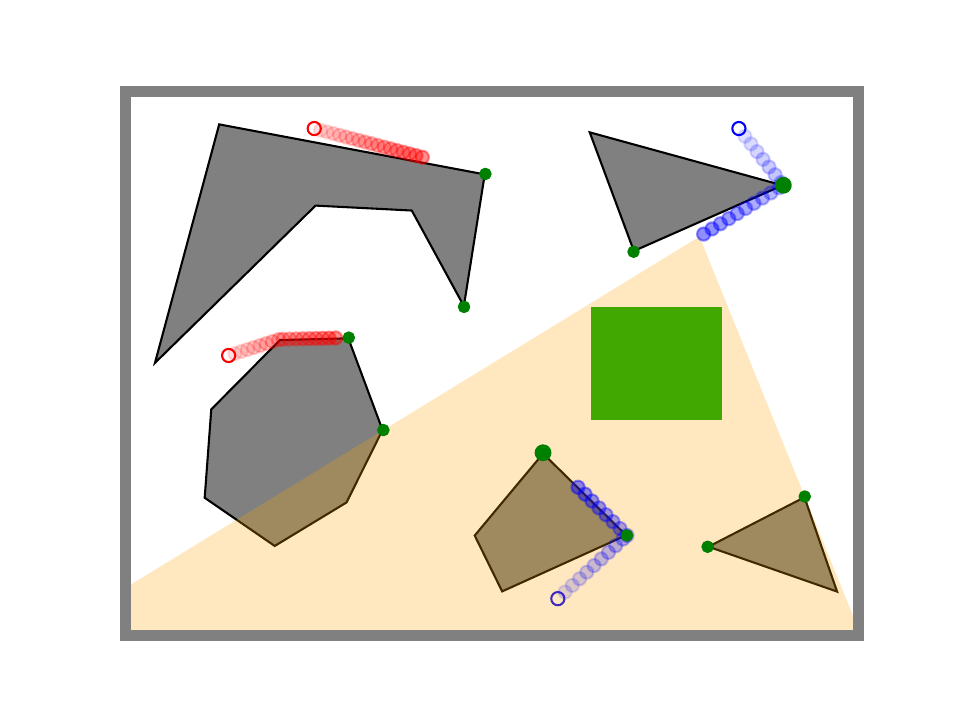}
    \put(-90,-6){\scriptsize$(3b)$} 
    \includegraphics[width=0.32\hsize,height=\height\hsize]{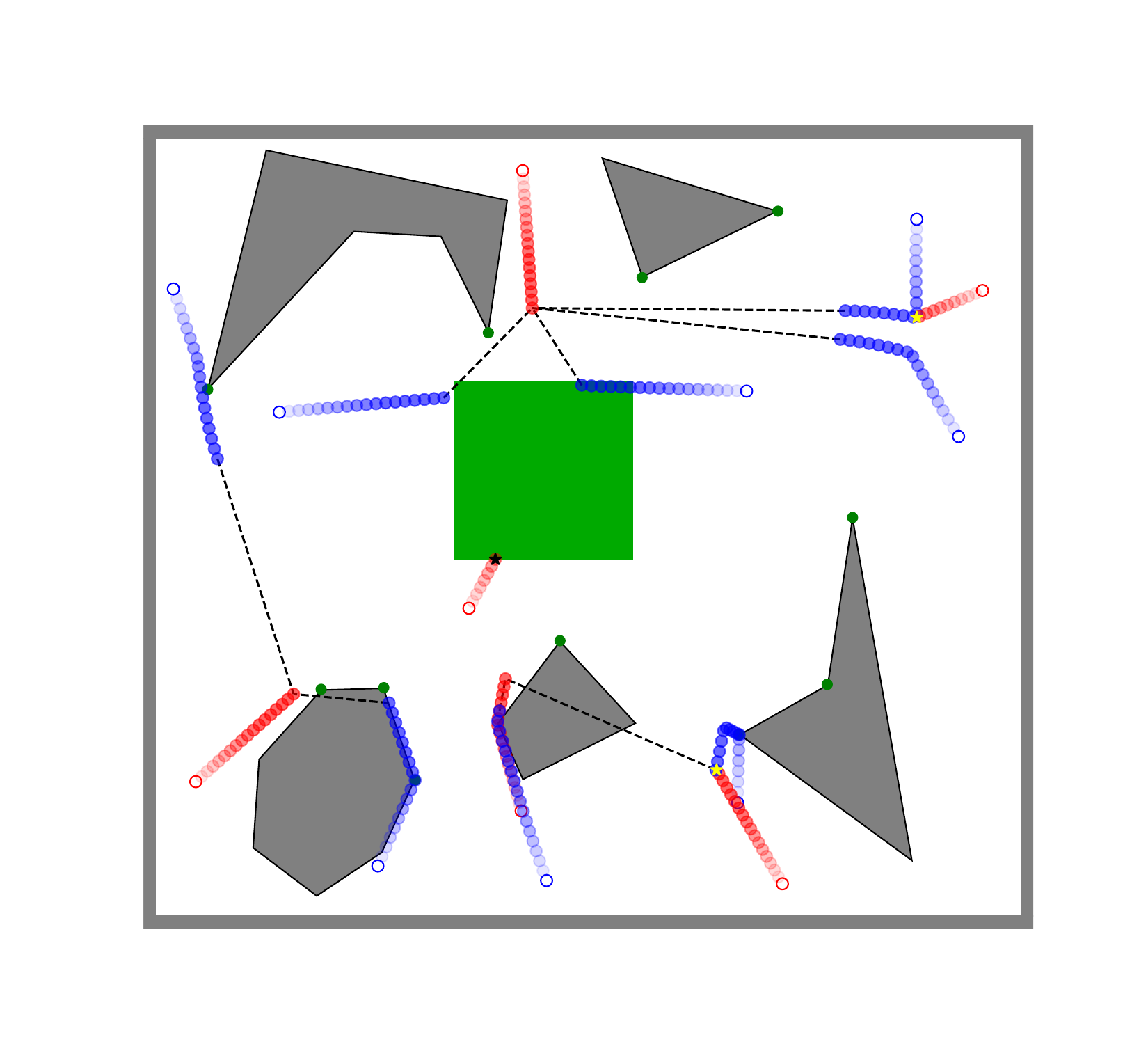}
    \put(-90,-6){\scriptsize$(4b)$} 
    
    \includegraphics[width=0.32\hsize,height=\height\hsize]
    {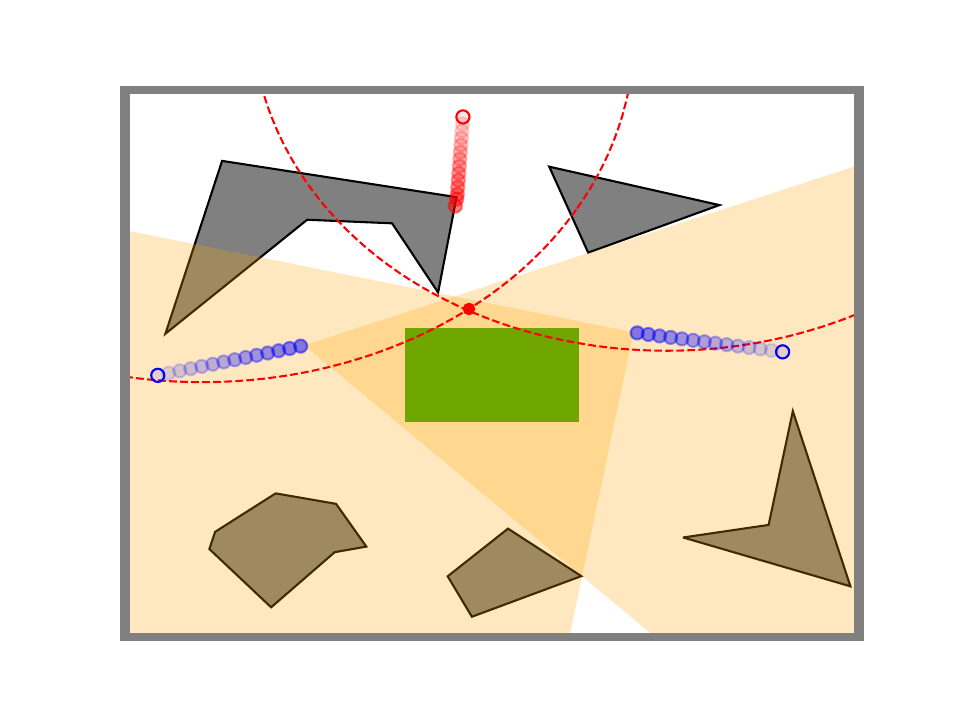}
    \put(-90,-6){\scriptsize$(2b)$} 
    \includegraphics[width=0.32\hsize,height=\height\hsize]{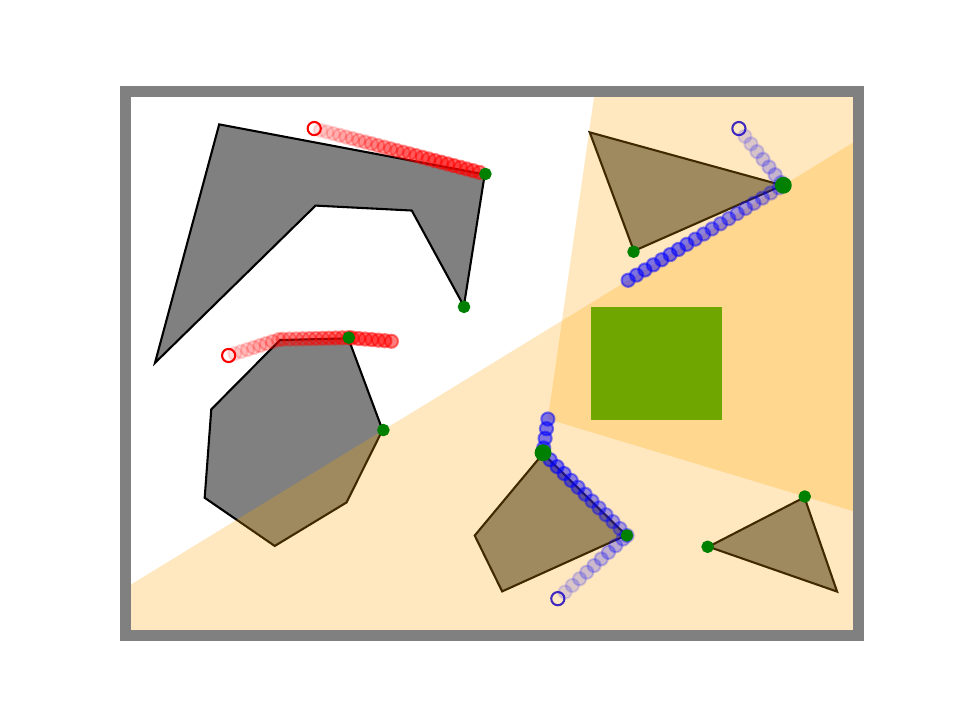}
    \put(-90,-6){\scriptsize$(3c)$} 
    \includegraphics[width=0.32\hsize,height=\height\hsize]{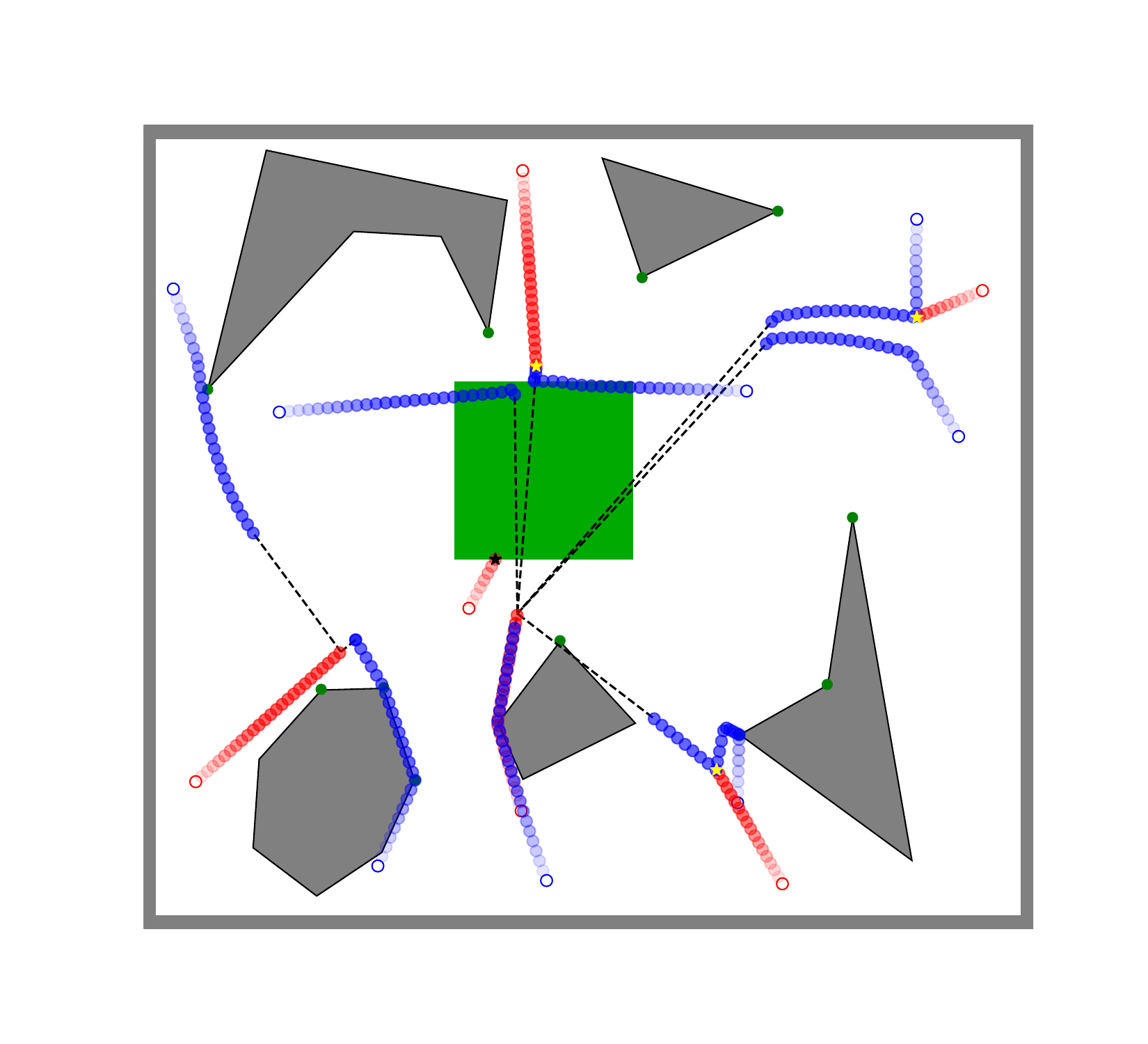}
    \put(-90,-6){\scriptsize$(4c)$}
    
    \includegraphics[width=0.32\hsize,height=\height\hsize]{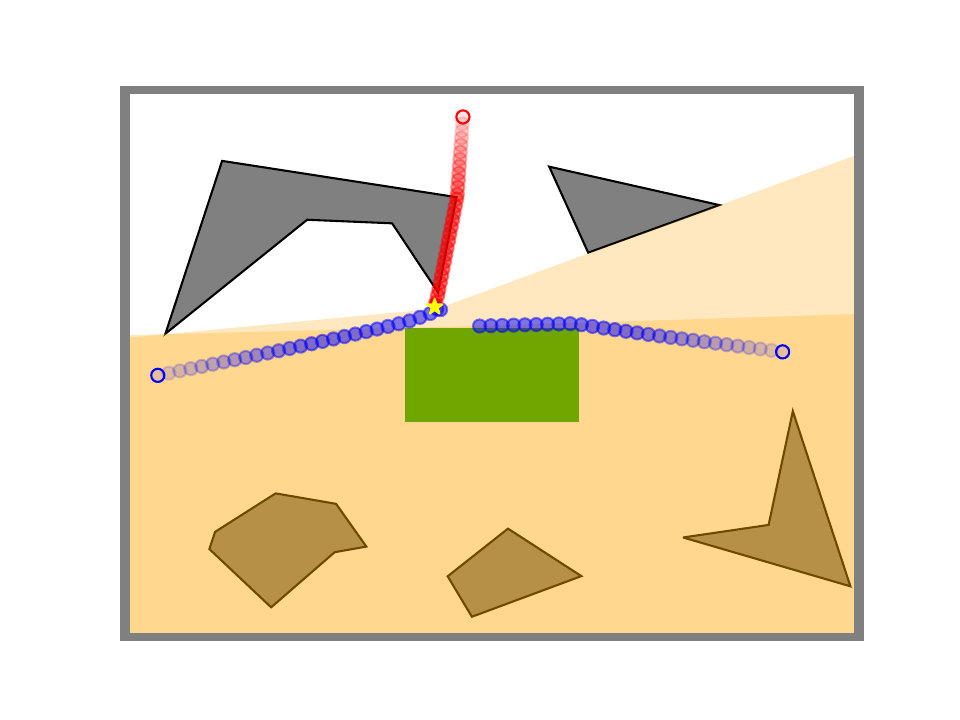}
    \put(-90,-6){\scriptsize$(2c)$} 
    \includegraphics[width=0.32\hsize,height=\height\hsize]
    {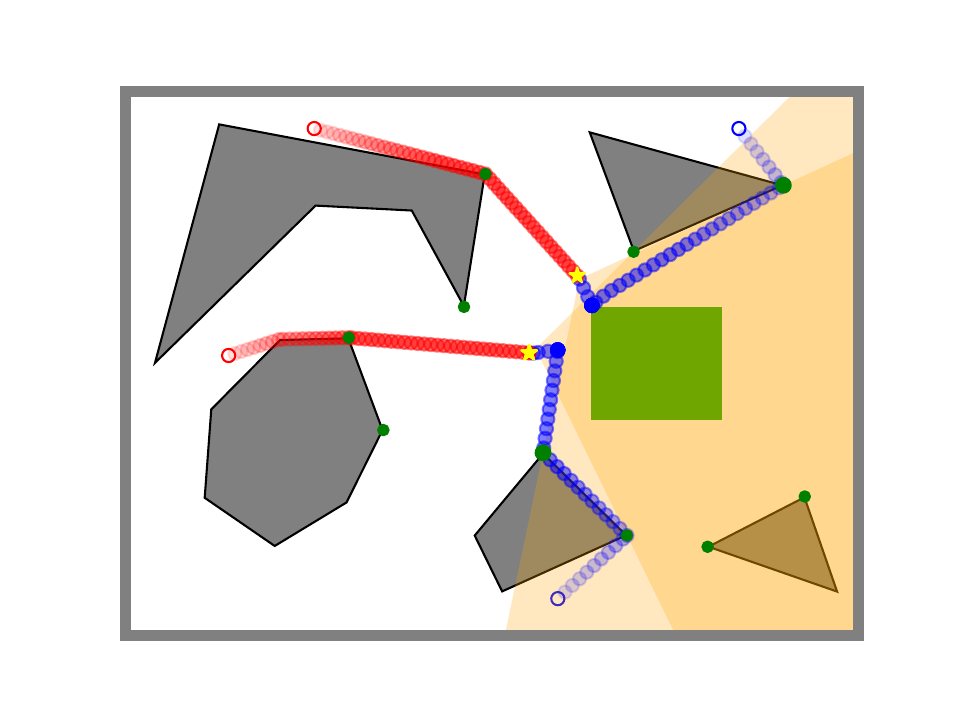}
    \put(-90,-6){\scriptsize$(3d)$} 
    \includegraphics[width=0.32\hsize,height=\height\hsize]{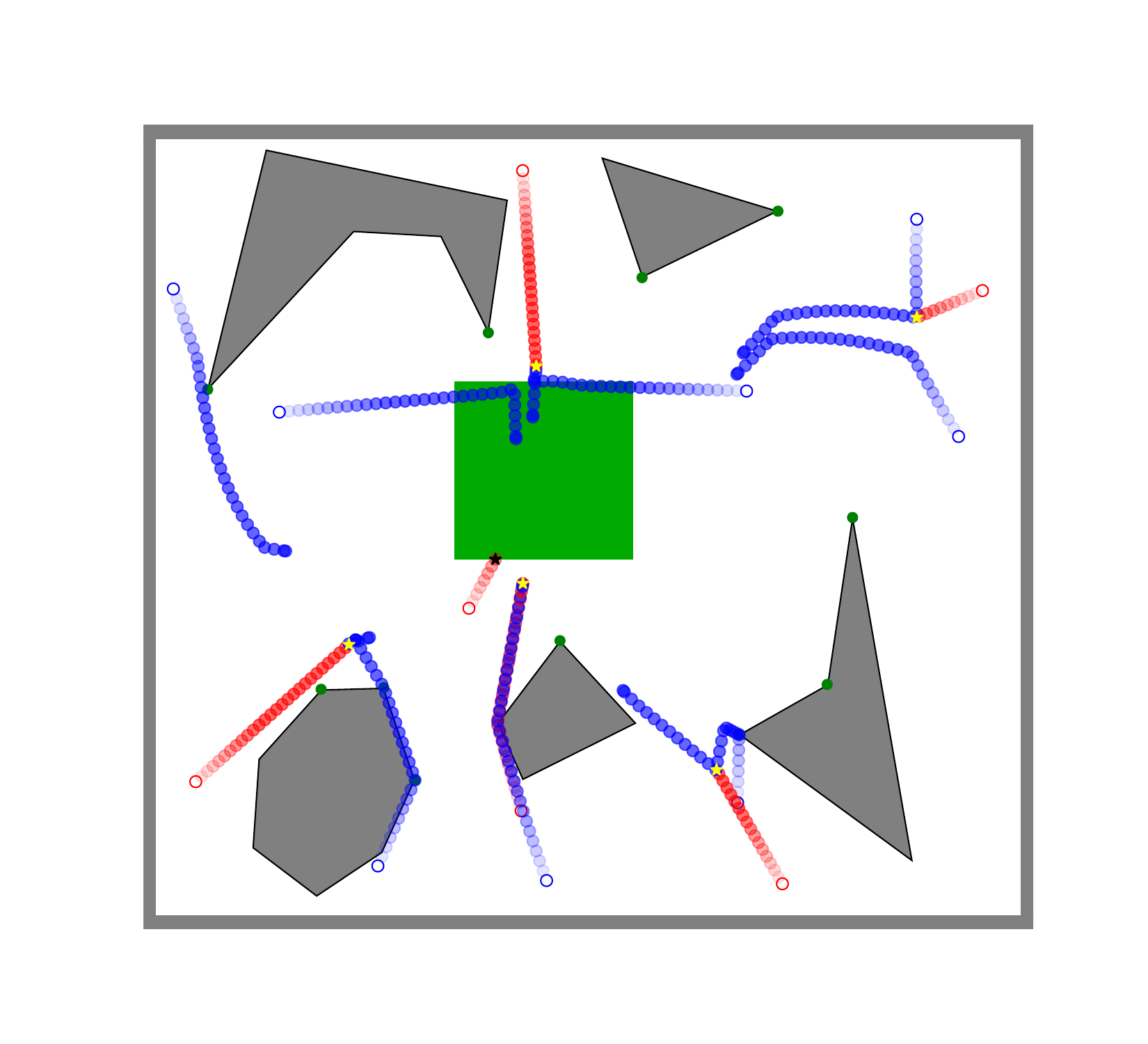}
    \put(-90,-6){\scriptsize$(4d)$} 
    \caption{Four simulations with the MOCG pursuit strategy. $(1a)$ Onsite pursuit winning for three pursuers and one evader. $(2a)$-$(2c)$ Goal-visible pursuit winning for two pursuers and one evader. $(3a)$-$(3d)$ Non-goal-visible pursuit winning for two pursuers and two evaders. $(4a)$-$(4d)$ Three pursuit winnings for eight pursuers and six evaders.}
    \label{fig:simulation}
\end{figure*}

In order to illustrate the MOCG pursuit strategy, we run the multiplayer reach-avoid differential games in various scenarios with different obstacle shapes and distributions, team sizes and initial configurations. We use the first-visible obstacle vertices to construct convex GCPs (see \thomref{thom:construct-convex-GCP}) if necessary. In all scenarios, the strategies of the evaders are generated randomly which are unknown to the pursuers. The games in all scenarios take less than one minute as the pursuit winning regions and strategies have closed forms or require solving simple convex optimization problems.

\startpara{Case 1: onsite pursuit winning} Consider three pursuers $P_1$, $P_2$, $P_3$ and one evader $E_1$ in Fig.~\ref{fig:simulation}(1a). The initial positions of each pursuer and $E_1$ satisfy \eqref{eq:onsite-winning-condition-1v1}, as they are close and there is no obstacle nearby. Thus, by \thomref{thom:onsite-pursuit-winning}, each pursuer can ensure the onsite pursuit winning against $E_1$ individually via the strategy in \lemaref{lema:onsite-pursuit-strategy-1v1}, i.e., capture $E_1$ in the corresponding dotted expanded Apollonius circle in a finite time, regardless of $E_1$'s strategy. Furthermore, \lemaref{lema:onsite-winning-nv1} demonstrates that $E_1$ will be captured in the intersection of three regions bounded by the expanded Apollonius circles. In this case, $P_1$ captures $E_1$ at the yellow star.

\startpara{Case 2: goal-visible pursuit winning} Consider two pursuers $P_1$, $P_2$ and one evader $E_1$ in Fig.~\ref{fig:simulation}(2a). Their initial positions do not satisfy \eqref{eq:onsite-winning-condition-1v1} but conditions in \thomref{thom:goal-visible-winning}. In Fig.~\ref{fig:simulation}(2a), two pursuers are goal-visible with the orange direction ranges. The safe distance is positive as the red closest point in the evasion region to the goal region is outside the goal region. Thus, two pursuers can ensure the goal-visible pursuit winning via the strategy \eqref{eq:goal-visible-pursuit-strategy}, regardless of $E_1$'s strategy. In the snapshot of Fig.~\ref{fig:simulation}(2b), two pursuers are heading towards the closest point as this point is in their direction ranges. The snapshot of Fig.~\ref{fig:simulation}(2c) shows that $P_1$ captures $E_1$ at the yellow star.

\startpara{Case 3: non-goal-visible pursuit winning} Consider two pur- suers $P_1$, $P_2$ and two evaders $E_1$, $E_2$ in Fig.~\ref{fig:simulation}(3a). Their initial positions do not satisfy \eqref{eq:onsite-winning-condition-1v1} and conditions in \thomref{thom:goal-visible-winning}. In Fig.~\ref{fig:simulation}(3a), two pursuers are not goal-visible and all goal-visible obstacle vertices are in green. $P_1$ and $E_1$ (also $P_2$ and $E_2$) meet the conditions in \thomref{thom:non-goal-pursuit-wining} initially. Thus, $P_1$ (resp., $P_2$) ensures the non-goal-visible pursuit winning against $E_1$ (resp., $E_2$) via the strategy \eqref{eq:non-goal-visible-pursuit-strategy}, despite $E_1$'s (resp., $E_2$'s) strategy. The goal-visible obstacle vertices that two pursuers are heading for are marked as larger green vertices. In the snapshot of Fig.~\ref{fig:simulation}(3b), $P_2$ has passed the vertex and become goal-visible with the orange direction range, while $P_1$ has not yet. In the snapshot of Fig.~\ref{fig:simulation}(3c), $P_1$ and $P_2$ are both goal-visible. In the snapshot of Fig.~\ref{fig:simulation}(3d), two pursuers capture the evaders using the onsite pursuit winning strategy as in the final period, they are close with no {obstacles} nearby.

\startpara{Case 4: Three pursuit winnings} Consider eight pursuers $P_i$ ($i=1,\dots,8$) and six evaders $E_j$ ($j=1,\dots,6$) in Fig.~\ref{fig:simulation}(4a). The MOCG pursuit strategy initially generates six matchings indicated by dotted lines: two two-to-one and four one-to-one matching pairs. The capture matchings include: 1) $P_4$ ensures the onsite pursuit winning against $E_5$; 2) $P_6$ and $P_7$ ensure the goal-visible pursuit winning against $E_6$; 3) $P_1$ (resp., $P_3$) ensures the non-goal-visible winning against $E_1$ (resp., $E_4$). The enhanced matching is $P_5$ against $E_5$ via the onsite pursuit winning. There is one closest matching between $P_8$ and $E_3$. Therefore, the MOCG pursuit strategy can guarantee to win against four evaders initially. In the snapshot of Fig.~\ref{fig:simulation}(4b), $E_3$ is able to reach the goal region, while $E_4$ and $E_5$ have been captured respectively by $P_3$ and $P_5$ who continue to capture other evaders. In the snapshot of Fig.~\ref{fig:simulation}(4c), the pursuers still guarantee to defeat four evaders where three have been captured and one (i.e., $E_1$) is being pursued. In the snapshot of Fig.~\ref{fig:simulation}(4d), the pursuers are able to capture five evaders, which demonstrates that the MOCG pursuit strategy can achieve an increasing number of defeated evaders as the game evolves. 

\section{Conclusion} \label{sec:conclusion}

We presented the MOCG pursuit strategy for multiplayer reach-avoid differential games in polygonal environments with general polygonal obstacles. This strategy provides a lower bound on the number of defeated evaders and continually improve the lower bound if a task allocation that can win against more evaders {is found} as the game evolves.  This strategy does not require the state space discretization {as} in many HJ-based approaches and is computationally efficient. This is because all pursuit winning regions and strategies involved either have closed forms or are computed by solving simple convex optimization problems. The three proposed winnings cover three common scenarios. The onsite pursuit winning corresponds to the scenario where the pursuers are close to the evader with no obstacles nearby. The goal-visible case for the close-to-goal pursuit winning corresponds to the scenario where the pursuers can visibly see the whole goal region, while the non-goal-visible case corresponds to the scenario where the visibility fails. The three main employed concepts, expanded Apollonius circles, convex GCPs and ESPs, show that computational geometry methods are powerful in solving games with obstacles. The hierarchical task assignment prioritizes the number of defeated evaders and is complete as it generates the tasks for all pursuers. Future work will involve distributed games, games with limited visible areas and more complicated reconnaissance games with obstacles \cite{YL-EB:23}.

\bibliographystyle{plainurl} 
\bibliography{reference}

\vfill

\end{document}